\newcommand*{\addFileDependency}[1]{% argument=file name and extension
  \typeout{(#1)}
  \@addtofilelist{#1}
  \IfFileExists{#1}{}{\typeout{No file #1.}}
}
\newcommand{\indep}{\rotatebox[origin=c]{90}{$\models$}}
\newcommand{\nindep}{\not\!\perp\!\!\!\perp}
\DeclareMathOperator{\EX}{\mathbb{E}}
\DeclareMathOperator{\G}{\mathcal{G}}
\DeclareMathOperator{\N}{\mathcal{N}}
\newcommand{\R}{\mathbb{R}}
\newcommand{\Prob}{\mathbb{P}}
\DeclareMathOperator*{\argmin}{arg\,min}
\newtheorem{lemma}{Lemma}
\newtheorem{corollary}{Corollary}
\newtheorem{theorem}{Theorem}
\newtheorem{assumptions}{}
\newtheorem{example}{Example}[section]
\newtheorem{assumpone}{}
\newtheorem{assumptwo}{}
\title{Anchored Causal Inference \\ in the Presence of Measurement Error}
\author{%
  Basil Saeed\\
  Lab for Information \& Decision Systems\\
  Massachusetts Institute of Technology\\
  Cambridge, MA 02139\\
  \texttt{bsaeed@mit.edu} \\
  %% examples of more authors
   \And
  Anastasiya Belyaeva\\
  Lab for Information \& Decision Systems\\
  Massachusetts Institute of Technology\\
  Cambridge, MA 02139\\
  \texttt{belyaeva@mit.edu} \\
   \AND
  Yuhao Wang\\
  Lab for Information \& Decision Systems\\
  Massachusetts Institute of Technology\\
  Cambridge, MA 02139\\
  \texttt{yuhaow@mit.edu} \\
   \And
  Caroline Uhler\\
  Lab for Information \& Decision Systems\\
  Massachusetts Institute of Technology\\
  Cambridge, MA 02139\\
  \texttt{cuhler@mit.edu} \\
}
\begin{document}
\maketitle

\begin{abstract}
We consider the problem of learning a causal graph in the presence of measurement error. This setting is for example common in genomics, where  gene expression is corrupted through the measurement process. We develop a provably consistent procedure for estimating the causal structure in a linear Gaussian structural equation model from corrupted observations on its nodes, under a variety of measurement error models. 
%This has direct applications in biology where the noisy measured protein expression is used to recover the structure of a Gene Regulatory Networks. 
We provide an estimator based on the method-of-moments, which can be used in conjunction with constraint-based causal structure discovery algorithms. 
We prove asymptotic consistency of the procedure and also discuss finite-sample considerations.
We demonstrate our method's performance through simulations and on real data, where we recover the underlying gene regulatory network from zero-inflated single-cell RNA-seq data.

\end{abstract}

% keywords can be removed
%\keywords{First keyword \and Second keyword \and More}

\section{Introduction}
Determining causal relationships between a set of variables is a central task in causal inference with applications in many scientific fields including economics, biology and social sciences~\cite{friedman2000,pearl2003causality,robins2000marginal}. Directed acyclic graph (DAG) models, also known as Bayesian networks, are commonly used to represent the causal structure among variables. Learning a DAG from observations on the nodes is intrinsically hard~\cite{chickering2004NPhard}, and in general a DAG is only identifiable up to its Markov equivalence class~\cite{VP90}. In addition, in many applications there may be latent variables. While various algorithms have been developed to learn DAGs with latent variables~\cite{RFCI,spirtes2001anytimeFCI,spirtes2000causation,zhang2008completenessFCI}, without restrictions on the latent variables there may be infinitely many DAGs that can explain the data~\cite{richardson2002ancestral}, in which case the model is of little use for further interpretation and analysis.

%In general, when there are no restrictions on the latent variables, there may be so many DAG models that can explain the data, that the model is of little use for further interpretation and analysis.

%We consider the setting of learning a DAG among latent variables from corrupted  observations of these variables. For instance in social sciences, the beliefs of people cannot be directly measured but surveys and text can provide a noisy version of the latent variables. We may want to know how the latent beliefs causally interact, instead of learning interactions between the observed variables. %In genomics, due to technical noise, the expression of a gene in the cell cannot be directly measured and instead a corrupted version which is highly zero-inflated is observed \textcolor{red}{ref}. In this case, we also ultimately want to know the causal relationships between the latent noise-free genes.

%In general, when there are no restrictions on the latent variables, there may be so many DAG models that can explain the data, that the model is of little use for further interpretation and analysis. %Including additional restrictions on the latent variables can improve model identifiability. 
Restrictions on the latent variables can improve model identifiability. In this paper, we consider the problem of causal discovery with measurement error, where each latent variable has exactly one corresponding observed variable (corrupted observation of the latent variable), which serves as its \emph{anchor},  and the goal is to infer the causal relationships among the latent variables; see Figure~\ref{fig:model}a. 
%Furthermore, every latent variable has an observed child, referred to as an anchor, which represents the noisy version of the observed variable (Figure~\ref{fig:model}a).
%We would like to then infer the causal relationships among the latent variables. 
For instance in social sciences, the beliefs of people cannot be directly measured, but surveys can provide a noisy version of the latent variables, and we may be interested in inferring the causal structure among the latent beliefs.
Similarly, in biological applications measurment error needs to be taken into account, e.g.~when measuring brain signals using functional magnetic resonance (fMRI) or gene expression using RNA sequencing. 

While the method developed in this paper can be applied generally to causal inference in the presence of measurement noise, we will showcase its use on learning the underlying  gene regulatory network from single-cell RNA-seq data~\cite{Klein2015}. Such data is  known to suffer from dropout~\cite{Ziegenhain2017}, which manifests itself as false zeros due to too little starting RNA or technical noise. In single-cell RNA-seq experiments, it is estimated that such false zeros occur with a probability of 24-76\% across current state-of-the-art technologies~\cite{Ziegenhain2017}. Applying causal inference methods directly to such data may lead to biased estimates, where negatively correlated variables may appear positively correlated due to dropout; see Figure~\ref{fig:model}b. Currently, the typical approach for dealing with dropout is to first impute gene expression data~\cite{van2018MAGIC}. However, this may introduce artificial dependencies (Figure~\ref{fig:model}c) that may be detrimental for learning the correct causal model. It is therefore of great interest to develop algorithms that directly learn the causal structure among the latent variables from the corrupted data.  %Figure~\ref{fig:model}c shows the observed and imputed gene expression counts for genes known to be positively correlated, demonstrating that applying typical causal inference algorithms to observed zero-inflated data or the imputed data will give inaccurate estimates.

\begin{figure}[t!]
	\centering
%	\vspace{-0.3cm}
	\subfigure[Anchored Causal Model]{\includegraphics[width=.31\textwidth]{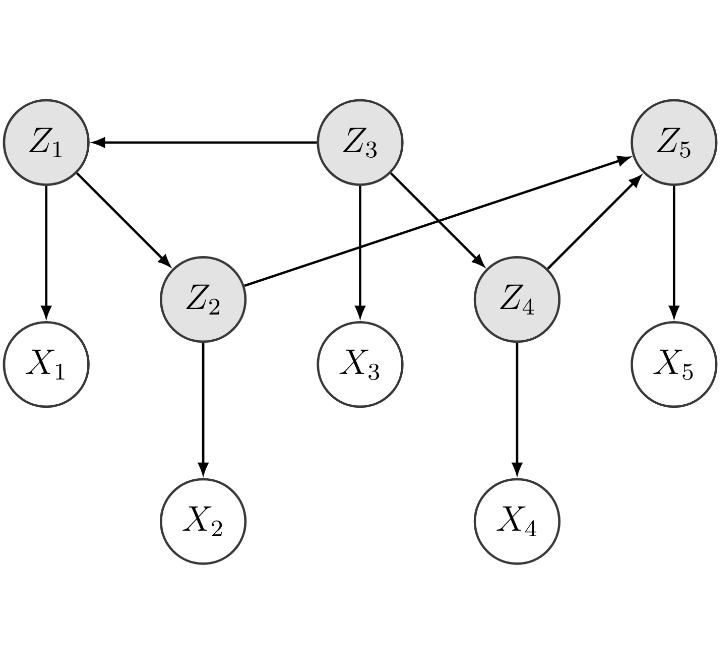}} \hspace{0.05\textwidth}
	\subfigure[Dropout]{\includegraphics[width=.15\textwidth]{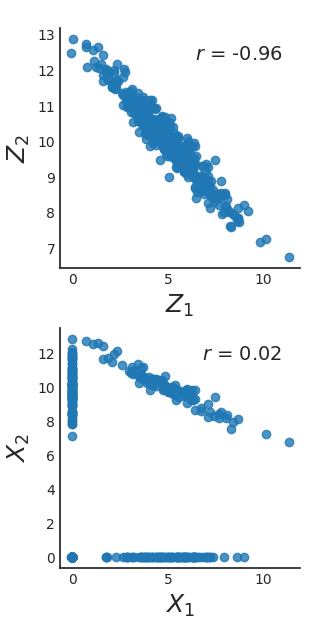}}\hspace{0.05\textwidth}
	\subfigure[Imputation]{\includegraphics[width=.29\textwidth]{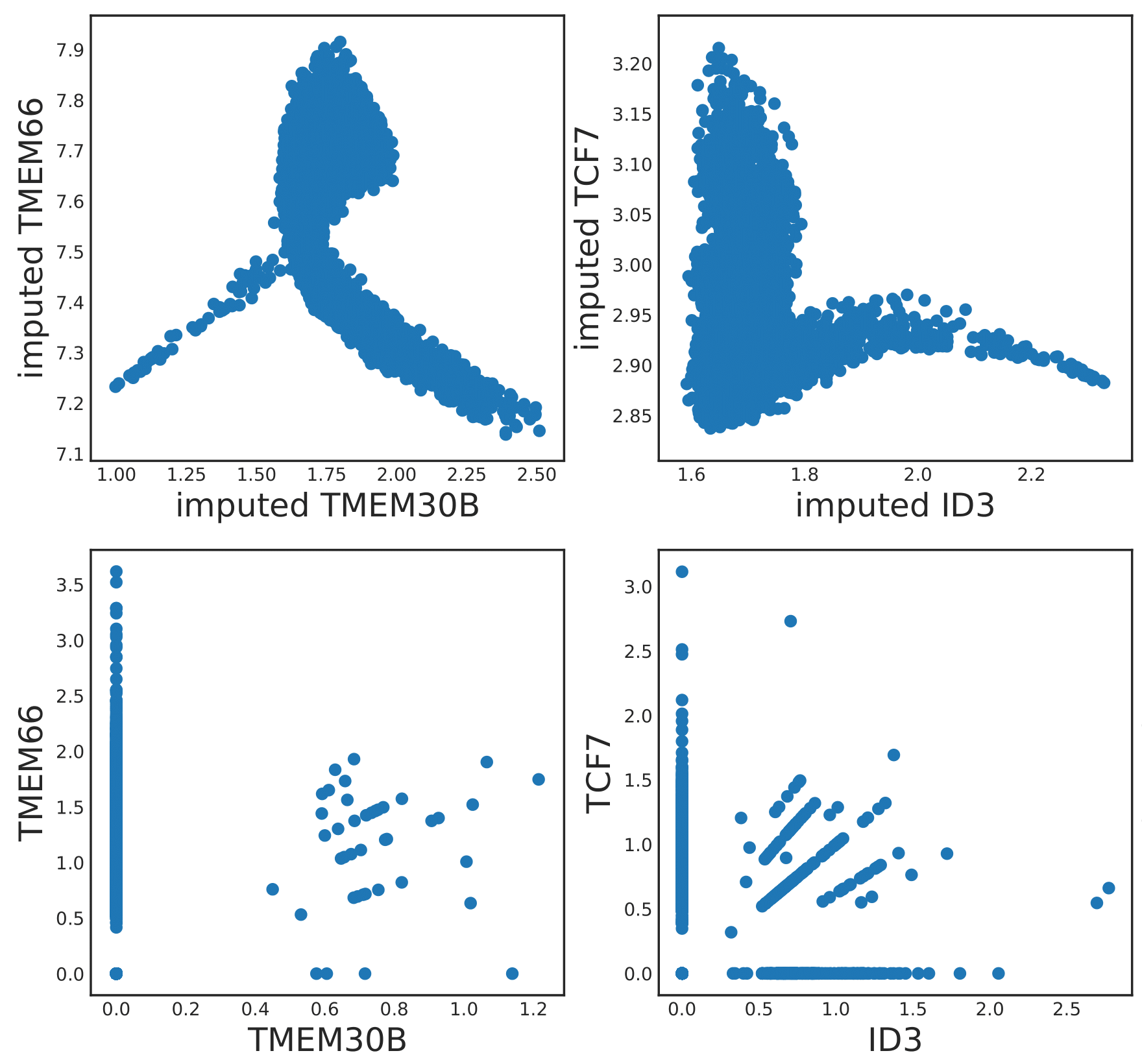}}
	\vspace{-0.15cm}
	\caption{(a) Anchored Causal Model with latent variables $Z_i$ and their corrupted observed counterparts $X_i$ (anchors). (b) Simulated Gaussian random variables before (top) and after dropout with probability 0.5 (bottom). (c) Imputed RNA-seq data (top) and raw data with dropout (bottom).} %All pairs of genes are known to be positively correlated.}
	\label{fig:model}
	\vspace{-0.3cm}
\end{figure}

Zhang et al.~\cite{zhang2017causal} considered the problem of learning a causal DAG model under measurement error as in Figure~\ref{fig:model}a, but restricted the measurement error to be independent from the latent variables. For many applications, including modeling dropout in single-cell RNA-seq data, this assumption is too restrictive. % and we relax it by allowing a flexible class of functions between the observed and latent variables. 
Halpern et al.~\cite{anchoredFA} considered more general anchored causal models than in Figure~\ref{fig:model}a, but only in the binary setting. %, while we here .
Silva et al.~\cite{silva2006learning} considered a similar model for continuous distributions, but under the assumption that the dependence between latent and observed variables is linear, an assumption that is too restrictive for many applications. Inspired by topic modeling, the authors in~\cite{anandkumar2013learning} proposed a causal discovery method for DAGs with various levels of latent variables, but under the assumption that the latent variables are non-Gaussian and have sufficient outgoing edges for identifiability of the model.

The main contributions of this paper are as follows:
\vspace{-0.1cm}
\begin{itemize}
    \item We introduce \emph{anchored causal inference} to model causal relationships among latent variables from observations with measurement error in the Gaussian setting.% \textcolor{green}{Should this specify that we consider Anchored Linear Gaussian SEM model}%\textcolor{blue}{Maybe this should be more about defining anchored SEMs and not talk about defining the task at all? it's pretty much the definition that sontag has.}
    \item We develop a provably consistent algorithm based on the method-of-moments to identify the causal structure (up to Markov equivalence) among the latent variables for a flexible class of measurement error models, including non-linear dependence on the latent variables. %\textcolor{green}{Should this be more specific (identifying causal DAG up to its MEC)}
    \item In particular, we derive a consistent estimator for partial correlations and a consistent conditional independence test when the measurement noise is dropout.
    \item We present experimental results on both simulated and single-cell RNA-seq data, showing that our estimator, which takes into account measurement error, outperforms standard causal inference algorithms applied directly to the corrupted data.
%\item We consistent estimates for partial correlations of latent variables and consistent test for significance of partial correlations
\end{itemize}

\section{Preliminaries and Related Work}
\label{section:prelims}
Let $\mathcal{G} = ([p], E)$ be a directed acyclic graph (DAG) with nodes $[p]:=\{1, \dots , p\}$ and directed edges $E$. We associate a random variable $Z_i$ to each node $i\in[p]$. We denote the joint distribution of $Z=(Z_1, \dots , Z_p)^T$ by $\Prob$ and assume that $Z$ is generated by a \textit{linear Gaussian structural~equation~model}: 
\begin{equation}
    \label{eq:SEM}
    Z = B^TZ + \epsilon,
\end{equation}  
where $B$ is the weighted adjacency matrix of $\mathcal{G}$ and $\epsilon\sim\N_p(\pi, \Omega)$ with $\Omega =\textrm{diag}(\omega_1^2,\cdots,\omega_p^2)$. We consider the problem where only a noise-corrupted version of $Z$ is observed. We define $X_1,\cdots,X_p$ to be the \textit{observed} variables (\textit{anchors}) generated from the \textit{latent} variables $Z_1, \dots , Z_p$ by a noise process $X_i =F_i(Z_i)$, where $X_i$ has non-zero variance; see Figure~\ref{fig:model}a. We aim to learn the DAG $\mathcal{G}$ associated with the latent~variables~$Z$.

%We will directly deal with the mean and covariance matrix characterization $Z \sim \mathcal{N}_p(\mu,\Sigma)$ without referencing the underlying SEM.

% $/Pi_{pi=1} PXi| PaG(Xi)$, wherePaG(Xi) denotes the parents of node Xi (Spirtes et al.,2000, Chapter 4). This factorization implies a set ofconditional independence (CI) relations $\Prob (Z_i | \pa(Z_i)$$

%\textcolor{red}{Start by saying that most work is in the causally sufficient setting with no latent variables. Then shorten this paragraph and combine with next paragraph. }
%Under the Markov Assumption, each variable $Z_i$ is conditionally independent of its nondescendants given its parents, which implies that the joint distribution factors as $\prod\limits_{i=1}^p \Prob (Z_i | \pa(Z_i))$, where $\pa(Z_i))$ are the parents of node $Z_i$~\cite{spirtes2000causation}. This factorization implies a set of conditional independence (CI) relations. The non-edges of $\mathcal{G}$ encode a set of CI relations. 

The majority of literature in causal inference assumes causal sufficiency, i.e.~no latent variables. A standard approach for causal structure discovery in this setting is to first infer the conditional independence (CI) relations among the observed variables and then use the CI relations to learn the DAG structure~\cite{spirtes2000causation}. However, since multiple DAGs can encode the same CI relations, $\mathcal{G}$ can only be identified up to its \emph{Markov equivalence class} (MEC). An MEC can be represented by a \emph{CPDAG}, a partially directed graph whose skeleton (underlying undirected graph) is the skeleton of $\mathcal{G}$ and an edge is directed if it has the same direction for all DAGs in the MEC~\cite{AMP97,VP90}. Various algorithms have been developed for learning a CPDAG under causal sufficiency~\cite{chickering2002optimal, solus2017consistency,spirtes2000causation}, most prominently the \emph{PC algorithm}~\cite{spirtes2000causation}, which treats causal inference as a constraint satisfaction problem with the constraints being CI relations. %, and \emph{GES}~\cite{chickering2002optimal}, which greedily searches over the space of MECs to maximize a score function. 
The PC algorithm is provably consistent, meaning that it outputs the correct MEC when the sample size $n\to\infty$, under the so-called \emph{faithfulness} assumption, which asserts that the CI relations entailed by $\Prob$ are exactly the relations implied by separation in the underlying DAG $\mathcal{G}$~\cite{spirtes2000causation}.

In the presence of latent variables, identifiability is further weakened  (only the so-called PAG is identifiable) and various algorithms have been developed for learning a PAG~\cite{RFCI,spirtes2001anytimeFCI,spirtes2000causation,zhang2008completenessFCI}. However, these algorithms cannot estimate causal relations among the latent variables, which is our problem of interest. Leung et al.~\cite{leung2016identifiability} study identifiability of directed Gaussian graphical models in the presence of a single latent variable. Zhang et al.~\cite{zhang2017causal}, Silva et al.~\cite{silva2006learning}, Halpern et al.~\cite{anchoredFA} and Anandkumar et al.~\cite{anandkumar2013learning} all consider the problem of  learning causal edges among latent variables from the observed variables, i.e.~models as in Figure~\ref{fig:model}a or generalizations thereof, but under assumptions that may not hold for our applications of interest, namely that the measurement error is independent of the latent variables~\cite{zhang2017causal}, that the observed variables are a linear function of the latent variables~\cite{silva2006learning}, that the observed variables are binary~\cite{anchoredFA}, or that each latent variable is non-Gaussian with sufficient outgoing edges to guarantee identifiability~\cite{anandkumar2013learning}.

\section{Anchored Causal Inference}
\label{section:model}

In the following, we first describe the assumptions of our \emph{Anchored Causal Model}, then motivate the model by the application to learning the underlying gene regulatory network from zero-inflated single-cell RNA-seq data, and finally provide an algorithm for anchored causal inference and prove its consistency under the model assumptions.

%In this section, we describe our model class and provide a specific instance of the model class that's applicable for causal structure discovery from zero-inflated single-cell RNA-seq data. In section~\ref{section:anchored_ci}, we develop a procedure for estimating the causal structure when the model assumptions are satisfied.

%We let $\mathcal{G} = ([p], E)$ be a directed acyclic graph with directed edge set $E$, and a random vector $Z=Z_1,\cdots,Z_p$, where each $Z_i$ is associated to a node $i\in[p]$ of $\mathcal{G}$. 
%or each $i\in[p]$ we define $X_i = F_i(Z_i)$ to be the noisy version of the variable $Z_i$ defined through some noise function $F_i$ such that the variance of $F_i(Z_i)$ is non-zero. 

%We make the following assumptions on observed variables $X$ and latent variables $Z$:

\begin{assumptions}[Anchored Causal Model]

% \begin{assumpzero}
% \label{asmp:a0} $Z$ is generated by a linear SEM $Z = B^T Z + \epsilon$ that is faithful to $\G$, where $B$ is an upper triangular matrix that defines the structure of the DAG associated with $Z$, and $\epsilon\sim\N_p(\pi, \Omega)$, with $\Omega =\textrm{diag}(\omega_1^2,\cdots,\omega_p^2)$.
% \end{assumpzero}

\begin{assumpone} Given a DAG $\mathcal{G}=([p], E)$, the latent variables $Z=(Z_1, \dots Z_p)$ are generated by a linear Gaussian structural equation model (see (\ref{eq:SEM})) that is faithful to $\G$. %, where $B$ is an upper triangular matrix that defines the structure of the DAG associated with $Z$, and $\epsilon\sim\N_p(\pi, \Omega)$, with $\Omega =\textrm{diag}(\omega_1^2,\cdots,\omega_p^2)$.
\label{asmp:a1}
\end{assumpone}
\vspace{0.1cm}

\begin{assumptwo}
\label{asmp:a2} 
The observed random vector $X = (X_1,\dotsc,X_p)$ satisfies the CI relations $$X_i \indep \{X_1,\dotsc,X_p, Z_1,\dotsc,Z_p\}\setminus\{X_i,Z_i\} \;|\; Z_i\quad \textrm{ for all } i\in[p].$$ % for every $i\in[p].$
Furthermore, for all $i,j\in[p]$ there exists a finite-dimensional vector $\eta_i$ of monomials in $X_i$ and a finite-dimensional vector $\eta_{ij}$ of monomials in $X_i$ and $X_j$ such that their means %and second order moments are finite, i.e., $\mathbb{E}[\eta_i], $\mathbb{E}[\eta_{ij}], \mathbb{E}[\eta_i^2], \mathbb{E}[\eta_{ij}^2]<\infty$ (where the square of a vector is taken entry-wise) and 
can be mapped to the moments of the latent variables by known continuously differentiable functions $g_i$ and $g_{ij}$, i.e.~$\EX[Z_i] = g_{i}(\mathbb{E}[\eta_i])$ and $\EX[Z_iZ_j] = g_{ij}(\mathbb{E}[\eta_{ij}])$, and their covariance satisfies $\textrm{Cov}(\eta_i, \eta_{ij})<\infty$.
%there exist known continuously differentiable functions $g_i,g_{ij}$ and vector of finite moments $\eta_{i}$ of $X_i$ and $\eta_{ij}$ of $X_{ij}$ such that: $\EX[Z_i] = g_{i}(\eta_i)$, $\EX[Z_iZ_j] = g_{ij}(\eta_{ij})$. Moreover, all second order moments of the random vector whose first order moments are in $\eta$ are finite. \textcolor{blue}{Can I have help formulating this? (The part that starts with "Moreover"})
\end{assumptwo}
\end{assumptions}

While Assumption~\ref{asmp:a1} fixes the structural and functional relationship between the latent variables $Z$ by a linear Gaussian structural equation model, Assumption~\ref{asmp:a2} fixes the structural relationship between latent and observed variables by each $X_i$ having exactly one parent $Z_i$ for all $i\in[p]$, and ensures that the first- and second-order moments of $Z\sim\N_p(\mu, \Sigma)$ can be obtained from moments of $X$ without the restriction to a specific measurement error model. This allows for more general noise models than in~\cite{silva2006learning,zhang2017causal}. %Once $\Sigma$ is estimated, causal inference algorithms such as PC or GES can be applied to learn the causal structure among the latent variables.

%The vector $Z$ can also be described by the marginal mean $\mu$ and the covariance matrix $\Sigma$ such that $Z \sim \N_p(\mu,\Sigma).$ We will often make use of these parameters in our treatment. As in many other causal inference algorithms~\cite{spirtes2000causation, chickering2002optimal}, Assumption~\ref{asmp:a1} considers the linear Gaussian structural equation model (\ref{eq:SEM}) setting. Assumption~\ref{asmp:a2} fixes the structural relationship between the observed variables and the hidden ones in the underlying graph such that each $X_i$ has exactly one parent $Z_i$, and is not directly connected to anything else (Figure~\ref{fig:model}a). Similar to~\cite{anchoredFA}, we refer to the variables $X_i$ as \textit{anchors}. Assumption~\ref{asmp:a3} ensures that the moments of $Z$ can be obtained from moments of $X$ without the restriction to a specific measurement error model. Note that this setting differs from that of Zhang et al.~\cite{zhang2017causal}, where a particular measurement error model is assumed, namely, $X_i = Z_i + E_i$ for $E_i$ independent of $Z_i$. 

\begin{example}[Modeling single-cell RNA-seq data] 
\label{ex:rna_seq}
Let $Z_i$ represent the true latent RNA values (log-transformed) and $X_i$ the observed RNA values after dropout. The authors in~\cite{pierson2015zifa} considered a simple model of gene regulation represented by a linear Gaussian structural equation model among the latent variables $Z$ and modeled dropout for single-cell RNA-seq data by
\begin{equation*}
%\label{eq:dropoutDist}
X_i = F_i(Z_i) = \begin{cases} 
Z_i & \textrm{ with probability} \quad q_i\\
0   & \textrm{ with probability} \quad 1-q_i
\end{cases}%  \quad \textrm{where}\quad q_i = 1 - e^{-\lambda\mu_i^2} 
\quad\textrm{for all}\quad i\in[p].
\end{equation*}
Assuming the dropout probabilities $q_i$ are known, this model satisfies Assumptions~\ref{asmp:a1} and \ref{asmp:a2} with
\begin{equation*}
\eta_i=X_i,\quad \eta_{ij}=X_iX_j, \quad  g_i(y) = \frac{y}{q_i}, 
\quad\textrm{and}\quad g_{ij}(y) = 
\begin{cases}
   \frac{y}{q_iq_j} & i\neq j\\
   \frac{y}{q_i} & i=j
\end{cases},
%\label{eq:func_g_dropout}
\end{equation*}
but does not satisfy the assumptions in~\cite{silva2006learning,zhang2017causal}. \qed
\end{example}

Algorithm~\ref{alg:general} describes our \emph{Anchored Causal Inference} procedure. The procedure works as follows: Given $n$ i.i.d.~samples of $X$ denoted by $\hat{X} = (\hat{X}^{(1)},\hat{X}^{(2)},\dotsc ,\hat{X}^{(n)})$, %step 1 of procedure \ref{alg:general} first computes 
compute the required empirical moments $\mathbb{E}[\hat{\eta}_i]$ and $\mathbb{E}[\hat{\eta}_{ij}]$. Given a particular measurement error model defined by $g_i$ and $g_{ij}$, compute the first- and second-order moments of $Z$ to obtain its covariance matrix $\hat{\Sigma}$. If we can obtain the set of CI relations involving $Z$, we can use  causal structure discovery algorithms to learn $\G$ (up to its Markov equivalence class). Since $Z$ follows a Gaussian distribution, conditional independence corresponds to zero partial correlation. 
Let $i,j\in [p]$ and $K\subseteq [p]\setminus\{i,j\}$, then the sample partial correlations $\hat{\rho}_{ij\cdot K}$ can be computed recursively for increasing conditioning set sizes by
\begin{equation}
\label{eq:pcorr}
\hat{\rho}_{ij\cdot K } = 
\frac{\hat{\rho}_{ij\cdot K\setminus\{l\}}-\hat{\rho}_{il\cdot K\setminus\{l\}} \hat{\rho}_{jl\cdot K\setminus\{l\}}}{\sqrt{1-\hat{\rho}_{il\cdot K\setminus\{l\}\phantom{j}}^2}\sqrt{1-\hat{\rho}_{jl\cdot K\setminus\{l\}}^2}},
\end{equation}
where in the base case $\hat\rho_{ij\cdot \emptyset} = \hat\rho_{ij}$ are the correlations obtained from $\hat{\Sigma}$. The main difficulty lies in developing test statistics based on the estimated partial correlations $\hat\rho_{ij\cdot K}$ such that the inferred CI relations correspond as $n\to\infty$ to the set of CI relations implied by the underlying causal DAG $\G$. Such test statistics are developed in Corollaries~\ref{cor:transform} and~\ref{cor:variance}. The inferred CI relations can then be fed into a constraint-based causal discovery algorithm such as the PC algorithm~\cite{spirtes2000causation} to obtain the CPDAG~of~$\mathcal{G}$.

The first step in asserting consistency of Algorithm~\ref{alg:general} is the following lemma, which follows from the law of large numbers; the proof is provided in appendix~\ref{appendix:ProofOfLemmaMoments}.

%In this section we describe the anchored causal inference procedure for the model class of section~\ref{section:model}. The aim is to learn, from observations of $X$, the causal structure among the latent variables $Z$ represented by the directed edges $E$ of the graph $\mathcal{G}$. If we can obtain the set of conditional independence statements involving $Z$, we can use a causal inference algorithm such as PC or GSP to learn the causal graph. Hence our goal is to obtain consistent estimates for the partial correlations of the variables $Z$ via the method of moments and a consistent test for their significance, such that conditional independence statements obtained asymptotically converge to the true set implied by the graph. 

\begin{algorithm}[!t]
\caption{Anchored Causal Inference}
\label{alg:general}
\textbf{Input:} $n$ samples $\hat{X} = (\hat{X}^{(1)},\hat{X}^{(2)},\dotsc ,\hat{X}^{(n)})$ of the random vector $X = F(Z)$.\\
\textbf{Output:} CPDAG representing the Markov equivalence class of the DAG $\mathcal{G}$ of the latent variables $Z$.\\
1. For each $i,j \in [p]$ compute the sample moment vectors 
$\mathbb{E}[\hat{\eta}_i]$ and $\mathbb{E}[\hat{\eta}_{ij}]$ from the samples $\hat{X}$. \\
2. Estimate the sample moments of $Z$ via $\hat{\mu}_i \triangleq g_i(\mathbb{E}[\hat{\eta}_i]),\; \hat{\mu}_{ij} \triangleq g_{ij}(\mathbb{E}[\hat{\eta}_{ij}])$.\\
3. Estimate the covariance matrix $\hat{\Sigma}$ of $Z$ by $(\hat{\Sigma})_{ij}= \hat{\mu}_{ij} - \hat{\mu}_i\hat{\mu}_j$ for all $i,j\in[p]$.\\
4. Estimate the partial correlations of $Z$ from $\hat{\Sigma}$ using \eqref{eq:pcorr}.\\
5. Calculate the test statistics defined in Corollaries~\ref{cor:transform} or~\ref{cor:variance} to infer the CI relations among the latent variables $Z$.\\
6. Use a consistent causal discovery algorithm (e.g.~the PC algorithm) based on the inferred CI relations.
%in (\ref{eq:TransformCITest}) or (\ref{eq:VarianceCITest}) 
%to find the CPDAG representing the MEC of $\mathcal{G}$.
\end{algorithm}

\begin{lemma}
\label{lemma:moments} 
Under assumptions~\ref{asmp:a1} and \ref{asmp:a2}, the estimator $\hat{\rho}_{ij\cdot K}$ in (\ref{eq:pcorr}) is asymptotically consistent.
\end{lemma}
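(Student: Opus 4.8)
The plan is to prove consistency by propagating convergence through the chain of continuous transformations that produce $\hat{\rho}_{ij\cdot K}$, beginning with the empirical moment vectors and ending with the recursion~(\ref{eq:pcorr}). First I would observe that Assumption~(A2), in particular the finiteness of $\mathrm{Cov}(\eta_i,\eta_{ij})$, guarantees that every coordinate of $\eta_i$ and $\eta_{ij}$ has a finite first moment. The (weak) law of large numbers therefore applies coordinatewise, giving $\mathbb{E}[\hat{\eta}_i]\to\mathbb{E}[\eta_i]$ and $\mathbb{E}[\hat{\eta}_{ij}]\to\mathbb{E}[\eta_{ij}]$ in probability as $n\to\infty$. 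Since $g_i$ and $g_{ij}$ are continuous by (A2), the continuous mapping theorem yields $\hat{\mu}_i=g_i(\mathbb{E}[\hat{\eta}_i])\to\mathbb{E}[Z_i]$ and $\hat{\mu}_{ij}=g_{ij}(\mathbb{E}[\hat{\eta}_{ij}])\to\mathbb{E}[Z_iZ_j]$ in probability.

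Next I would push this through the covariance estimate. Applying the continuous mapping theorem (equivalently, Slutsky) to the bilinear map $(\hat{\mu}_{ij},\hat{\mu}_i,\hat{\mu}_j)\mapsto \hat{\mu}_{ij}-\hat{\mu}_i\hat{\mu}_j$ gives $\hat{\Sigma}_{ij}\to\Sigma_{ij}=\mathrm{Cov}(Z_i,Z_j)$ in probability, hence $\hat{\Sigma}\to\Sigma$ entrywise. For the base case $K=\emptyset$, I would note that under (A1) the latent law $Z\sim\N_p(\mu,\Sigma)$ is non-degenerate, so the diagonal entries $\Sigma_{ii}$ are strictly positive; the map $\hat{\Sigma}\mapsto\hat{\rho}_{ij}=\hat{\Sigma}_{ij}/\sqrt{\hat{\Sigma}_{ii}\hat{\Sigma}_{jj}}$ is therefore continuous at $\Sigma$, and a final application of the continuous mapping theorem gives $\hat{\rho}_{ij}\to\rho_{ij}$ in probability.

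The inductive step is the heart of the argument, and I would run an induction on the cardinality $|K|$. Assuming consistency of all partial correlations with conditioning sets of size $|K|-1$, I would regard the right-hand side of~(\ref{eq:pcorr}) as a fixed continuous function of the three lower-order quantities $\hat{\rho}_{ij\cdot K\setminus\{l\}}$, $\hat{\rho}_{il\cdot K\setminus\{l\}}$, and $\hat{\rho}_{jl\cdot K\setminus\{l\}}$, and conclude via the continuous mapping theorem and Slutsky's theorem that $\hat{\rho}_{ij\cdot K}\to\rho_{ij\cdot K}$ in probability.

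The main obstacle is verifying continuity of this recursion map at the limit point: the denominator factors $\sqrt{1-\rho_{il\cdot K\setminus\{l\}}^2}$ and $\sqrt{1-\rho_{jl\cdot K\setminus\{l\}}^2}$ must be bounded away from zero. This is where I would invoke the non-degeneracy built into (A1): since $\Sigma\succ 0$, every population partial correlation satisfies $|\rho|<1$ strictly, so the limit lies in the interior of the domain where the recursion~(\ref{eq:pcorr}) is continuously differentiable. Evaluating a continuous function at such a point of continuity preserves convergence in probability, which closes the induction and establishes the asymptotic consistency of $\hat{\rho}_{ij\cdot K}$.
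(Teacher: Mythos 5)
Your proposal is correct and takes essentially the same route as the paper's proof: apply the law of large numbers to the empirical moment vectors, push the convergence through the continuous maps $g_i$, $g_{ij}$, the covariance and correlation formulas, and then through the recursion~(\ref{eq:pcorr}) by induction on $|K|$. The only differences are cosmetic --- you use the weak law and convergence in probability where the paper uses the strong law and almost-sure convergence --- and your explicit verification that $\Sigma\succ 0$ keeps the denominators in~(\ref{eq:pcorr}) bounded away from zero is a detail the paper leaves implicit.
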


Next, we design a consistent hypothesis test for obtaining CI relations based on the estimated partial correlations of $Z$ in (\ref{eq:pcorr}), similar in principle to Gaussian CI tests based on Fisher's z-transform used by many causal inference algorithms~\cite{chickering2002optimal,spirtes2000causation,wang2017permutation}. Under causal sufficiency, i.e., when $F_i$ is the identity function for all $i\in[p]$,
%With the appropriate statistical test, the estimated partial correlations can be used with a hypothesis based causal discovery algorithm to infer the causal structure among the $Z$.
%For this we develop a statistical test similar in principle to the Gaussian CI-test used ,for example, in the PC algorithm. 
%Note that in the case of the error-free Gaussian, i.e., when $F_i$ is the identity function for all $i\in[p]$, 
it can be shown~\cite{lehmann} that the estimated partial correlations in (\ref{eq:pcorr}) satisfy
\begin{equation}
\label{eq:GaussCorellationAsymptotic}
\sqrt{n}(\hat{\rho}_{ij\cdot K} - \rho_{ij\cdot K}) \xrightarrow{D}\N_1\Big(0, (1-\rho_{ij\cdot K}^2)^2\Big).
\end{equation}
Hence, applying the delta method to Fisher's z-transform $z_f(\rho) := (1/2)\log\Big((1+\rho)/(1-\rho)\Big)$ of the estimated partial correlations yields 
\begin{equation}
\label{eq:zTransformAsymptotic}
\sqrt{n}\Big(z_f(\hat{\rho}_{ij\cdot K}) - z_f(\rho_{ij\cdot K})\Big) \xrightarrow{D}\N_1(0,1).
\end{equation}
Hence Fisher's z-transform can be used in the test statistic $T:=\sqrt{n}\,z_f(\hat{\rho}_{i,j\cdot K})$ to test conditional independence by declaring $X_i\indep X_j | X_K$ at significance level $\alpha$ if and only if
\begin{equation}
\label{eq:generalCITest}
|T| \le \Phi^{-1}(1 - \frac{\alpha}{2}),
\end{equation}
where $\Phi^{-1}$ denotes the inverse CDF of $\mathcal{N}(0,1)$. The following theorem generalizes (\ref{eq:GaussCorellationAsymptotic}) to our anchored causal model class, where the partial correlations of $Z$ are estimated from the observed moments of $X$. %Hence, to obtain an analogous test and a test statistic, we begin with t

\begin{theorem}
    \label{thrm:AsymptoticPartialCorrelations}
Let $\eta$ denote the vector of monomials of $X$ required to compute the first- and second-order moments of $Z$. Let $\nu$ denote the vector of first- and second-order moments of $\eta$. Then under assumptions~\ref{asmp:a1} and \ref{asmp:a2}, for any $i,j\in[p]$ and $K\subseteq[p]\setminus\{i,j\}$, the estimated partial correlation $\hat{\rho}_{ij\cdot K}$ in~(\ref{eq:pcorr}) satisfies
$$\sqrt{n}(\hat{\rho}_{ij\cdot K} - \rho_{ij\cdot K}) \xrightarrow{D}\N_1\big(0, \tau_{ij\cdot K}(\nu)\big)$$
    where %$\nu$ is a vector of moments of $X$, and
    $\tau_{ij\cdot K}$ is a continuous function of $\nu$.
\end{theorem}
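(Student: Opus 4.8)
The plan is to realize $\hat\rho_{ij\cdot K}$ as a smooth function of a single vector of empirical moments and then invoke the multivariate central limit theorem together with the delta method. First I would collect all the monomials of $X$ needed to compute the first- and second-order moments of $Z$ into the single vector $\eta$, and write $m := \EX[\eta]$ for its true mean and $\hat m$ for the corresponding sample average over the $n$ i.i.d.\ samples. By Assumption~\ref{asmp:a2} the covariance $S := \mathrm{Cov}(\eta)$ is finite, so the multivariate central limit theorem gives
$$\sqrt{n}(\hat m - m) \xrightarrow{D} \N(0, S),$$
where $S$ is a function of the second-order moments of $\eta$ (and of $m$), i.e.\ a function of the entries of $\nu$.

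Next I would exhibit $\hat\rho_{ij\cdot K}$ as $h(\hat m)$ for a continuously differentiable map $h$. The map $h$ is the composition of three pieces: (i) the functions $g_i$ and $g_{ij}$ of Assumption~\ref{asmp:a2}, which send $\hat m$ to the estimated moments $\hat\mu_i,\hat\mu_{ij}$ of $Z$ and are $C^1$ by hypothesis; (ii) the polynomial map $(\hat\mu_i,\hat\mu_{ij})\mapsto \hat\Sigma_{ij}=\hat\mu_{ij}-\hat\mu_i\hat\mu_j$ producing the entries of $\hat\Sigma$; and (iii) the partial-correlation recursion~(\ref{eq:pcorr}), viewed as a map from the entries of $\hat\Sigma$ to $\hat\rho_{ij\cdot K}$. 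Each step of~(\ref{eq:pcorr}) is a rational function composed with square roots, hence $C^1$ wherever the denominators $\sqrt{1-\rho^2}$ stay bounded away from zero. Since $Z$ is a nondegenerate Gaussian under Assumption~\ref{asmp:a1}, every partial correlation appearing in the recursion lies strictly in $(-1,1)$, so the recursion is $C^1$ on a neighborhood of the true covariance and the composition $h$ is $C^1$ at $m$ with $h(m)=\rho_{ij\cdot K}$.

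With $h$ in hand the delta method applies directly:
$$\sqrt{n}(\hat\rho_{ij\cdot K} - \rho_{ij\cdot K}) = \sqrt{n}\big(h(\hat m) - h(m)\big) \xrightarrow{D} \N\big(0,\ \nabla h(m)^\top S\, \nabla h(m)\big),$$
which identifies $\tau_{ij\cdot K}(\nu) = \nabla h(m)^\top S\, \nabla h(m)$. Finally I would check the continuity claim: $\nabla h(m)$ is continuous in $m$ because $h$ is $C^1$, and both $m$ and $S$ are continuous functions of $\nu$ (indeed they are precisely assembled from the first- and second-order moments of $\eta$). Hence $\tau_{ij\cdot K}$ is a continuous function of $\nu$, as claimed; this continuity is what will later allow a consistent plug-in estimate of the variance in the accompanying corollaries.

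The step I expect to be the main obstacle is (iii): showing that the partial-correlation recursion defines a genuinely $C^1$ map at the relevant point, which hinges on ruling out vanishing denominators along the recursion. This is precisely where faithfulness and nondegeneracy of the latent Gaussian enter, since they guarantee that every intermediate partial correlation is strictly less than one in absolute value, so the square-root denominators are smooth and nonzero and the delta-method regularity conditions are met. The remaining steps are routine applications of the central limit theorem and the chain rule, modulo careful bookkeeping of how the monomials and moments are stacked into $\eta$ and $\nu$.
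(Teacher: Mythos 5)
Your proposal is correct and follows essentially the same strategy as the paper's proof: a multivariate central limit theorem for the vector of empirical moments of $\eta$, followed by the delta method through the chain of maps taking moments of $\eta$ to moments of $Z$, to correlations, to partial correlations. The one organizational difference is that you apply the delta method a single time to the full composite map $h$, whereas the paper applies it iteratively: first to pass from the moments of $\eta$ to the vector of all pairwise correlations (Lemma~\ref{lemma:CorrDelta}), and then by induction on the size of the conditioning set, peeling off one conditioning variable at a time via the recursion~\eqref{eq:pcorr} (Lemma~\ref{lemma:CorrToPcorrDelta}). By the chain rule the two yield the same asymptotic variance, so your version is a perfectly valid and somewhat cleaner proof of the convergence statement; what the paper's inductive packaging buys is an explicit recursion for the entries of the asymptotic covariance matrix at each level (their equation~\eqref{eq:recursion}), which is what makes $\tau_{ij\cdot K}$ actually computable for the dropout normalizing transform in the implementation. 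One small correction to your discussion of the ``main obstacle'': the fact that every intermediate partial correlation lies strictly inside $(-1,1)$, so that the square-root denominators in~\eqref{eq:pcorr} do not vanish, is a consequence of the nondegeneracy of the latent Gaussian (positive definiteness of $\Sigma$, i.e., $\omega_i^2>0$ in the structural equation model), not of faithfulness --- faithfulness only governs \emph{which} partial correlations are zero, not whether any of them attain $\pm 1$.
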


The proof of Theorem~\ref{thrm:AsymptoticPartialCorrelations} can be found in appendix~\ref{appendix:ProofOfTheoremAsymptoticPartialCorrelations}, 
%(section \ref{section:prop1})
where we provide a procedure for computing the function $\tau_{ij\cdot K}$ for any $i,j \in [p]$ and $K\subseteq[p]\setminus\{i,j\}$. The idea of the proof is as follows: First apply the Central Limit Theorem to the vector of sample moments $\mathbb{E}[\hat\eta]$. %to start with a vector of all relevant moments of $X$ (i.e., those that show up in $\eta_i$ and $\eta_{ij}$ for all $i,j\in [p]$) and apply the Central Limit Theorem to this vector. 
Under assumption~\ref{asmp:a2}, the correlations $\rho$ based on $\Sigma$ are continuously differentiable functions of $\nu$. Furthermore, for any $i,j\in [p]$ and $K\subseteq[p]\setminus\{i,j\}$, the partial correlation $\rho_{ij\cdot K}$ is defined recursively for increasing conditioning set sizes as a continuously differentiable function of $\rho$. Hence, one can iteratively apply the delta method starting from the statement of the Central Limit Theorem applied to $\EX[\hat\eta]$ to obtain the asymptotic distribution of $\hat{\rho}_{ij\cdot K}$.

In the following two corollaries to Theorem~\ref{thrm:AsymptoticPartialCorrelations}, we provide different test statistics for CI testing based on the estimated partial correlations of the latent vector $Z$. We start by generalizing Fisher's transform and its asymptotic distribution given in (\ref{eq:zTransformAsymptotic}).

\begin{corollary}
\label{cor:transform}
If the asymptotic variance $\tau_{ij\cdot K}(\nu)$ can be written purely as a function of $\rho_{ij\cdot K}$, i.e., there exists $\tilde\tau_{ij\cdot K}$ such that $\tau_{ij\cdot K}(\nu) = \tilde{\tau}_{ij\cdot K }(\rho_{ij\cdot K})$, and  there exists a variance stabilizing transformation $z_{ij\cdot K}$ such that
\begin{equation}
    \label{int_1}
z_{ij\cdot K}(\rho) = \int \frac1{\sqrt{\tilde\tau_{ij\cdot K}(\rho)}}d\rho + C
\end{equation}
with $C$ chosen such that $z_{ij\cdot K}(0) = 0$, then under~\ref{asmp:a1} and \ref{asmp:a2}
$$\sqrt{n} \Big(z_{ij\cdot K}(\hat{\rho}_{ij\cdot K }) - z_{ij\cdot K}(\rho_{ij\cdot K})\Big) \xrightarrow{D} \N_1(0,1).$$
\end{corollary}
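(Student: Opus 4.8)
The plan is to recognize this as a textbook variance-stabilization argument and to derive it directly from the asymptotic normality already furnished by Theorem~\ref{thrm:AsymptoticPartialCorrelations}. That theorem supplies the input $\sqrt{n}(\hat{\rho}_{ij\cdot K} - \rho_{ij\cdot K}) \xrightarrow{D} \N_1(0, \tau_{ij\cdot K}(\nu))$, and the hypothesis of the corollary lets us rewrite the variance as $\tau_{ij\cdot K}(\nu) = \tilde{\tau}_{ij\cdot K}(\rho_{ij\cdot K})$, a function of the single scalar $\rho_{ij\cdot K}$. First I would confirm that the candidate transformation $z_{ij\cdot K}$ defined by the integral in~(\ref{int_1}) is continuously differentiable on a neighborhood of the true value $\rho_{ij\cdot K}$, with $z_{ij\cdot K}'(\rho) = 1/\sqrt{\tilde{\tau}_{ij\cdot K}(\rho)}$ by the fundamental theorem of calculus.

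Then I would apply the scalar delta method to the smooth map $z_{ij\cdot K}$, which yields
$$\sqrt{n}\big(z_{ij\cdot K}(\hat{\rho}_{ij\cdot K}) - z_{ij\cdot K}(\rho_{ij\cdot K})\big) \xrightarrow{D} \N_1\big(0,\, (z_{ij\cdot K}'(\rho_{ij\cdot K}))^2\, \tilde{\tau}_{ij\cdot K}(\rho_{ij\cdot K})\big).$$
Substituting $z_{ij\cdot K}'(\rho_{ij\cdot K}) = 1/\sqrt{\tilde{\tau}_{ij\cdot K}(\rho_{ij\cdot K})}$ collapses the variance to $(1/\tilde{\tau}_{ij\cdot K})\cdot \tilde{\tau}_{ij\cdot K} = 1$, giving the claimed $\N_1(0,1)$ limit. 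This is exactly the mechanism by which Fisher's transform stabilizes the variance in~(\ref{eq:zTransformAsymptotic}), specialized here to the general asymptotic variance $\tilde{\tau}_{ij\cdot K}$. Note that the additive constant $C$ plays no role in this computation, since only $z_{ij\cdot K}'$ enters the delta-method variance; it is fixed by $z_{ij\cdot K}(0)=0$ purely so that the resulting statistic mirrors Fisher's transform, for which $z_f(0)=0$.

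The only genuine technical point — and the step I would treat most carefully — is verifying the regularity the delta method needs at the true parameter: the derivative $z_{ij\cdot K}'(\rho_{ij\cdot K})$ must exist, be finite, and be nonzero. Since $z_{ij\cdot K}'(\rho) = 1/\sqrt{\tilde{\tau}_{ij\cdot K}(\rho)}$, this amounts to requiring $0 < \tilde{\tau}_{ij\cdot K}(\rho_{ij\cdot K}) < \infty$. Finiteness of the asymptotic variance already follows from Theorem~\ref{thrm:AsymptoticPartialCorrelations} together with the moment condition $\textrm{Cov}(\eta_i, \eta_{ij}) < \infty$ in Assumption~\ref{asmp:a2}, while strict positivity is guaranteed precisely by the standing hypothesis that the integral defining $z_{ij\cdot K}$ exists, i.e.~that the integrand $1/\sqrt{\tilde{\tau}_{ij\cdot K}}$ is locally integrable near $\rho_{ij\cdot K}$. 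I therefore do not anticipate a substantive obstacle beyond this bookkeeping: the real content of the corollary is the reduction hypothesis that $\tau_{ij\cdot K}$ depends on $\nu$ only through $\rho_{ij\cdot K}$, which is what makes the one-dimensional stabilizing integral well-posed in the first place.
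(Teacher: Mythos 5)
Your proposal is correct and follows essentially the same route as the paper's proof: apply the delta method to $z_{ij\cdot K}$ using the limit from Theorem~\ref{thrm:AsymptoticPartialCorrelations}, note $z_{ij\cdot K}'(\rho)=1/\sqrt{\tilde\tau_{ij\cdot K}(\rho)}$ so the asymptotic variance collapses to $1$, and observe that the constant $C$ is irrelevant here (the paper likewise remarks that $z_{ij\cdot K}(0)=0$ is only needed later for Theorem~\ref{thrm:AlgorithmConsistency}). Your added attention to the regularity of $z_{ij\cdot K}'$ at the true parameter is a reasonable refinement of the same argument, not a different one.
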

The proof of Corollary~\ref{cor:transform} follows by applying the delta method to  Theorem~\ref{thrm:AsymptoticPartialCorrelations} (appendix~\ref{appendix:PoorfOfCorollaryTransform}).
%The corollary states that it is possible to find a variance stabilizing transformation for the partial correlations found in Procedure~\ref{alg:general}. 
Whether the conditions  of  Corollary~\ref{cor:transform} are satisfied, depends on the measurement error model $F$. We show in appendix~\ref{appendix:DerivationOfDropoutStabilizing} that the conditions of Corollary~\ref{cor:transform} hold for the dropout model for $K =\emptyset$ and $\mu = 0$, and derive the corresponding variance stabilizing transformation.
Note that it is sufficient if we can compute the integral in (\ref{int_1}) numerically; a closed-form solution is not required. Corollary~\ref{cor:transform} implies that the test statistic 
$T=\sqrt{n}\,z(\hat{\rho}_{i,j\cdot K})$ in (\ref{eq:generalCITest}) can be used to consistently estimate the CI relations among the latent variables $Z$.
When the assumptions of Corollary~\ref{cor:transform} are not met, then we can obtain a different test statistic that is asymptotically normal as shown in the following result.

\begin{corollary}
\label{cor:variance} Define $\zeta_{ij\cdot K}(\hat{\rho}, \hat\nu) := {\hat{\rho}}/{\sqrt{\tau_{ij\cdot K}({\hat\nu})}}$. Then under~\ref{asmp:a1} and \ref{asmp:a2}
$$\sqrt{n}\bigg(\zeta_{ij\cdot K}(\hat{\rho}_{ij\cdot K}, \hat\nu) - \zeta_{ij\cdot K}(\rho_{ij\cdot K},\hat\nu)\bigg) \xrightarrow{D} \N_1(0, 1).$$
\end{corollary}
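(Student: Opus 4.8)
The plan is to exploit the fact that the empirical quantity $\hat\nu$ appears identically in both arguments of $\zeta_{ij\cdot K}$ inside the difference, so that the normalizing factor $\sqrt{\tau_{ij\cdot K}(\hat\nu)}$ can be pulled out and the whole expression collapses to a studentized version of $\hat\rho_{ij\cdot K} - \rho_{ij\cdot K}$. First I would use the definition $\zeta_{ij\cdot K}(\rho,\hat\nu) = \rho/\sqrt{\tau_{ij\cdot K}(\hat\nu)}$, which is linear in its first argument, to write
$$\sqrt{n}\big(\zeta_{ij\cdot K}(\hat\rho_{ij\cdot K},\hat\nu) - \zeta_{ij\cdot K}(\rho_{ij\cdot K},\hat\nu)\big) = \frac{\sqrt{n}(\hat\rho_{ij\cdot K} - \rho_{ij\cdot K})}{\sqrt{\tau_{ij\cdot K}(\hat\nu)}}.$$
This reduces the claim to analyzing a ratio whose numerator converges in distribution and whose denominator converges in probability, and the natural tool is Slutsky's theorem.

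For the numerator, Theorem~\ref{thrm:AsymptoticPartialCorrelations} gives directly that $\sqrt{n}(\hat\rho_{ij\cdot K} - \rho_{ij\cdot K}) \xrightarrow{D} \N_1(0,\tau_{ij\cdot K}(\nu))$. For the denominator, I would argue that by the law of large numbers the sample moments assembled in $\hat\nu$ converge in probability to their population counterparts $\nu$, and since $\tau_{ij\cdot K}$ is a continuous function of $\nu$ (as asserted in Theorem~\ref{thrm:AsymptoticPartialCorrelations}), the continuous mapping theorem yields $\tau_{ij\cdot K}(\hat\nu) \xrightarrow{P} \tau_{ij\cdot K}(\nu)$. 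Composing with the square root, which is continuous and strictly positive on a neighborhood of $\tau_{ij\cdot K}(\nu)$ provided $\tau_{ij\cdot K}(\nu) > 0$, gives $\sqrt{\tau_{ij\cdot K}(\hat\nu)} \xrightarrow{P} \sqrt{\tau_{ij\cdot K}(\nu)}$.

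Combining the two pieces by Slutsky's theorem, the ratio converges in distribution to $\N_1(0,\tau_{ij\cdot K}(\nu))/\sqrt{\tau_{ij\cdot K}(\nu)} = \N_1(0,1)$, which is the assertion. The one point requiring care -- and the step I would regard as the main obstacle -- is verifying that $\tau_{ij\cdot K}(\nu) > 0$, so that the denominator converges to a nonzero constant and Slutsky's theorem is applicable; this amounts to checking that the limiting variance of the estimated partial correlation is nondegenerate under the model assumptions, which follows from the non-degeneracy of $X$ (each $X_i$ has nonzero variance by Assumption~\ref{asmp:a2}) together with the explicit construction of $\tau_{ij\cdot K}$ carried out in the proof of Theorem~\ref{thrm:AsymptoticPartialCorrelations}. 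Everything else is a routine chaining of the law of large numbers, the continuous mapping theorem, and Slutsky's theorem; in particular, no further delta-method expansion is needed precisely because $\hat\nu$ is shared across both terms, which is what makes this studentized statistic a clean asymptotic pivot.
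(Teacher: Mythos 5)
Your argument is essentially the paper's own proof: both reduce the claim to the ratio $\sqrt{n}(\hat\rho_{ij\cdot K}-\rho_{ij\cdot K})/\sqrt{\tau_{ij\cdot K}(\hat\nu)}$, apply Theorem~\ref{thrm:AsymptoticPartialCorrelations} to the numerator, and use the law of large numbers plus continuity of $\tau_{ij\cdot K}$ to send the denominator to $\sqrt{\tau_{ij\cdot K}(\nu)}$, concluding by Slutsky. You are in fact slightly more careful than the paper in explicitly flagging the need for $\tau_{ij\cdot K}(\nu)>0$, a condition the paper's proof leaves implicit.
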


Corollary~\ref{cor:variance} follows from Theorem~\ref{thrm:AsymptoticPartialCorrelations}: since $\tau$ is continuous in $\nu$, $\hat\nu$ converges to $\nu$ by the law of large numbers and implies that $\tau_{ij\cdot K}(\hat\rho_{ij\cdot K},\hat\nu)\xrightarrow{a.s.} \tau_{ij\cdot K}(\hat\rho_{ij\cdot K}, \nu)$ as $n\rightarrow\infty$ (appendix~\ref{appendix:ProofOfCorollaryVariance}). Hence the test statistic $T =\sqrt{n}\,\zeta_{ij\cdot K}(\hat{\rho}_{ij\cdot K}, \hat\nu)$ can be used in (\ref{eq:generalCITest}) to obtain the CI relations among the latent variables $Z$.
%The function $\zeta_{ij\cdot K}$ can be used to test conditional independence in a way similar to that shown previously. Namely, to conduct a two sided hypothesis test at significance level $\alpha$, declare $X_i\indep X_j | X_K$ if and only if
% \begin{equation}
% \label{eq:VarianceCITest}
% |\zeta(\hat{\rho}_{i,j\cdot K},\hat\nu)| \le \Phi^{-1}(1 - \frac{\alpha}{2}).
% \end{equation}

With respect to finite-sample considerations, note that $\zeta_{ij\cdot K}$ in Corollary~\ref{cor:variance} is a function of $\hat\nu$, and thus its convergence to its asymptotic distribution requires the convergence of $\hat\nu$. Hence, we expect the convergence in distribution of Corollary~\ref{cor:transform} to be faster than that of Corollary~\ref{cor:variance} and as a result the test statistic in Corollary~\ref{cor:transform} to perform better in the finite-sample regime.
%So when a variance stabilizing transform exists, we'd expect the performance of the algorithm to be better when that is used, as opposed the statistic of Corollary~\ref{cor:variance} in the finite sample regime.

We end this section with the main result of this paper, namely the consistency of Algorithm~\ref{alg:general}.
\begin{theorem}
\label{thrm:AlgorithmConsistency}
Under assumptions~\ref{asmp:a1} and \ref{asmp:a2},  Algorithm \ref{alg:general} is consistent, i.e., 
as $n\rightarrow\infty$ it returns a CPDAG that represents the Markov equivalence class of the true DAG $\G$.
\end{theorem}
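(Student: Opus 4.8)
The plan is to reduce the consistency of Algorithm~\ref{alg:general} to the known consistency of the PC algorithm by showing that the conditional-independence oracle assembled from our test statistics agrees, with probability tending to one, with the true oracle determined by $\G$. Since $p$ is fixed, the algorithm can query only the finitely many triples $(i,j,K)$ with $i,j\in[p]$ and $K\subseteq[p]\setminus\{i,j\}$. It therefore suffices to show that each individual test in~(\ref{eq:generalCITest}) returns the correct decision about whether $\rho_{ij\cdot K}=0$ with probability tending to one, and then to combine these via a union bound.

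First I would let the significance level vary with $n$, choosing a sequence $\alpha_n\to 0$ for which the threshold $t_n:=\Phi^{-1}(1-\alpha_n/2)$ still satisfies $t_n=o(\sqrt n)$; the explicit choice $\alpha_n=2\big(1-\Phi(n^{1/4})\big)$ gives $t_n=n^{1/4}$ and works. Under faithfulness (Assumption~\ref{asmp:a1}) and Gaussianity, $Z_i\indep Z_j\mid Z_K$ holds exactly when $\rho_{ij\cdot K}=0$, so I analyze the two regimes separately. In the null regime $\rho_{ij\cdot K}=0$, Corollary~\ref{cor:variance} (or Corollary~\ref{cor:transform} when its hypotheses hold) gives $T\xrightarrow{D}\N_1(0,1)$; since $T$ is thus tight and $t_n\to\infty$, the false-rejection probability $\Prob(|T|>t_n)\to 0$. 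In the alternative regime $\rho_{ij\cdot K}\neq 0$, Lemma~\ref{lemma:moments} yields $\hat\rho_{ij\cdot K}\xrightarrow{P}\rho_{ij\cdot K}$; the relevant transform, namely $\zeta_{ij\cdot K}(\cdot,\hat\nu)$ (or $z_{ij\cdot K}$ in the stabilized case), is continuous and vanishes only at $\rho=0$, so $T=\sqrt n\,\zeta_{ij\cdot K}(\hat\rho_{ij\cdot K},\hat\nu)$ behaves as $\sqrt n$ times a quantity converging in probability to a nonzero constant. As $t_n=o(\sqrt n)$, the correct-rejection probability $\Prob(|T|>t_n)\to 1$.

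Having driven both the Type~I and Type~II error of every individual test to zero, a union bound over the finitely many triples shows that, on an event of probability tending to one, the set of accepted conditional-independence statements equals $\{(i,j,K):\rho_{ij\cdot K}=0\}$. By faithfulness this set is exactly the collection of $d$-separation relations in $\G$, i.e.\ the output of the true CI oracle. Finally I would invoke the established consistency of the PC algorithm: fed the correct CI oracle of a faithful distribution, it returns the CPDAG of the underlying DAG. On the event above Algorithm~\ref{alg:general} supplies exactly this oracle, so it outputs the CPDAG of $\G$ with probability tending to one, which is the claimed consistency.

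The main obstacle is calibrating a single sequence $\alpha_n$ so that the common threshold $t_n$ simultaneously diverges (to kill every Type~I error) yet stays $o(\sqrt n)$ (to kill every Type~II error); the crux is the rate separation between the null statistic, which is $O_P(1)$, and the alternative statistic, which is $\Theta_P(\sqrt n)$ by Lemma~\ref{lemma:moments} together with the continuity and nondegeneracy of the transforms from Corollaries~\ref{cor:transform} and~\ref{cor:variance}. Once that separation is in place, the union bound and the appeal to PC consistency are routine.
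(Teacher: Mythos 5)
Your proposal is correct and follows essentially the same route as the paper: reduce to oracle consistency of the PC algorithm, show each individual CI test decides $\rho_{ij\cdot K}=0$ correctly with probability tending to one (null statistic $O_P(1)$ versus alternative statistic of order $\sqrt{n}$), and invoke faithfulness to identify zero partial correlations with $d$-separation in $\G$. Your explicit calibration of a vanishing significance sequence $\alpha_n$ with threshold $t_n = o(\sqrt{n})$ is in fact a cleaner treatment of a point the paper's proof leaves informal (it fixes $\alpha$ and notes the Type~I error tends to $\alpha$, then argues $\alpha$ can be taken arbitrarily small), but the underlying argument is the same.
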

The proof can be found in appendix~\ref{appendix:ProofOfAlgorithmConsistency}. In particular, we show that under the faithfulness assumption, the set of CI relations inferred from $\hat \Sigma$ in  Algorithm~\ref{alg:general} 
converges to the set of CI statements implied by the underlying DAG $\mathcal{G}$. Hence using any consistent causal structure discovery algorithm on these CI relations results in the correct Markov equivalence class.
%Then, using the consistency result of the causal discovery algorithm used in the procedure, the theorem follows. 

\section{Implementation}
\label{section:implementation}
Next, we discuss an important aspect of implementation and show how the results in Section~\ref{section:model} can be applied to the dropout model in Example~\ref{ex:rna_seq}.

%For the model of dropout given in equation~\ref{eq:dropoutDist} and assuming $\mu = 0$ for hidden variables, we can directly derive the asymptotic variance of the partial correlation $\rho_{ij\cdot K}$, i.e., the function  $\tau_{ij\cdot K}$ and the test statistic $z_{ij\cdot K}(\rho)$. The exact equations are included in the appendix. Based on Corollary~\ref{cor:transform}, we can then use this test statistic to test partial correlations for significance. 
In the finite-sample setting, the estimated covariance matrix $\hat{\Sigma}$ of the latent variables is not guaranteed to be positive semidefinite. %, as can be seen from $(\hat{\Sigma})_{ij}= \hat{\mu}_{ij} - \hat{\mu}_i\hat{\mu}_j$.
%This can be seen from computation of $\hat{\Sigma}$ in Procedure~$\ref{alg:general}$ $(\hat{\Sigma})_{ij}= \hat{\mu}_{ij} - \hat{\mu}_i\hat{\mu}_j$. 
%where the diagonal of $\hat{\Sigma}$ is computed by taking a difference of two positive terms that are estimated from the data. 
%If we happen to observe data such that the estimated minuend is relatively small, while the subtrahend is relatively large, then the diagonal entries of the estimated covariance may be smaller than that of the true one, causing it to deviate from positive semi-definiteness.
%In such cases, shrinkage can be used as a form of regularization, in order to shrink $\hat{\Sigma}$ towards a positive semi-definite matrix. Ledoit-Wolf shrinkage can be applied in the general case, where the estimated covariance matrix is shrunk towards a constant multiple of the identity matrix $I$, i.e., the regularized matrix $\hat{\Lambda}$ is of the form
% \begin{equation}
% \label{eq:shrinkageConstant}
% \hat{\Lambda} = (1-\alpha) \hat{\Sigma} + \alpha c I
% \end{equation}
% for some $\alpha\in[0,1]$ and $c > 0$ \cite{shrinkage}.
In this case, shrinkage towards a positive definite matrix can be used as a form of regularization. %, in order to shrink $\hat{\Sigma}$ towards a positive semi-definite matrix. 
When $n<p$, a standard approach is to use Ledoit-Wolf shrinkage towards the identity matrix~\cite{shrinkage}. When $n>p$, the sample covariance matrix $\hat{S}$ based on the samples $\hat{X}$ is positive definite with probability 1 and hence $\hat{\Sigma}$ can be shrunk towards $\hat{S}$ by
% For such cases, we suggest shrinkage through the following 
% \begin{align}
%     \label{eq:shrhinkageOpt}
%     \hat{\Lambda} =& (1-\alpha^*)\hat{\Sigma} + \alpha^*\hat{S} \nonumber\\
%     \textrm{where}& \nonumber\\
%     \alpha^* =& \arg\min_{\alpha \in [0,1]} \alpha \nonumber\\
%     \textrm{s.t.}\quad& (1-\alpha)\hat{\Sigma} + \alpha\hat{S} \succeq 0\nonumber.\\
% \end{align}
\begin{equation}
\label{eq:shrinkage}
    \hat{\Lambda} = (1-\alpha^*)\hat{\Sigma} + \alpha^*\hat{S} \quad \text{where} \quad \alpha^* = \argmin_{\alpha \in [0,1],\hat{\Lambda}  \succeq 0} \alpha .
\end{equation}
The form of shrinkage that provides better results depends on whether $\hat{S}$ or the identity matrix are better approximations of the true underlying covariance matrix $\Sigma$. In our experiments in Section~\ref{sec:exp}, we applied shrinkage towards $\hat{S}$ as in~\eqref{eq:shrinkage}.
%\textcolor{red}{Which shrinkage did you apply in the experiments?}\textcolor{blue}{In our experiments, we apply the shrinkage towards the sample mean.} 
Both types of shrinkage result in consistent estimates: consistency of Ledoit-Wolf shrinkage is proven in~\cite{shrinkage} and the consistency of shrinkage towards the sample covariance matrix in (\ref{eq:shrinkage}) follows from Theorem~\ref{lemma:moments}, since $\hat{\Sigma} \rightarrow \Sigma$ as $n\rightarrow \infty$ implies that $\hat{\Sigma}$ becomes positive semidefinite with large enough sample size, and therefore, $\alpha\rightarrow 0$ as $n\to\infty$, which shows that shrinkage reduces to the consistent case without shrinkage.

%The consistency under the second form follows from theorem \ref{lemma:moments}
%Which form of shrinkage gives better results in the finite sample regime depends on which of $\hat{S}$ or $c I$ are expected to be a better approximation of $\Sigma$, which is dictated by the noise function $g$ in assumption~\ref{asmp:a3}.
%The consistency of the estimator under the first form of shrinkage is shown in \cite{shrinkage}.
%The consistency under the second form follows from theorem \ref{lemma:moments}. Namely, $\hat{\Sigma} \rightarrow \Sigma$, as $n\rightarrow \infty$ implies that $\hat{\Sigma}$ becomes PSD with enough samples, and therefore, $\alpha\rightarrow 0$. Hence, the case of shrinkage reduces to the case without shrinkage in the limit.

\subsection{Application: The Dropout Model}

%We further develop the example application of the dropout model whose noise function is shown in equation (\ref{eq:dropoutDist}). In a later section, we show the performance of our estimator under this model.

Under the dropout model in Example~\ref{ex:rna_seq}, the assumptions of Corollary~\ref{cor:transform} are not satisfied, since in general $\tau_{ij\cdot K}$ cannot be expressed as a function of $\rho_{ij\cdot K}$ only. This is shown in appendix~\ref{appendix:DropoutStabilizingTransformConditionPlots} by plotting $\tau$ as a function of $\nu$ for fixed $\rho$.  In the special case when $\mu=0$, the conditions are satisfied for all $i,j\in[p]$ when $K=\emptyset$, i.e., a variance stabilizing transformation $z_{ij} = z_{ij\cdot \emptyset}$ can be found for the correlations $\rho_{ij} = \rho_{ij\cdot \emptyset}$. This \textit{dropout stabilizing transform} is provided in appendix~\ref{appendix:DerivationOfDropoutStabilizing}. This transform and the resulting CI test can be used as a heuristic also when $K\neq \emptyset$ or $\mu\neq 0$ and we analyze its performance in Section~\ref{sec:exp}.%, we demonstrate the performance of the estimation procedure when this transform is applied as a heuristic to all partial correlations $\rho_{ij\cdot K}$ for arbitrary $K$, and for non-zero means.

In appendix~\ref{appendix:DerivationOfNormalizing}, we provide a recursive formula for computing $\zeta_{ij\cdot K}$ from Corollary~\ref{cor:variance} for the dropout model, which we refer to as the \textit{dropout normalizing transform}. %Change this name to whatever.
Computing it requires determining the asymptotic variance $\tau_{ij\cdot K}$ of $\rho_{ij\cdot K}$. Also note that applying shrinkage will in general change the asymptotic variance of the partial correlations. We show how to correct $\tau_{ij\cdot K}$ as a function of the shrinkage coefficient $\alpha$ in appendix~\ref{appendix:DerivationOfNormalizingShrinkage}. % Hence, it is possible to derive the test statistic from $\tau_{ij\cdot K}$ of the dropout model while taking into account the shrinkage applied to the estimator of the covariance matrix for a fixed shrinkage coefficient $\alpha$ (see Appendix). This will give $\tau_{ij\cdot K}$ as a function of $\alpha$, in addition to $\nu$. 
This adjustment is applied in all of our experiments in Section~\ref{sec:exp}. % expect that taking into account the shrinkage will result in more accurate estimates, and thus we adjust our estimates in simulations and real data based on the shrinkage.

%even though  $\Lambda$  will converge to $\Sigma$ asymptotically.

%Note that when shrinkage of either form (\ref{eq:shrinkageConstant}) or (\ref{eq:shrhinkageOpt}) is applied, the partial correlations obtained from the matrix $\Lambda$ will in general have a different asymptotic variance than those obtained from $\Sigma$. Hence, in our derivation of the test statistic from $\tau_{ij\cdot K}$ of the dropout model, we take into account the shrinkage applied to the estimator of the covariance matrix for a fixed shrinkage coefficient $\alpha$, i.e., we will derive $\tau_{ij\cdot K}$ as a function of $\alpha$, in addition to $\mu$ and $\Sigma$.

%As discussed previously, when $\hat\Lambda$ is chosen as in (\ref{eq:shrinkage}), $\hat\Lambda$ will converge to $\hat\Sigma$ asymptotically, and the shrinkage coefficient $\alpha$ will converge to $0$. However, in finite sample case we still expect taking into account the shrinkage applied $\alpha$ in our derivation of $\tau_{ij\cdot K}$ to result in a more accurate approximation to the asymptotic distribution.

In Section~\ref{section:RealData}, we apply our estimation procedure based on the dropout model to single-cell RNA-seq data in order to infer the structure of the underlying gene regulatory network. For the theoretical analysis in Section~\ref{section:model} we assumed that the dropout probabilities $q_i$ are known. However, in general these parameters need to be estimated and it was proposed in~\cite{emma2015a} to model $q_i$ by $q_i =1- \exp^{\lambda \mu_i^2}$ for $i \in [p]$,
% \begin{equation*}
%     q_i =1- \exp^{\lambda \mu_i^2} \quad\textrm{for}\quad i \in [p]
% \end{equation*}
where $\lambda$ depends on the single-cell RNA-seq assay.
Using this model for the dropout probabilities we can jointly estimate the parameters $\mu$ and $q$ as follows. This model implies
\begin{equation*}
    \EX[\eta_i] = \EX[X_i] =  \EX[(1 - \exp^{\lambda \mu_i^2})Z_i] = (1 - \exp^{\lambda \mu_i^2})\mu_i.
\end{equation*}
Since $\mu_i$ corresponds to the gene expression count averages, we can assume that $\hat\mu_i\geq 0$. Under this assumption, the equation
$\EX[\hat\eta_i] = (1 - \exp^{\lambda \hat\mu_i^2})\hat\mu_i$ has a unique solution 
for $\hat\mu_i$. %, which is possible to do uniquely for any $\EX[\hat\eta_i]$ if we are restricted to non-negative values of $\hat\mu_i$. Since $\mu_i$ is used to model gene expression count averages, our model satisfies this restriction.
%To simplify calculations, we do not take this step of estimating $q_i$ into account when deriving the dropout statistic or the dropout transform.
%Note that a similar idea can be applied general when only the form of the function $g$ is known, but the parameters of the function are unknown but can be estimated from the data. 
With respect to the parameter $\lambda$, for some single-cell RNA-seq assays it is possible to obtain an estimate for $\lambda$ by including molecules with known expression as controls. However, since this estimate is often unreliable~\cite{grun2015design} and not always available, we selected $\lambda$ so as to minimize the amount of shrinkage required to obtain a positive semidefinite matrix.

%gives good empirical results compared to when $\zeta_{ij\cdot K}$ is used for smaller sample sizes due to the slow convergence $\zeta_{ij\cdot K}$. 

%For larger sample sizes, the $p-values$ obtained from using this transform will be incorrect. 
%Figure \ref{fig:ztransform} in the appendix shows a plot of the transform when the probabilities $q_i$ are assumed to be equal to $q$ for different values of $q$. As shown in the figure, when $q=1$, this transform reduces to the z-transform.

%Figure \ref{fig:ztransform} in the appendix shows the behavior of this transform for different values of $q_i = q_j = q$. Note that for $q=1$, the noise function $g$ reduces to the identity and this transform reduces to the Fisher z-transform. 

%In the next section, we show the performance of our estimator with both of these methods applied, and compare that to the results when data is assumed to be Gaussian, and the Fisher Z-transform is used. 
%The performance of these two tests will be comparable, with the exact one improving for large $n$, and the $p-values$ for the approximate one being incorrect for larger $n$.

\section{Experiments}
\label{sec:exp}
    In this section, we analyze the performance of Algorithm~\ref{alg:general} based on the dropout model both on simulated data and on single-cell RNA-seq data.
    
\subsection{Simulations}

\begin{figure}[t!]
	\centering
%	\vspace{-0.3cm}
	\subfigure[ROC Skeleton]{\includegraphics[width=.245\linewidth]{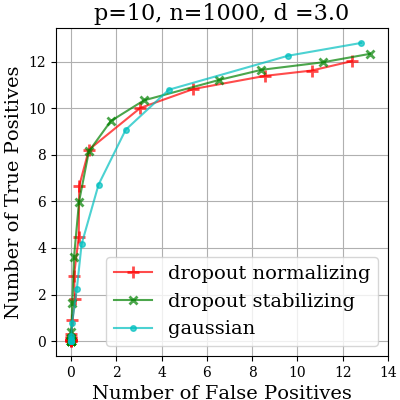}}
	\subfigure[ROC Skeleton]{\includegraphics[width=.245\linewidth]{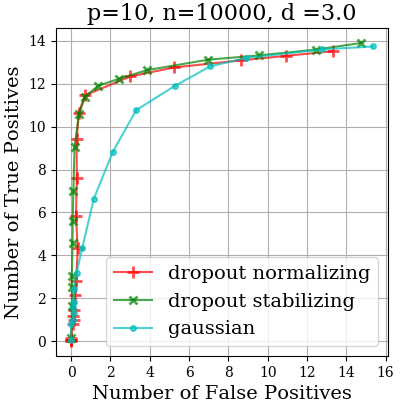}}
	\subfigure[SHD Skeleton]{\includegraphics[width=.245\linewidth]{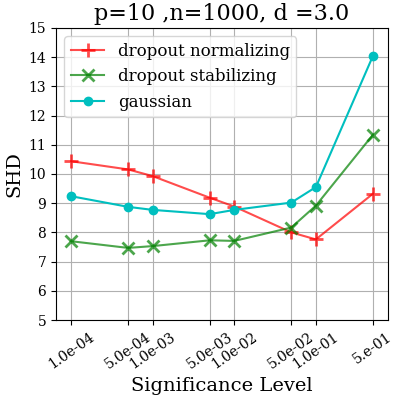}} 
	\subfigure[SHD Skeleton]{\includegraphics[width=.245\linewidth]{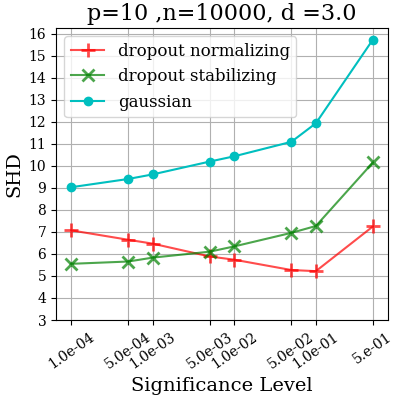}} \\
	
	\subfigure[ROC Skeleton]{\includegraphics[width=.245\linewidth]{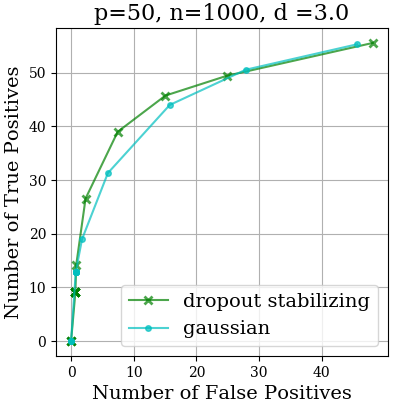}}
	\subfigure[ROC Skeleton]{\includegraphics[width=.245\linewidth]{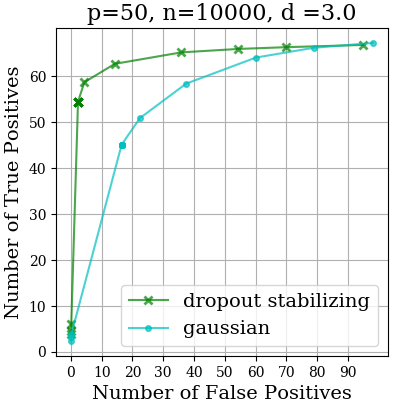}}
	\subfigure[SHD Skeleton]{\includegraphics[width=.245\linewidth]{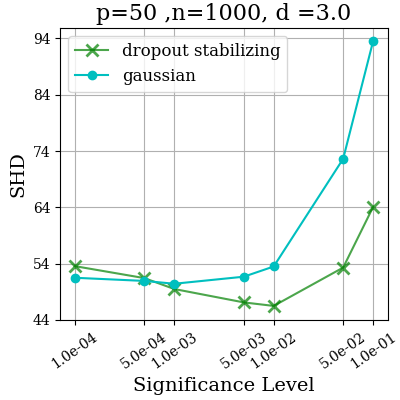}} 
	\subfigure[SHD Skeleton]{\includegraphics[width=.245\linewidth]{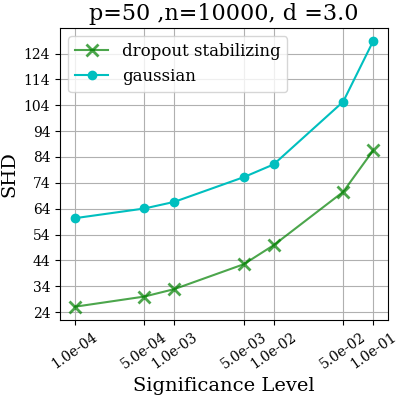}} 
%	\vspace{-0.1cm}
	\caption{Performance of the dropout stabilizing transform, dropout normalizing transform and Gaussian CI test in simulations. (a)-(d) ROC and SHD curves with $p=10$ and (e-h) with $p=50$ for evaluating the accuracy of estimating the skeleton of the true DAG.}
	\vspace{-0.3cm}
	\label{fig:simulations}
\end{figure}

\label{section:simulations}
The data was generated from the dropout model described in Example~\ref{ex:rna_seq}.
The structure of the matrix $B$ in the linear Gaussian structural equation model~\eqref{eq:SEM} was generated by an Erd\"os-Renyi model with expected degree $d$ for $d\in\{2, 3, 5\}$ and number of nodes $p\in\{10, 30, 50\}$. The weights of the matrix $B$ were uniformly drawn from $[-1,-0.25] \cup [0.25,1]$ to be bounded away from $0$. 
The mean parameters $\mu_i$ were uniformly drawn from $[0,3]$ and the probabilities $q_i$ from $[0,0.8]$.
These ranges were chosen to match the expected ranges in the gene expression data analyzed in Section~\ref{section:RealData}. We generated $n$ observations of $X$ from this generating model, for $n\in\{1000, 2000, 10000, 50000\}$. 

Figure~\ref{fig:simulations} shows the ROC curves and the Structural Hamming Distance (SHD) plots evaluating the skeleton of the CPDAG output by Algorithm~\ref{alg:general} for $p\in\{10,50\}$, $n\in\{1000,10000\}$ and $d=3$. Each point in the plots is an average over $96$ simulations (divisible by the number of cores). The CI relations were obtained using the dropout stabilizing transform (green curve), the dropout normalizing transform (red curve), and the Gaussian CI test applied directly to the perturbed observations (blue curve) and the PC algorithm was used to estimate the CPDAG. 
%Figure~\ref{fig:simulations} shows the ROC curves and the Structural Hamming Distance (SHD) plots evaluating the skeleton resulting from this simulation for different settings of $p$ and $n$ with $d=3$. For the ROC curve, we consider an edge that exists in the estimated skeleton a true positive if it exists in the skeleton of the generating graph, and a false positive otherwise. The SHD is taken between the estimated skeleton and the skeleton of the generating DAG. 
In appendix~\ref{appendix:MorePlots}, we provide additional figures for the setting where $p=30$, $n\in\{2000, 50000\}$ and $d\in\{2,5\}$, as well as the ROC curves and SHD plots evaluating the inferred CPDAG.

The simulation results show that the dropout stabilizing transform outperforms, or performs at least as well as the Gaussian CI test in all settings we tested even though it was derived for $K=\emptyset$ and $\mu=0$. The performance of both dropout transforms  improves over the Gaussian CI test with increasing sample size. A large sample size is especially important for the dropout normalizing transform because it relies on the estimation of more parameters.
%For small sample size, the performance of dropout normalizing is worse because it relies on the estimation of more parameters. 
Since the dropout stabilizing transform is   preferable to  the dropout normalizing transform from a computational point of view while performing similarly well, we concentrate on this transform in Figure~\ref{fig:simulations}(e-h). % compares only the dropout transform CI test with the Gaussian CI test for a larger number of nodes, since we have already shown the dropout transform has better performance than the dropout variance statistic.

The simulation results show that our estimators significantly outperform the naive Gaussian CI test applied directly to the corrupted data. This has important implications for the development of new single-cell RNA-seq technologies indicating that an increased sample size is preferrable to minimizing dropout. Current single-cell technologies have been heading exactly in this direction, trading off increased sample sizes (with studies containing up to a million samples) for an increased dropout rate~\cite{single_cell_millions}. The simulation results suggest that our estimators are well-suited for such data. %will be perfectly suited for this kind of data.

\subsection{Single-cell RNA-seq Data}

\begin{figure}[t!]
	\centering
%	\vspace{-0.3cm}
	\subfigure[Perturb-seq]{\includegraphics[width=.245\textwidth]{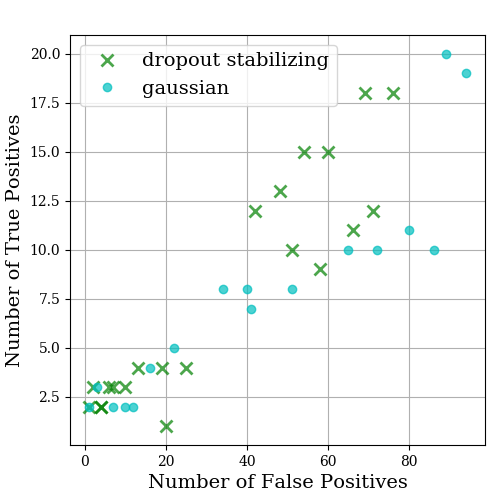}} 
	\subfigure[Perturb-seq gene regulatory network]{\includegraphics[width=.45\textwidth]{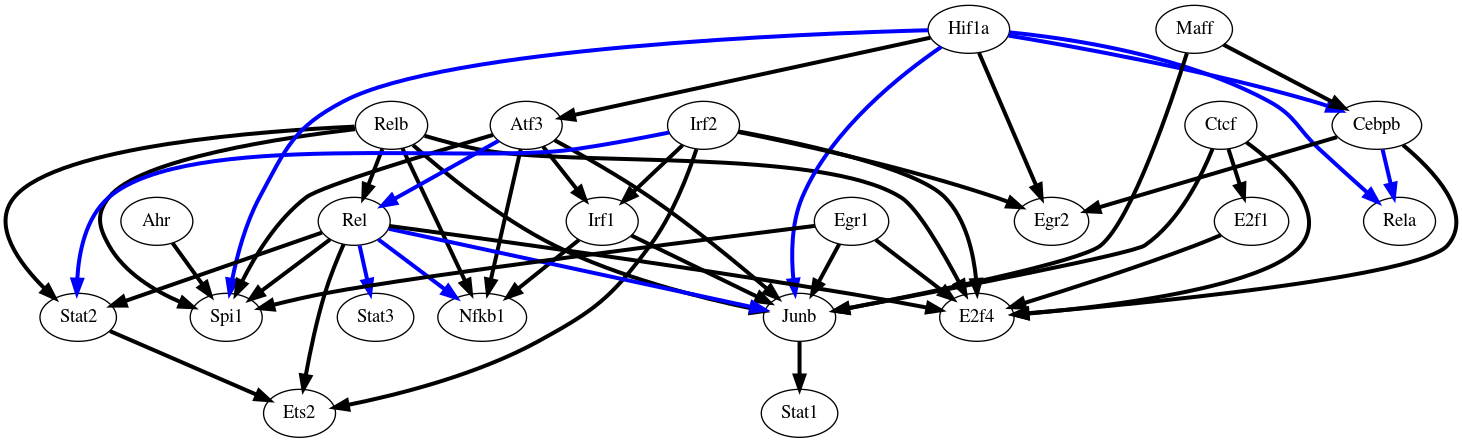}}
	\subfigure[Pancreas]{\includegraphics[width=.245\textwidth]{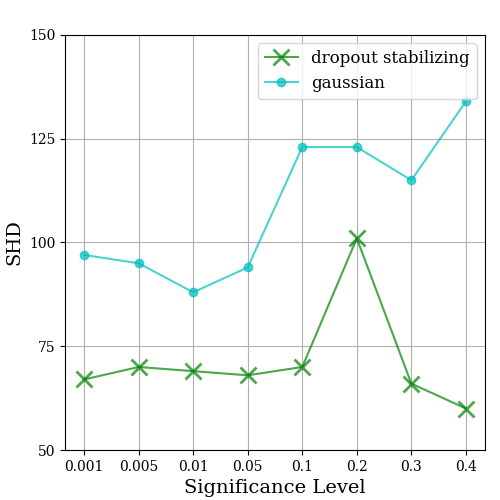}}
%	\vspace{-0.1cm}
	\caption{(a) ROC curve for predicting causal effects of interventions in Perturb-seq data. (b) Gene regulatory network estimated from Perturb-seq data (blue edges indicate previously known interactions~\cite{dixit2016perturb} that were also detected by our method). (c) SHD between CPDAGs estimated on data sets collected with low versus high dropout rate in the pancreas.}
	\label{fig:real_data}
	\vspace{-0.4cm}
\end{figure}

\label{section:RealData}
\textbf{Perturb-seq.} We tested our method on gene expression data collected via single-cell Perturb-seq by Dixit et al.~\cite{dixit2016perturb} from bone marrow-derived dendritic cells (BMDCs). As in most single-cell studies, the gene expression observations are affected by dropout.  The data consists of 933 observational samples (after standard pre-processing), which we used for learning the gene regulatory network. The Perturb-seq data set also contains interventional samples, which we used to evaluate the estimated CPDAG and construct an ROC curve. As in~\cite{dixit2016perturb}, we focussed our analysis on 24 genes, which are important transcription factors known to regulate each other as well as a variety of other genes~\cite{garber2012high}. We used the dropout stabilizing transform to obtain the CI relations among the latent variables and compared the resulting CPDAG to the graph obtained using the standard Gaussian CI test applied directly to the observed corrupted data. In both settings we used the PC algorithm to infer the CPDAG from  the CI relations. %In addition to observational data, the Perturb-seq data set contains eight interventions, which we use to evaluate our results and construct an ROC curve.
Figure~\ref{fig:real_data}a shows the resulting ROC curve, which quantifies for varying tuning parameters the accuracy of each of the learned CPDAGs
in predicting the effect of each of the eight interventions as also described in~\cite{wang2017permutation}. Our algorithm with the dropout stabilizing transform outperforms the Gaussian CI test. The inferred gene regulatory network is shown in Figure~\ref{fig:real_data}b.

\textbf{Pancreas - Type II Diabetes.} We also tested our method on two gene expression data sets collected from human pancreatic cells~\cite{baron2016single, segerstolpe2016single} via different single-cell assays, one with low dropout rate (Smart-seq2) and the other with high dropout rate (inDrop). The Smart-seq2 data set consists of 3514 cells and the inDrop data set of 8569 cells. We focused our analysis on a gene regulatory network of 20 genes, which is known to be involved in Type II Diabetes~\cite{sharma2018controllability}. Since no interventional data is available for this application, we evaluated our estimator based on how consistent the estimated CPDAG is across the two data sets. Figure~\ref{fig:real_data}c shows the SHD between the  CPDAGs estimated from the data set with low versus high dropout using the dropout stabilizing transform as compared to the Gaussian CI test applied directly on the observed data. The inferred gene regulatory network  is provided in appendix~\ref{appendix:MorePlots}.
 Since the SHD is lower for the dropout stabilizing transform than the Gaussian CI test, the CPDAG estimates produced by our method are more consistent across different dropout levels, thereby suggesting that our method is more robust to dropout.

%\begin{figure}
%\centering
%\begin{tabular}{ccc}
%    \centering
%    \subfloat[Ranksum alpha = 0.005]{\includegraphics[width = 0.32\linewidth]{cat.png}} &
%    \subfloat[Ranksum alpha = 0.01]{\includegraphics[width = 0.32\linewidth]{cat.png}} &
%    \subfloat[Ranksum alpha = 0.05]{\includegraphics[width = 0.32\linewidth]{cat.png}}\\
%\end{tabular}
%\caption{ROC Perturb Seq}
%\label{fig:ROC_perturb_seq}
%\end{figure}
%
%\begin{figure}
%\centering
%\begin{tabular}{cc}
%    \centering
%    \subfloat[caption]{\includegraphics[width = 0.32\linewidth]{cat.png}} &
%    \subfloat[caption]{\includegraphics[width = 0.32\linewidth]{cat.png}} \\
%\end{tabular}
%\caption{SHD smart seq 2}
%\label{fig:SHD_smartseq2}
%\end{figure}

\section{Discussion}
%\vspace{-0.4cm}
In this paper, we proposed a procedure for learning causal relationships in the presence of measurement error. For this purpose, we considered the  \emph{anchored causal model}, where each corrupted observed variable is generated from a latent uncorrupted variable and the aim is to learn the causal relationships among the latent variables using the observed variables as \emph{anchors}. We introduced an algorithm that learns the Markov equivalence class of the causal DAG among the latent variables based on the empirical moments of the observed variables and proved its consistency.  One of the main motivations for developing this algorithm was to address the problem of dropout in single-cell RNA-seq experiments. We showed how to apply our algorithm for learning the underlying gene regulatory network under a standard dropout model and analyzed its performance on synthetic data and on single-cell RNA-seq data, thereby showing that taking into account dropout  allows identifying the causal relationships between genes in a more accurate and robust manner.

\bibliographystyle{plainnat}  
\bibliography{references}  %%% Remove comment to use the external .bib file (using bibtex).

\newpage

\appendix
\begin{appendices}

\section{Proof of Lemma \ref{lemma:moments}}
    \label{appendix:ProofOfLemmaMoments}
    \begin{proof}
    By the strong law of large numbers, for all $i,j\in [p]$, 
    $$\EX[\hat\eta_{i}]\xrightarrow{a.s.}\EX[\eta_i] \quad\textrm{and}\quad \EX[\hat\eta_{ij}]\xrightarrow{a.s.}\EX[\eta_{ij}] \quad\textrm{as}\quad n\rightarrow\infty.$$ 
    The functions $g$ are continuous since they are continuously differentiable by~\ref{asmp:a2}, therefore, for all $i\in[p]$,
    $$\hat{\mu}_i = g_i(\EX[\hat{\eta}_i]) \xrightarrow{a.s.}g_i(\EX[\eta_i]) = \mu_i  \quad\textrm{as}\quad n\rightarrow\infty$$
    and similarly, for all $i,j\in[p],$
    $$\hat{\mu}_{ij} = g_{ij}(\EX[\hat{\eta}_{ij}]) \xrightarrow{a.s.} g_{ij}(\EX[\eta_{ij}]) = \mu_{ij}  \quad\textrm{as}\quad n\rightarrow\infty.$$
    Therefore,
    $$(\hat\Sigma)_{ij} := \hat\mu_{ij} - \hat\mu_i\hat\mu_j\xrightarrow{a.s.}\mu_{ij} - \mu_i\mu_j = (\Sigma)_{ij} \quad\textrm{as}\quad n\rightarrow\infty$$
    and the correlations
    $$\hat\rho_{ij} = \frac{(\hat\Sigma)_{ij}}{\sqrt{(\hat\Sigma)_{ii}(\hat\Sigma)_{jj}}} \xrightarrow{a.s.} \frac{(\Sigma)_{ij}}{\sqrt{(\Sigma)_{ii}(\Sigma)_{jj}}} = \rho_{ij} \quad\textrm{as}\quad{n\rightarrow\infty}.$$
    Recursively applying a similar argument to equation~\eqref{eq:pcorr} proves that $\hat\rho_{ij\cdot K}$ is consistent.
    \end{proof}

\section{Proof of Theorem \ref{thrm:AsymptoticPartialCorrelations}}
\label{appendix:ProofOfTheoremAsymptoticPartialCorrelations}
We start by defining the vectors of all correlations estimated from Algorithm~\ref{alg:general} and all true correlations of $Z$ as
\begin{equation}
    \label{eq:AllCorrs}
    \hat\rho := \begin{pmatrix}
    \hat\rho_{12}\\
    \hat\rho_{13}\\
    \cdot\\
    \cdot\\
    \cdot\\
    \hat\rho_{(p-1)p}\\
    \end{pmatrix}
    \quad\textrm{and}\quad
    \rho := \begin{pmatrix}
    \rho_{12}\\
    \rho_{13}\\
    \cdot\\
    \cdot\\
    \cdot\\
    \rho_{(p-1)p}\\
    \end{pmatrix},
\end{equation}
respectively.
We use $\eta$ to denote the vector obtained from concatenating all monomials in $X_i$ and $X_j$ that appear in $\eta_i$ and $\eta_{ij}$ for $i,j\in[p]$ in Assumption~\ref{asmp:a2}.
That is,
\begin{equation*}
    \eta := 
    \begin{pmatrix}
    \eta_{1}^T &
    \eta_{2}^T &
    \dots &
    \eta_{p}^T &
    \dots &
    \eta_{11}^T &
    \eta_{12}^T &
    \eta_{pp}^T
    \end{pmatrix}.
\end{equation*}
We let
\begin{equation*}
    \hat\eta := 
    \begin{pmatrix}
    \hat\eta_{1}^T &
    \hat\eta_{2}^T &
    \dots &
    \hat\eta_{p}^T &
    \dots &
    \hat\eta_{11}^T &
    \hat\eta_{12}^T &
    \hat\eta_{pp}^T
    \end{pmatrix}.
\end{equation*}
be the analogous concatenated vector of sample monomials in $\hat{X}_i$ and $\hat{X}_j$ calculated from the data 
$\hat{X}= (\hat{X}^{(1)},\hat{X}^{(2)},\dotsc,\hat{X}^{(n)})$.

The following lemma is concerned with the asymptotic distribution of the correlation vector $\hat\rho$.

\begin{lemma}
\label{lemma:CorrDelta}
Under Assumptions~\ref{asmp:a1} and~\ref{asmp:a2}, 
$$\sqrt{n}(\hat{\rho} - \rho) \xrightarrow{D} \mathcal{N}_{|\rho|}\Big(0, A(\nu)\Big),$$
where $\nu$ is the vector of all first and second order moments of $\eta$ and $A$ is a continuous function of $\nu$.
\end{lemma}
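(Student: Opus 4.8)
The plan is to realize the correlation vector $\hat\rho$ as a smooth transformation of a sample mean and then combine the multivariate central limit theorem with the delta method. Concretely, write $m := \EX[\eta]$ for the population mean of the monomial vector and $\hat m := \EX[\hat\eta] = \frac{1}{n}\sum_{k=1}^n \eta(\hat X^{(k)})$ for its empirical counterpart; by construction $\hat m$ is an average of i.i.d.\ terms. Since the relevant covariances of the monomials are finite by Assumption~\ref{asmp:a2}, the multivariate CLT gives $\sqrt{n}(\hat m - m)\xrightarrow{D}\N(0,\Sigma_\eta)$ with $\Sigma_\eta := \textrm{Cov}(\eta)$. The key observation is that every entry of $\Sigma_\eta$ equals $\EX[\eta_a\eta_b]-\EX[\eta_a]\EX[\eta_b]$, a polynomial in the first- and second-order moments of $\eta$, so $\Sigma_\eta$ is itself a continuous function of $\nu$.

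Next I would exhibit a continuously differentiable map $h$ with $\rho = h(m)$ and $\hat\rho = h(\hat m)$, built as the composition of three maps. First, the coordinate functions $g_i$ and $g_{ij}$ of Assumption~\ref{asmp:a2} send $m$ to the moments $\mu_i = g_i(\EX[\eta_i])$ and $\mu_{ij}=g_{ij}(\EX[\eta_{ij}])$; these are $C^1$ by assumption. Second, the covariance entries $\Sigma_{ij} = \mu_{ij}-\mu_i\mu_j$ are polynomials in $(\mu_i,\mu_{ij})$, hence $C^1$. Third, the normalization $\rho_{ij} = \Sigma_{ij}/\sqrt{\Sigma_{ii}\Sigma_{jj}}$ is $C^1$ on the region where the diagonal variances are strictly positive; this region contains $m$ because each $Z_i$ has positive variance (equivalently $\omega_i^2>0$), so $\Sigma_{ii}>0$. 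Composing, $h$ is $C^1$ on a neighborhood of $m$, and its Jacobian $J := Dh(m)$ is obtained by the chain rule from the derivatives of $g_i,g_{ij}$ and the explicit derivatives of the covariance and normalization maps.

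Applying the delta method to $\sqrt{n}(\hat m - m)\xrightarrow{D}\N(0,\Sigma_\eta)$ through $h$ then yields $\sqrt{n}(\hat\rho-\rho) = \sqrt{n}\big(h(\hat m)-h(m)\big)\xrightarrow{D}\N(0, J\Sigma_\eta J^T)$, so we set $A(\nu) := J\Sigma_\eta J^T$. It remains to check that $A$ is a continuous function of $\nu$: the matrix $\Sigma_\eta$ is polynomial in $\nu$ as noted above, and $J$ depends on $m$ through the continuous derivatives of the $C^1$ maps $g_i,g_{ij}$ and through the rational derivatives of the normalization, which are well-defined since $\Sigma_{ii}>0$; as $m$ is itself part of $\nu$, $J$ is continuous in $\nu$, and a product of continuous matrix-valued functions is continuous. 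The main obstacle is the bookkeeping for the third map: one must verify that the correlation normalization is differentiable at $m$, which requires the nondegeneracy $\Sigma_{ii}>0$ guaranteed by the nonzero-variance hypothesis, and must track how this positivity keeps $J$ well-defined and continuous across the whole composition. Finiteness of $\Sigma_\eta$, supplied by the $\textrm{Cov}(\eta_i,\eta_{ij})<\infty$ clause of Assumption~\ref{asmp:a2}, is what licenses the CLT in the first place.
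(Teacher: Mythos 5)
Your proposal is correct and follows essentially the same route as the paper's proof: a multivariate CLT on the empirical moment vector $\EX[\hat\eta]$ followed by the delta method applied to the map sending those moments (through the $g_i$, $g_{ij}$, the covariance entries, and the normalization) to the correlation vector, with continuity of $A(\nu)=J\Sigma_\eta J^T$ deduced from the $C^1$ property of the composition and the positivity of the diagonal variances. Your explicit three-stage decomposition of the map and your pinpointing of $\Sigma_{ii}>0$ as the nondegeneracy condition are just a slightly more careful bookkeeping of the same argument.
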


\begin{proof}

Assumption~\ref{asmp:a2} asserts that the covariance of $\eta$ is finite. Hence, we can apply the Central Limit Theorem to obtain
\begin{equation}
    \label{eq:EtaConvergence}
    \sqrt{n}(\EX[\hat\eta] - \EX[\eta]) \xrightarrow{D} \N_{|\eta|}(0, A_\eta(\nu)),
\end{equation}
where $A_\eta$ is the covariance matrix of $\eta$. The elements of the covariance matrix $A_\eta$ can be written as a continuous function of the first-and second-order moments of $\eta$, i.e., they can be written as a continuous function of $\nu$. 

Assumptions~\ref{asmp:a1} and~\ref{asmp:a2} imply that we can write for all $i,j\in [p]$,
\begin{equation}
\label{eq:EtaToRho}
\rho_{ij} = \frac{g_{ij}(\EX[\eta_{ij}])-g_i(\EX[\eta_i])g_j(\EX[\eta_j])}{\sqrt{g_{ii}(\EX[\eta_{ii}])-g_i(\EX[\eta_i])^2}\sqrt{g_{jj}(\EX[\eta_{jj}])-g_j(\EX[\eta_j])^2}}.
\end{equation}
We compute the sample correlation $\hat\rho_{ij}$ in our algorithm as
\begin{equation*}
\hat\rho_{ij} = \frac{g_{ij}(\EX[\hat\eta_{ij}])-g_i(\EX[\hat\eta_i])g_j(\EX[\hat\eta_j])}{\sqrt{g_{ii}(\EX[\hat\eta_{ii}])-g_i(\EX[\hat\eta_i])^2}\sqrt{g_{jj}(\EX[\hat\eta_{jj}])-g_j(\EX[\hat\eta_j])^2}}.
\end{equation*}
Based on equation~\eqref{eq:EtaToRho} we can define a function $w:\mathbb{R}^{|\eta|}\rightarrow\mathbb{R}^{|\rho|}$ such that $w(\EX[\eta]) = \rho$ and $w(\EX[\hat\eta]) = \hat\rho$. Applying the delta method to equation~\eqref{eq:EtaConvergence} with the function $w$, we get 
\begin{equation*}
    \sqrt{n}(\hat\rho - \rho ) \xrightarrow{D} \N_{|\rho|}\Big(0, A_\rho(\nu)\Big),
\end{equation*}
where $A_\rho(\nu) = \nabla w(\EX[\eta])^T A_\eta(\nu)\nabla w(\EX[\eta])$, since the elements of the mean vector $\EX[\eta]$ are elements of $\nu$.
Notice that under the assumption that the variance of $X_i = F_i(Z_i)$ is non-zero from Section~\ref{section:prelims}, the denominator in~\eqref{eq:EtaToRho} is non-zero, and therefore $\rho$ is continuously differentiable in $g(\EX[\eta])$, which is continuously differentiable in $\EX[\eta]$ by Assumption~\ref{asmp:a2}. Hence, $\nabla w(\EX[\eta])$ is continuous in $\EX[\eta]$, and therefore continuous in $\nu$. Since $A_{\rho}(\nu)$ is a matrix product of functions continuous in $\nu$, it is also continuous in $\nu$.
\end{proof}

\begin{lemma}
\label{lemma:CorrToPcorrDelta}
If
$$\sqrt{n}(\hat{\rho} - \rho) \xrightarrow{D} \N_{|\rho|}\Big(0, A_{\rho}(\nu)\Big)$$
where $\nu$ is the vector of all first- and second-order moments of $\eta$, and $A_{\rho}(\nu)$ is continuous in $\nu$, then under Assumptions~\ref{asmp:a1} and~\ref{asmp:a2}, 
for any $i,j \in [p]$ and $K\subseteq [p] \setminus \{i,j\}$,
$$\sqrt{n}(\hat{\rho}_{ij\cdot K} - \rho_{ij\cdot K}) \xrightarrow{D} \mathcal{N}_1(0, \tau_{ij\cdot K}(\nu)),$$ 
for some $\tau_{ij\cdot K}$ that is continuous in $\nu$, where $\hat\rho_{ij\cdot K}$ are the partial correlations estimated by Algorithm~\ref{alg:general}.
\end{lemma}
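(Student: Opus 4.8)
The plan is to express the single partial correlation $\hat\rho_{ij\cdot K}$ as a fixed, continuously differentiable function of the full base correlation vector $\hat\rho$, and then transfer the asymptotic normality of $\hat\rho$ (the hypothesis of the lemma) through the multivariate delta method. The recursion~\eqref{eq:pcorr} computes $\hat\rho_{ij\cdot K}$ from partial correlations whose conditioning sets have one fewer element, with the base case $\hat\rho_{ij\cdot\emptyset}=\hat\rho_{ij}$ being a coordinate of $\hat\rho$. Fixing any order in which the elements of $K$ are removed, repeated substitution therefore yields a map $h_{ij\cdot K}$ with $\hat\rho_{ij\cdot K}=h_{ij\cdot K}(\hat\rho)$ and $\rho_{ij\cdot K}=h_{ij\cdot K}(\rho)$. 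The whole argument reduces to showing that $h_{ij\cdot K}$ is $C^1$ in a neighborhood of $\rho$.

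First I would establish the continuous differentiability of $h_{ij\cdot K}$ by induction on $|K|$. The base case $|K|=0$ is a coordinate projection and is trivially smooth. For the inductive step, one application of~\eqref{eq:pcorr} is a rational function, with square-root denominators $\sqrt{1-\rho_{il\cdot K\setminus\{l\}}^2}$ and $\sqrt{1-\rho_{jl\cdot K\setminus\{l\}}^2}$, of three partial correlations each of which is $C^1$ in $\hat\rho$ by the induction hypothesis; a composition of $C^1$ maps is $C^1$ wherever the denominators are nonzero. Thus the key point to verify is that these denominators are bounded away from zero at $\rho$, i.e.\ that every partial correlation appearing in the recursion lies strictly in $(-1,1)$. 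This holds because $Z$ is a nondegenerate Gaussian: under Assumption~\ref{asmp:a1} the covariance $\Sigma$ is positive definite, so every conditional covariance matrix is positive definite and hence each partial correlation is a genuine correlation of a nondegenerate conditional distribution, forcing $|\rho_{\,\cdot\,}|<1$. By continuity the denominators remain nonzero on a neighborhood of $\rho$, so $h_{ij\cdot K}$ is $C^1$ there.

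With $h_{ij\cdot K}$ established to be $C^1$ near $\rho$, I would apply the delta method to the hypothesized convergence $\sqrt{n}(\hat\rho-\rho)\xrightarrow{D}\N_{|\rho|}\big(0,A_\rho(\nu)\big)$, obtaining
$$\sqrt{n}\big(\hat\rho_{ij\cdot K}-\rho_{ij\cdot K}\big)\xrightarrow{D}\N_1\big(0,\tau_{ij\cdot K}(\nu)\big),\qquad \tau_{ij\cdot K}(\nu)=\nabla h_{ij\cdot K}(\rho)^T A_\rho(\nu)\,\nabla h_{ij\cdot K}(\rho).$$
It remains to check that $\tau_{ij\cdot K}$ is continuous in $\nu$. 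The vector $\rho$ is a continuous function of $\nu$ via~\eqref{eq:EtaToRho}, and since $h_{ij\cdot K}$ is $C^1$ its gradient $\nabla h_{ij\cdot K}$ is continuous in $\rho$, hence $\nabla h_{ij\cdot K}(\rho)$ is continuous in $\nu$; combined with the continuity of $A_\rho(\nu)$ assumed in the hypothesis, $\tau_{ij\cdot K}$ is a product of continuous functions of $\nu$ and is therefore continuous.

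The only genuine obstacle is the $C^1$ claim for $h_{ij\cdot K}$, and specifically keeping the recursion's square-root denominators away from zero; this is exactly where the positive-definiteness of $\Sigma$ (Assumption~\ref{asmp:a1}) is used. Everything else is a mechanical induction over $|K|$ followed by a routine application of the delta method and continuity of compositions, so I do not expect further difficulty there.
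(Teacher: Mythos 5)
Your proposal is correct and follows essentially the same route as the paper: both transfer the hypothesized asymptotic normality of $\hat\rho$ through the recursion \eqref{eq:pcorr} via the delta method, the only difference being that you compose the whole recursion into a single $C^1$ map and apply the delta method once, whereas the paper applies it level-by-level to the full vector of partial correlations at each conditioning-set size (which additionally yields the explicit recursive formula for the asymptotic covariance that is reused later for the dropout normalizing transform). Your explicit check that the square-root denominators stay bounded away from zero, via positive definiteness of $\Sigma$, is a point the paper passes over silently, so that part of your argument is if anything more careful.
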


\begin{proof}
We take any arbitrary but fixed $i,j \in [p]$ and subset $K \in [p]$, and we prove the lemma for $\hat{\rho}_{ij\cdot K}.$ Let $k:= |K| +2$, where $|K|$ is the size of the conditioning set $K$. We begin by relabeling the variables of interest for clarity. We relabel $i$ to $1$, $j$ to $2$ and the elements of $K$ to $S =\{3,\cdots, k\}$.
 Furthermore, we define the sets
 $$S_m:= \begin{cases}
    \{m,m+1,\cdots,k\} & 3\le m \le k\\
     \emptyset & m = k+1
 \end{cases}.$$
 Note that $S_3 = S$, and thereby, the partial correlation of interest is ${\rho}_{12\cdot S} = \rho_{12\cdot S_3}$. Now we define for $m\in\{3,\cdots, k+1\}$, the vectors
$$
\hat{\rho}_{m} := \begin{pmatrix}
\hat{\rho}_{1,2 \cdot  S_{m}}\\
\hat{\rho}_{1,3\cdot   S_{m}}\\
\hat{\rho}_{2,3\cdot   S_{m}}\\
\hat{\rho}_{1,4\cdot   S_{m}}\\
\hat{\rho}_{2,4\cdot   S_{m}}\\
\dotsc\\ 
\hat{\rho}_{1, m-1\cdot   S_{m}}\\
\hat{\rho}_{2, m-1\cdot   S_{m}}\\
\dotsc\\
\hat{\rho}_{{m-2},m-1\cdot S_{m}}\\
\end{pmatrix}
\quad\textrm{and}\quad
\rho_{m} := \begin{pmatrix}
\rho_{1, 2 \cdot  S_{m}}\\
\rho_{1, 3\cdot   S_{m}}\\
\rho_{2, 3\cdot   S_{m}}\\
\rho_{1, 4\cdot   S_{m}}\\
\rho_{2, 4\cdot   S_{m}}\\
\dotsc\\ 
\rho_{1, m-1\cdot   S_{m}}\\
\rho_{2, m-1\cdot   S_{m}}\\
\dotsc\\
\rho_{{m-2},m-1\cdot S_{m}}\\
\end{pmatrix}.
%\quad\textrm{for}\quad 3\le m \le k+1.
$$
It follows from the definition of $S_{k+1}$ that $\hat{\rho}_{k+1} = \hat\rho$ and $\rho_{k+1} = \rho$. In order to prove the lemma, we proceed by induction on $m$ starting with the base case of $m= k+1$ and show that for all $m$ such that $3\le m \le k$,

\begin{equation}
\label{eq:invariant}
\sqrt{n}(\hat{\rho}_{m}- \rho_{m}) \xrightarrow{D} \N_{|\rho_{m}|}\Big(0, A_{m}(\nu)\Big)
\end{equation}

for some $A_{m}$ that is continuous in $\nu$. Note that the base case is given by the hypothesis in the lemma.
Moreover, the statement of the lemma is that the above holds for $m = 3$, and therefore, completing the inductive step proves the lemma.

To complete the inductive step, assume that for $m$ such that $3 \le m < k +1$, we have
\begin{equation}
\label{eq:IH}
\sqrt{n}(\hat{\rho}_{m+1}- \rho_{m+1}) \xrightarrow{D} \N_{|\rho_{m+1}|}\Big(0, A_{m+1}(\nu)\Big).
\end{equation}
Note that for any $\alpha,\beta \in [p]$, the recursive formula for the partial correlations
\begin{equation}
\label{eq:PcorrRecursionAlphaBeta}
\rho_{\alpha\beta\cdot S_m} = \frac{\rho_{\alpha\beta\cdot S_{m+1}}-\rho_{\alpha m\cdot S_{m+1}}\rho_{\beta m\cdot S_{m+1}}}{\sqrt{1-\rho_{\alpha m\cdot S_{m+1}}^2}\sqrt{1-\rho_{\beta m\cdot S_{m+1}}^2}}\\
\end{equation}

implies that the vector $\rho_m$ can be written as a function of $\rho_{m+1}$.
Let $f_m:\R^{|\rho_{m+1}|} \rightarrow \R^{|\rho_m|}$ be this function, then we have

$$f_m(\rho_{m+1}) =
f_m\bigg(\begin{pmatrix}
\rho_{1,2 \cdot  S_{m+1}}\\
\rho_{1,3\cdot   S_{m+1}}\\
\rho_{2,3\cdot   S_{m+1}}\\
\rho_{1,4\cdot   S_{m+1}}\\
\rho_{2,4\cdot   S_{m+1}}\\
\dotsc\\ 
\rho_{1, m\cdot   S_{m+1}}\\
\rho_{2, m\cdot   S_{m+1}}\\
\dotsc\\
\rho_{{m-1},m\cdot S_{m+1}}\\
\end{pmatrix}\bigg)=
\begin{pmatrix}
\rho_{1,2 \cdot  S_{m}}\\
\rho_{1,3\cdot   S_{m}}\\
\rho_{2,3\cdot   S_{m}}\\
\rho_{1,4\cdot   S_{m}}\\
\rho_{2,4\cdot   S_{m}}\\
\dotsc\\ 
\rho_{1,(m-1)\cdot   S_{m}}\\
\rho_{2, (m-1)\cdot   S_{m}}\\
\dotsc\\
\rho_{{m-2},(m-1)\cdot S_{m}}\\
\end{pmatrix}
=\rho_m.$$
Note that this implies $f_m(\hat{\rho}_{m+1}) = \hat{\rho}_m$ since our procedure uses this recursive formula to estimate the partial correlations. 
Applying the delta method to $\eqref{eq:IH}$ with the function $f_m$ gives
\begin{equation*}
\sqrt{n}(\hat{\rho}_m - \rho_m) \xrightarrow{D} \mathcal{N}_{|\rho_{m}|}\Big(0, A_{m}(\nu, \rho_{m+1})\Big),
\end{equation*}

where $A_m(\nu, \rho_{m+1}) := \nabla f_m(\rho_{m+1})A_{m+1}(\nu)\nabla f_m(\rho_{m+1})^T$.
The matrix $\nabla f_m(\rho_{m+1})$ can be computed to be the following matrix:

$$D := \nabla f_m({\rho}_{m+1})
=
\begin{pmatrix}
a_{12}  &     0         &    0         &  \cdot &  0                            & b_{12,1} & b_{12,2}  & 0        & 0        & \cdot  & 0      \\
0             &  a_{13} &    0         &  \cdot &  0                            & b_{13,1} &   0       & b_{13,3} & 0        & \cdot  & 0      \\
0             &  0            & a_{23} &  \cdot &  0                            &  0       &  b_{23,2} & 0        & b_{23,3} & \cdot  & 0      \\
\cdot        &  \cdot       & \cdot       & \cdot  &  0                            &  \cdot  &   \cdot  &  \cdot  & \cdot   & \cdot  & \cdot \\
\cdot        &  \cdot       & \cdot       & \cdot  &  \cdot                            &  \cdot  &   \cdot  &  \cdot  & \cdot   & \cdot  & \cdot \\
0             &  0            & 0            & \cdot  &  a_{\frac{(m-1)(m)}{2}} &  \cdot  &   \cdot  &  \cdot  & \cdot   & \cdot  & \cdot \\
\end{pmatrix}
$$

where 

$$a_{xy} = \frac{1}{\sqrt{1-\rho_{x,m\cdot S_{m+1}}^2}\sqrt{1- \rho_{y,m\cdot S_{m+1}}^2} },$$

$$b_{xy,x} =\frac{ \rho_{x,y\cdot S_{m+1}}\rho_{x,m\cdot S_{m+1}} - \rho_{y,m\cdot S_{m+1}} }{\sqrt{(1-\rho_{x,m\cdot S_{m+1}}^2)^3}\sqrt{1-\rho_{y,m\cdot S_{m+1}}^2}},$$
and
$$b_{xy,y} =\frac{ \rho_{x,y\cdot S_{m+1}}\rho_{y,m\cdot S_{m+1}} - \rho_{x,m\cdot S_{m+1}} }{\sqrt{1-\rho_{x,m\cdot S_{m+1}}^2}\sqrt{(1-\rho_{y,m\cdot S_{m+1}}^2})^3}.$$

To simplify indexing, we define the index function
$$I(x,y) = x + \frac{(y-2)(y-1)}{2}.$$
Then, the element $a_{x,y}$ will be on the $I(x,y)^{th}$ row and column of the Jacobian $D$. We can now compute the elements of the matrix $A_m$ in terms of the elements of $A_{m+1}$. Namely, defining $d := \frac{(m-2)(m-1)}{2}$ and using the notation $M_{[x,y]}$ to denote the entry in the $x^{th}$ row and $y^{th}$ column of $M$, we can compute the element in the $I(x,y)^{th}$ row and $I(z,w)^{th}$ column of $A_m$ to be
\begin{align}
    \label{eq:recursion}
    A&_{m[I(x,y),I(z,w)]} = \sum_{p=1}^{I(1,m)}\sum_{q=1}^{I(1,m)} D_{m[I(x,y),p]} A_{m+1[p,q]} D^T_{m[q,I(z,w)]}\nonumber\\
    &= \sum_{p=1}^{I(1,m)}D_{m[I(x,y),p]}\sum_{q=1}^{I(1,m)} A_{m+1[p,q]} D_{m[I(z,w),q]}\nonumber\\
    &= \sum_{p=1}^{I(1,m)}D_{m[I(x,y),p]}\bigg{(}a_{z,w}A_{m+1[p,I(z,w)]} + b_{zw,z}A_{m+1[p,d + z]} + b_{zw,w}A_{m+1[p, d +  w]}\bigg{)}\nonumber\\
    &= a_{x,y}\big{(}a_{z,w}A_{m+1[I(x,y),I(z,w)]} + b_{zw,z}A_{m+1[I(x,y),d + z]} + b_{zw,w}A_{m+1[I(x,y),d+w]}\big{)}\nonumber\\
    &  + b_{xy,x}\big{(}a_{z,w}A_{m+1[d+x,I(z,w)]} + b_{zw,z}A_{m+1[d+x,d+z]} + b_{zw,w}A_{m+1[d+x,d+w]}\big{)}\nonumber\\
    &+ b_{xy,y}\big{(}a_{d+z,d+w}A_{m+1[d+y,I(z,w)]} + b_{zw,z}A_{m+1[d+y,d+z]} + b_{zw,w}A_{m+1[d+y,d+w]}\big{)}.\nonumber\\
\end{align} 

Note that equation \eqref{eq:PcorrRecursionAlphaBeta} shows that $\rho_{m+1}$ is a continuously differentiable function of $\rho_{m+2}$ since it is a composition of continuously differentiable functions. Hence, $D$ is continuous in $\rho_{m+2}$, which is continuous in $\rho$, which can be seen by arguing recursively.
Recall that we have for all $i,j\in [p]$,
\begin{equation}
\rho_{ij} = \frac{g_{ij}(\EX[\eta_{ij}])-g_i(\EX[\eta_i])g_j(\EX[\eta_j])}{\sqrt{g_{ii}(\EX[\eta_{ii}])-g_i(\EX[\eta_i])^2}\sqrt{g_{jj}(\EX[\eta_{jj}])-g_j(\EX[\eta_j])^2}}.
\end{equation}
Hence, $\rho$ is continuous in  $\eta$ and therefore in $\nu$. Finally, this implies that $D = \nabla f_m(\rho_{m+1})$ is continuous in $\nu$, and hence, so is the matrix product $A_m= DA_{m+1}(\nu)D^T$, where $A_{m+1}(\nu)$ is continuous in $\nu$ by the inductive hypothesis.
Therefore, we can reparameterize $A_m(\nu, \rho_{m+1}) = \tilde{A}_m(\nu)$, and the inductive step follows for $m \in \{3,\cdots,k\}.$
Specifically, for $m=3$, we have the desired statement:
$$\sqrt{n}(\hat{\rho}_{3} - \rho_{3}) \xrightarrow{D} \mathcal{N}_1\Big(0, \tilde{A}_3(\nu)\Big)$$
for $\tilde{A}_3(\nu)$ continuously differentiable in $\nu$. Relabeling back to $i,j$ and $K$, and defining $\tau_{ij\cdot K}(\nu) := \tilde{A}_3(\nu)$, we have
$$\sqrt{n}(\hat{\rho}_{ij\cdot K} - \rho_{ij\cdot K}) \xrightarrow{D} \N_1\Big(0, \tau_{ij\cdot K}(\nu)\Big).$$
\end{proof}

Note that it was not necessary to find the form of the elements of $A_m$ explicitly to argue that it was continuous. However, the proof of this lemma gives us a recursive formula~\eqref{eq:recursion} to compute the elements of $A_3$. Furthermore, this recursive formula is independent of the choice of noise functions $F$ and the associated functions $g$. Hence, this recursion can be used for all noise models, as long as the base case is derived for that noise model, i.e., as long as the elements of the matrix $A(\nu)$ in Lemma~\ref{lemma:CorrDelta} can be found.

\begin{proof}[Proof of Theorem \ref{thrm:AsymptoticPartialCorrelations}]
Follows directly from combining Lemma~\ref{lemma:CorrDelta} and Lemma~\ref{lemma:CorrToPcorrDelta}.
\end{proof}

\section{Proof of Corollary~\ref{cor:transform}}
\label{appendix:PoorfOfCorollaryTransform}

\begin{proof}
By Theorem~\ref{thrm:AsymptoticPartialCorrelations} and the hypothesis of the corollary,
$$\sqrt{n}(\hat{\rho}_{ij\cdot K} - \rho_{ij\cdot K}) \xrightarrow{D}\mathcal{N}_1\big(0, \tilde\tau_{ij\cdot K}(\rho_{ij\cdot K})\big).$$
An application of the delta method with
$$z_{ij\cdot K}(\rho) = \int\frac{1}{\sqrt{\tilde\tau_{ij\cdot K}(\rho)}}d\rho + C$$
gives
$$\sqrt{n}(z_{ij\cdot K}(\hat\rho_{ij\cdot K}) - z_{ij\cdot K}(\rho_{ij\cdot K}))\xrightarrow{D}\N_1\Big(0, \big(z'_{ij\cdot K}(\rho)\big)^2\tilde\tau_{ij\cdot K}(\rho_{ij\cdot K})\Big) = \N_1(0, 1).$$
Note that the condition imposed on $C$, $z_{ij\cdot K}(0) = 0$, by the corollary is not required to prove the result, but is only needed for  Theorem~\ref{thrm:AlgorithmConsistency}.
\end{proof}

\section{Proof of Corollary~\ref{cor:variance}}
\label{appendix:ProofOfCorollaryVariance}

\begin{proof}

By the law of large numbers, $\hat\nu\xrightarrow{a.s.}\nu$ as $n\rightarrow\infty$. Therefore $\tau_{ij\cdot K}( \hat\nu)\xrightarrow{a.s.}\tau_{ij\cdot K}(\nu)$ since $\tau_{ij\cdot K}$ is continuous in $\nu$ by Theorem~\ref{thrm:AsymptoticPartialCorrelations}. Combining this with the convergence result of Theorem~\ref{thrm:AsymptoticPartialCorrelations}  gives 
$$\sqrt{n}(\hat{\rho}_{i,j\cdot K} - \rho_{i,j\cdot K}) \xrightarrow{D}\N_1\big(0,\tau_{ij\cdot K}(\hat{\nu})\big)$$
and hence
$$\sqrt{n}\Big(\frac{\hat{\rho}_{i,j\cdot K}}{\sqrt{\tau_{ij\cdot K}(\hat\nu)}} - \frac{\rho_{i,j\cdot K}}{\sqrt{\tau_{ij\cdot K}(\hat\nu)}}\Big) \xrightarrow{D}\N_1\big(0,1\big).$$
If we define
$${\zeta}_{ij\cdot K}(y, \hat\nu) := \frac{y}{\sqrt{\tau_{ij\cdot K}(\hat\nu)}},$$
we obtain
$$\sqrt{n}\Big({\zeta}_{ij\cdot K}(\hat\rho_{ij\cdot K}, \hat\nu) - \zeta_{ij\cdot K}(\rho_{ij\cdot K}, \hat\nu)\Big) \xrightarrow{D}\N_1(0, 1).$$
\end{proof}
\section{Proof of Theorem~\ref{thrm:AlgorithmConsistency}}
\label{appendix:ProofOfAlgorithmConsistency}

We rely on the consistency of the causal discovery algorithm that our procedure uses such as PC~\cite{spirtes2000causation} or GSP~\cite{solus2017consistency} in the oracle setting, i.e., when the conditional independence statements of the underlying graph are known. 
Hence, to prove consistency of our procedure, it is sufficient to show that the conditional independence statements that our procedure estimates from the observed data converges to the true set of conditional independence statements under the faithfulness assumption in~\ref{asmp:a1}.

First, recall that our procedure estimates the conditional independence statements implied by $\Prob$ through declaring $X_i\indep X_j | X_K$ if and only if
\begin{equation}
\label{eq:CITestRewrite}
|T(\hat\rho_{ij\cdot K})| \le \Phi^{-1}(1 - \frac{\alpha}{2}),
\end{equation}
where $T$ could be one of two statistics: 
\begin{enumerate}
\item[(i)] $T$ is chosen as in Corollary~\ref{cor:transform} to be
\begin{equation}
\label{eq:TransformRewrite}
T(\hat\rho_{ij\cdot K}) = \sqrt{n}\,z_{ij\cdot K}(\hat\rho_{ij\cdot K}) := \sqrt{n}\bigg(\int \frac{1}{\tau_{ij\cdot K}(\hat\rho_{ij\cdot K})}d\hat\rho_{ij\cdot K} + C\bigg)
\end{equation}
with $C$ chosen such that $z_{ij\cdot K}(0) = 0$
if the conditions of Corollary~\ref{cor:transform} are satisfied,
\item[(ii)] or $T$ is chosen as in Corollary~\ref{cor:variance} to be
\begin{equation}
\label{eq:VarianceRewrite}
T(\hat\rho_{ij\cdot K}) = \sqrt{n}\,\zeta_{ij\cdot K}(\hat\rho_{ij\cdot K},\hat\nu) := \sqrt{n}\,\frac{\hat\rho_{ij\cdot K}}{\tau_{ij\cdot K}(\hat\rho_{ij\cdot K}, \hat\nu)}.
\end{equation}
\end{enumerate}

The first step in proving the theorem is the following lemma.

\begin{lemma}
\label{eq:CorrectCI}
There exists a sequence of
As $n\rightarrow\infty$, the conditional independence statements that our procedure estimates from the observations of $X$ converges to the conditional independence statements implied by $\Prob$.
\end{lemma}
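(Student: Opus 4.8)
The plan is to convert the distributional convergence of the test statistics from Corollaries~\ref{cor:transform} and~\ref{cor:variance} into convergence of the accept/reject decisions of the test~\eqref{eq:CITestRewrite}. The key structural fact is that, since $Z\sim\N_p(\mu,\Sigma)$, we have $Z_i\indep Z_j\mid Z_K$ if and only if $\rho_{ij\cdot K}=0$, and by the faithfulness part of Assumption~\ref{asmp:a1} these vanishing partial correlations index exactly the conditional independence statements implied by $\Prob$. Thus it suffices to show that for each fixed triple $(i,j,K)$ the procedure declares independence with probability tending to one precisely when $\rho_{ij\cdot K}=0$. The sequence whose existence the lemma asserts is a sequence of significance levels $\alpha_n\to 0$ chosen so that the threshold $c_n:=\Phi^{-1}(1-\alpha_n/2)$ satisfies $c_n\to\infty$ and $c_n=o(\sqrt n)$; since $\Phi^{-1}(1-t)\sim\sqrt{2\log(1/t)}$ as $t\to 0$, the choice $\alpha_n=1/n$ works.

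Under the null $\rho_{ij\cdot K}=0$, both admissible statistics satisfy $T(\hat\rho_{ij\cdot K})\xrightarrow{D}\N_1(0,1)$: for the transform statistic this follows from Corollary~\ref{cor:transform} together with the normalization $z_{ij\cdot K}(0)=0$, and for the normalizing statistic from Corollary~\ref{cor:variance} together with $\zeta_{ij\cdot K}(0,\hat\nu)=0$. Hence $T=O_P(1)$, and as $c_n\to\infty$ we obtain $\Prob(|T|\le c_n)\to 1$, so the test correctly infers independence. Under the alternative $\rho_{ij\cdot K}\neq 0$, Lemma~\ref{lemma:moments} gives $\hat\rho_{ij\cdot K}\xrightarrow{P}\rho_{ij\cdot K}$, and continuity yields a nonzero probability limit for the pre-$\sqrt n$ part of the statistic: $z_{ij\cdot K}(\hat\rho_{ij\cdot K})\xrightarrow{P}z_{ij\cdot K}(\rho_{ij\cdot K})\neq 0$ because the variance-stabilizing transform is strictly monotone with $z_{ij\cdot K}(0)=0$, and likewise $\zeta_{ij\cdot K}(\hat\rho_{ij\cdot K},\hat\nu)\xrightarrow{P}\rho_{ij\cdot K}/\sqrt{\tau_{ij\cdot K}(\nu)}\neq 0$ using $\hat\nu\xrightarrow{a.s.}\nu$ and continuity of $\tau_{ij\cdot K}$ from Theorem~\ref{thrm:AsymptoticPartialCorrelations}. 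In either case $|T|=\sqrt n\,|c|+o_P(\sqrt n)$ with $c\neq 0$, so $|T|$ dominates $c_n=o(\sqrt n)$ and $\Prob(|T|> c_n)\to 1$, i.e.\ the test correctly infers dependence.

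Finally, because $p$ is fixed there are only finitely many triples $(i,j,K)$, so a union bound shows that with probability tending to one every decision is simultaneously correct; equivalently, the estimated set of conditional independence statements coincides with the set implied by $\Prob$ with probability tending to one, which is the claim. The main obstacle is the rate balancing in the alternative case: $\alpha_n$ must shrink fast enough that the threshold escapes to infinity and suppresses false rejections under the null, yet slowly enough ($c_n=o(\sqrt n)$) that the statistic, which diverges at rate $\sqrt n$, still exceeds it. Making this work requires that $\tau_{ij\cdot K}(\nu)$ be finite and bounded away from zero, so that $\zeta_{ij\cdot K}$ has a well-defined nonzero probability limit, and that the variance-stabilizing transform be strictly monotone so that $z_{ij\cdot K}(\rho)\neq 0$ off the null.
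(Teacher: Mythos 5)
Your proof is correct and follows the same overall strategy as the paper's: use asymptotic normality of the test statistic under the null (from Corollaries~\ref{cor:transform} and~\ref{cor:variance}) to control false rejections, and the fact that $T=\sqrt{n}\cdot c+o_P(\sqrt n)$ with $c\neq 0$ under the alternative (via consistency of $\hat\rho_{ij\cdot K}$ and strict monotonicity of the transforms through zero) to guarantee detection of dependence. The one genuine difference is that you make the significance level a sequence $\alpha_n\to 0$ with $c_n=\Phi^{-1}(1-\alpha_n/2)\to\infty$ and $c_n=o(\sqrt n)$, which drives \emph{both} error probabilities to zero for a single procedure; the paper instead keeps $\alpha$ fixed, shows the type-I error converges to $\alpha$, and then argues it can be made smaller than any $\epsilon$ by choosing $\alpha_\epsilon=\epsilon/2$ --- a family of tests rather than one consistent test. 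Your version is the cleaner reading of the lemma (whose statement begins ``There exists a sequence of\dots'', evidently intending exactly such an $\alpha_n$), and your explicit union bound over the finitely many triples $(i,j,K)$ makes rigorous a step the paper leaves implicit. The rate condition $c_n=o(\sqrt n)$ and the requirement that $\tau_{ij\cdot K}(\nu)$ be finite and bounded away from zero are exactly the points one must check, and you identify them correctly.
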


\begin{proof}
Take any arbitrary $i,j\in[p]$ and $K\subseteq[p]\setminus\{i,j\}.$
First, note that in both settings of $T(\hat\rho_{ij\cdot K})$ in~\eqref{eq:TransformRewrite} and~\eqref{eq:VarianceRewrite}, $T(\hat\rho_{ij\cdot K})$ is monotonic and continuous in $\hat\rho_{ij\cdot K}.$ In the first setting it is the anti-derivative of a strictly positive function of $\hat\rho_{ij\cdot K}$ and in the second, it is linear in $\hat\rho_{ij\cdot K}$ with positive slope. Monotonicity and the definitions of $z_{ij\cdot K}$ and $\zeta_{ij\cdot K}$ imply that for $n \neq 0$, $T(\rho_{ij\cdot K}) = 0$ if and only if $\rho_{ij\cdot K} = 0$.
Continuity and Lemma~\ref{lemma:moments} imply that  
$$T(\hat\rho_{ij\cdot K})\xrightarrow{a.s.} T(\rho_{ij\cdot K})\quad\textrm{as}\quad n\rightarrow\infty.$$

Let $H_\alpha$ be the event that $X_i \nindep X_j | X_K$ was declared by the test in~\eqref{eq:CITestRewrite}. Let $H$ be the event that $X_i \nindep X_j | X_K$ according to the measure $\Prob$. Let $H'$ be the event that $X_i \indep X_j | X_K$ according to $\Prob$. We analyze the limits of the probability of declaring a CI statement correctly, $\Prob(H_{\alpha} | H)$, and the limits of declaring a CI statement incorrectly, $\Prob(H_{\alpha} | H').$
First, for all $\alpha \in (0,1]$, 
\begin{align}
\Prob(H_\alpha | H) &= \Prob(|T(\hat\rho_{ij\cdot K})| > \Phi^{-1}(1-\frac{\alpha}{2}) \,|\, \rho_{ij\cdot K} \neq 0)\nonumber\\
&\rightarrow \Prob(|T(\rho_{ij\cdot K})| > \Phi^{-1}(1-\frac{\alpha}{2}) \,|\, \rho_{ij\cdot K} \neq 0)\nonumber\\
&\rightarrow 1 \quad\textrm{as}\quad n\rightarrow\infty\label{eq:GoesToOne}
\end{align}
where to obtain~\eqref{eq:GoesToOne}, we used that $T(\rho_{ij\cdot K})\neq 0$ since $\rho_{ij\cdot K}\neq 0$. Hence, $|T(\rho_{ij\cdot K})| = |\sqrt{n}\cdot c|\rightarrow\infty$ for $c\neq 0$ as $n\rightarrow\infty$.
Moreover,
\begin{align}
\Prob(H_\alpha& | H') \\
&= \Prob\Big(|T(\hat\rho_{ij\cdot K})| > \Phi^{-1}(1-\frac{\alpha}{2}) \,\Big|\, \rho_{ij\cdot K} = 0\Big) \nonumber\\
&=  \Prob\bigg(
T(\hat\rho_{ij\cdot K}) > \Phi^{-1}(1-\frac{\alpha}{2})\bigg|T(\rho_{ij\cdot K})=0\bigg)
+
\Prob\bigg(T(\hat\rho_{ij\cdot K}) < \Phi^{-1}(\frac{\alpha}{2})  \bigg|T(\rho_{ij\cdot K}) = 0
\bigg)\nonumber\\
&\rightarrow \alpha \quad \textrm{as}\quad n\rightarrow\infty\label{eq:GoesToAlpha}
\end{align}
where~\eqref{eq:GoesToAlpha} follows from Corollaries~\ref{cor:transform} and~\ref{cor:variance} that assert the asymptotic normality of $T$ in both settings.
Hence, for any $\epsilon > 0$, we can set $\alpha_\epsilon = \epsilon/2$ and we will obtain $\Prob(H_{\alpha_\epsilon}|H') \rightarrow \alpha_\epsilon < \epsilon$ as $n\rightarrow\infty$. Therefore both errors in estimating the CI statements implied by $\Prob$ vanish asymptotically, implying that the set of CI statements obtained from observations $X$ converges to those implied by $\Prob$.
\end{proof}

\begin{proof}[Proof of Theorem 2]
Under faithfulness, the CI statements implied by $\Prob$ are those implied by $\G$. Hence, by Lemma~\ref{eq:CorrectCI}, the set of CI statements obtained from $X$ as $n\rightarrow\infty$ converge to those implied by $\G$. Therefore, if the causal discovery algorithm used in step 6 of Algorithm~\ref{alg:general} is consistent in the oracle setting, then Algorithm~\ref{alg:general} is consistent.
\end{proof}

\section{Derivation of the transforms for the dropout model}

\label{appendix:DerivationOfDropout}

In this section, we derive the transforms for the dropout model. The detailed calculations of this derivation are carried out in a supplementary Mathematica notebook, which can be found at \texttt{https://github.com/basilnsaeed/anchored\_causal\_inference}.

Recall, in the dropout model introduced in Section~\ref{section:model}, we consider an anchored causal model where $Z\sim\mathcal{N}(\mu, \Sigma)$ satisfies~\ref{asmp:a1}. In Example~\ref{ex:rna_seq}, the corrupted observation vector $X$ is modeled as
\begin{equation}
X_i = F_i(Z_i) = \begin{cases} 
Z_i & w.p \quad q_i\\
0   & w.p \quad 1-q_i
\end{cases}%  \quad \textrm{where}\quad q_i = 1 - e^{-\lambda\mu_i^2} 
\quad\textrm{for all}\quad i\in[p],
\end{equation}
with $q_i\in(0,1].$
Note that Assumption~\ref{asmp:a2} is satisfied since each $X_i$ is independent of all other variables given its parent $Z_i$. We can find the moments of $Z$ in terms of the moments of $X$:
\begin{equation}
\EX[X_i] = q_i \mu_i, \quad \EX[X_{i}^2] = q_i \mu_{ii},\quad \EX[X_iX_j] = q_{i}q_j\mu_{ij}\\
\end{equation}
for all $i,j\in[p]$ with $i\neq j$, where we defined $\mu_{ij} := \EX[Z_iZ_j].$
From this, we can see that Assumption~\ref{asmp:a2} is satisfied with
\begin{equation}
\label{eq:DropoutEta}
\eta_i := X_i,\quad \eta_{ii} := X_i^2, \quad \eta_{ij} := X_iX_j,
\end{equation}
and
\begin{equation}
\label{eq:DropoutG}
g_{i}(y) := \frac{y}{q_i},\quad g_{ii}(y):= \frac{y}{q_{i}}, \quad g_{ij}(y) := \frac{y}{q_iq_j}.
\end{equation}

\subsection{Derivation of the Dropout Stabilizing Transform}
\label{appendix:DerivationOfDropoutStabilizing}
In this section, we derive the \textit{dropout stabilizing transform} under the assumption that $\mu_i = 0$ for all $i\in[p]$ and $K = \emptyset$, i.e., we find a variance stabilizing transformation $z_{ij} = z_{ij\cdot \emptyset}$ for the correlations $\rho_{ij} = \rho_{ij\cdot \emptyset}$. 
We first show that $\tau_{ij}(\nu)$ can be reparameterized as $\tilde\tau_{ij}(\rho_{ij})$ and then solve for the dropout stabilizing transform $z_{ij}(\rho)$. We follow the proof of Lemma~\ref{lemma:CorrDelta} and later impose the $\mu = 0$ assumption.

We take any arbitrary, but fixed distinct nodes $i,j\in[p]$ and define
$$\eta := \big{(}\eta_i\; \eta_j\; \eta_{ii} \; \eta_{jj} \; \eta_{ij}  \big)^T $$
as the vector of monomials in $X_i$ and $X_j$ from~\eqref{eq:DropoutEta}.
Similarly, we define
$$\hat\eta = \big{(}\hat\eta_i\; \hat\eta_j\; \hat\eta_{ii} \; \hat\eta_{jj} \; \hat\eta_{ij}  \big)^T$$
as the analogous vector of monomials in $\hat{X_i}$ and $\hat{X_j}$
%moments 
estimated from the observed data.

Then, applying the Central Limit Theorem gives
$$\sqrt{n}(\EX[\hat\eta] - \EX[\eta]) \xrightarrow{D}\N_{5}(0, A_5(\nu)),$$
where $A_5(\nu)$ is the matrix
\begin{align*}
\begin{pmatrix}
\textrm{Cov}(X_i,X_i)   & \textrm{Cov}(X_j,X_i)   & \textrm{Cov}(X_i^2,X_i)   & \textrm{Cov}(X_j^2,X_i)  & \textrm{Cov}(X_iX_j,X_i)  \\
\textrm{Cov}(X_i,X_j)   & \textrm{Cov}(X_j,X_j)   & \textrm{Cov}(X_i^2,X_j)   & \textrm{Cov}(X_j^2,X_j)  & \textrm{Cov}(X_iX_j,X_j)  \\
\textrm{Cov}(X_i,X_i^2) & \textrm{Cov}(X_j,X_i^2) & \textrm{Cov}(X_i^2,X_i^2) & \textrm{Cov}(X_j^2,X_i^2)& \textrm{Cov}(X_iX_j,X_i^2)\\
\textrm{Cov}(X_i,X_j^2) & \textrm{Cov}(X_j,X_j^2) & \textrm{Cov}(X_i^2,X_j^2) & \textrm{Cov}(X_j^2,X_j^2)& \textrm{Cov}(X_iX_j,X_j^2)\\
\textrm{Cov}(X_i,X_iX_j)  & \textrm{Cov}(X_j,X_iX_j)  & \textrm{Cov}(X_i^2,X_iX_j)  & \textrm{Cov}(X_j^2,X_iX_j) & \textrm{Cov}(X_iX_j,X_iX_j) \\
\end{pmatrix},\\
\end{align*}
%$$\eta_i = q_i\mu_i,\quad\textrm{and}\quad\eta_j = q_j\mu_j.$$
and $\nu$ is the vector of all first and second order moments of $\eta$. 
Now, define $w:\R^5\rightarrow\R^1$ as 
$$w\begin{pmatrix}a_1 \\ a_2\\ a_3\\ a_4\\ a_5 \end{pmatrix} = \frac{a_5 - a_1a_2}{\sqrt{a_3 - a_1^2}\sqrt{a_4 - a_2^2}}.$$
Note that we have $w(\EX[\eta]) = \rho_{ij}$ and $w(\EX[\hat\eta]) = \hat\rho_{ij}$. 
%Taking the Jacobian of $w$ gives
%$$\nabla w\begin{pmatrix}a_1 \\ a_2\\ a_3\\ a_4\\ a_5 \end{pmatrix} =   
%\begin{pmatrix}
%\frac{a_1a_5 - a_2a_3}{(a_3-a_1^2)^{3/2}(a_4-a_2^2)^{1/2}}\\ 
%\frac{a_2a_5 - a_1a_4}{(a_4-a_2^2)^{3/2}(a_3-a_1^2)^{1/2}}\\ 
%-\frac{a_5 - a_1a_2}{2(a_3-a_1^2)^{3/2}(a_4-a_2^2)^{1/2}}\\ 
%-\frac{a_5 - a_1a_2}{2(a_4-a_2^2)^{3/2}(a_3-a_1^2)^{1/2}}\\ 
%\frac{1}{2(a_4-a_2^2)^{1/2}(a_3-a_1^2)^{1/2}}\\ 
%\end{pmatrix}.
%$$
Applying the delta method with $w$ gives
\begin{equation}
    \label{eq:DropoutCorrelationConvergence}
    \sqrt{n}(\hat\rho_{ij} - \rho_{ij}) \xrightarrow{D}\N_1(0, \tau_{ij}(\nu)),
\end{equation}
where
\begin{equation}
\label{eq:AsymptoticVarNu}
\tau_{ij}(\nu) = \nabla w(\EX[\eta])^T A_5(\nu)\nabla w(\EX[\eta]).
\end{equation}
Carrying out the multiplication gives the asymptotic variance $\tau_{ij}(\nu)$ parameterized by elements of $
\nu$. In the case of the dropout model, any moments of $X$ are linear in moments of $Z$, for example,
$$\EX[X_iX_k^2X_j] = q_iq_kq_j \EX[Z_iZ_kZ_j].$$ Furthermore, any moments of $Z$, which is a Gaussian random variable, can be written as polynomials in the first and second order moments of $Z$, i.e., the elements of $\mu$ and $\Sigma$. Hence, after imposing the constraint that $\mu = 0$, we can reparameterize $\tau_{ij}(\nu)$ in terms of $\Sigma$ as
%Reparameterizing and simplifying gives 

$$\bar\tau_{ij}(\Sigma) =  \frac{1}{q_iq_j} + \frac{2(\frac{\sigma_{ij}}{\sqrt{\sigma_{ii}}\sqrt{\sigma_{jj}}})^2}{q_iq_j} - \frac{9(\frac{\sigma_{ij}}{\sqrt{\sigma_{ii}}\sqrt{\sigma_{jj}}})^2}{4q_j} - \frac{9(\frac{\sigma_{ij}}{\sqrt{\sigma_{ii}}\sqrt{\sigma_{jj}}})^2}{4q_i} + \frac{(\frac{\sigma_{ij}}{\sqrt{\sigma_{ii}}\sqrt{\sigma_{jj}}})^2}{2} + (\frac{\sigma_{ij}}{\sqrt{\sigma_{ii}}\sqrt{\sigma_{jj}}})^4,$$
where $\sigma_{kl} = (\Sigma)_{kl}$. The details of the computation are included in the supplementary Mathematica notebook.
Now, using $\rho_{ij} = \frac{\sigma_{ij}}{\sqrt{\sigma_{ii}}\sqrt{\sigma_{jj}}}$ we can reparameterize $\bar\tau_{ij}(\Sigma)$ once more to obtain
$$\tilde\tau_{ij}(\rho_{ij}) = \frac{1}{q_iq_j} + \frac{2\rho_{ij}^2}{q_iq_j} - \frac{9\rho_{ij}^2}{4q_j} - \frac{9\rho_{ij}^2}{4q_i} + \frac{\rho_{ij}^2}{2} + \rho_{ij}^4.$$
Hence, in the $\mu=0$ case, we can rewrite~\eqref{eq:DropoutCorrelationConvergence} as
$$\sqrt{n}(\hat\rho_{ij} - \rho_{ij}) \xrightarrow{D}\N_1(0, \tilde\tau_{ij}(\rho_{ij})).$$

In order to find a variance stabilizing transform for $\rho_{ij}$, we can now solve 
$$z_{ij}(\rho) = \int \frac{1}{\sqrt{\tilde\tau_{ij}(\rho)}} d\rho + C$$
with $C$ chosen such that $z_{ij}(0) = 0$.
%since applying the delta method with this gives
Then, by Corollary~\ref{cor:transform}, we will have
$$\sqrt{n}\Big(z_{ij}(\hat\rho_{ij}) - z_{ij}(\rho_{ij})\Big) \xrightarrow{D}\N_1(0, 1).$$

There is no closed form for $z_{ij}(\rho)$ in this case. However, it can written as
\begin{equation}
\label{eq:dropoutTransform}
z_{ij}(\rho) = -\textbf{i} \frac{\sqrt{1 - \frac{8q_iq_j\rho^2}{z_+}}\sqrt{1 + \frac{8q_iq_j\rho^2}{z_-}} \mathlarger{\mathlarger{\int}}_0^{\textbf{i}\arcsin{2\rho\sqrt{\frac{2q_iq_j}{z_+}}}} {({1 + \frac{z_+}{z_-}\sin^2{\theta}})^{-\frac12}d\theta}}  
{\sqrt{(\frac{2q_iq_j}{z_+})  (\frac{4 + (8-9q_i-9q_j + 2q_iq_j)\rho^2  + 4 q_iq_j\rho^4}{q_iq_j})} },\\
\end{equation}
where $\textbf{i} = \sqrt{-1},$ and
\begin{align*}
z_+ &= +8 - 9q_i  - 9q_j  + 2q_iq_j + \sqrt{-64q_iq_j + (8 - 9q_i -9q_j + 2q_iq_j)^2},\\
z_- &= -8 + 9q_i  + 9q_j  - 2q_iq_j + \sqrt{-64q_iq_j + (8 - 9q_i -9q_j + 2q_iq_j)^2}.\\
\end{align*}
The integral that appears in the expression of $z_{ij}(\rho)$ is the elliptic integral of the first kind, and can be computed numerically.

\subsection{Conditions for the Dropout Stabilizing Transform}
\label{appendix:DropoutStabilizingTransformConditionPlots}
As mentioned in Section~\ref{section:implementation}, the dropout stabilizing transform only exists when $\mu=0$ and $K = \emptyset$.
%the means $\mu$ are assumed to be zero. 
If the derivation was done with non-zero means, it would not have been possible to reparameterize the asymptotic variance of the correlations $\tau_{ij}(\nu)$ in terms of only the correlation $\rho_{ij}$ to satisfy the conditions of Corollary~\ref{cor:transform}. In Figure~\ref{fig:TauOfMu}, we demonstrate the dependence of $\tau_{ij}(\nu)$ from equation \eqref{eq:AsymptoticVarNu} for fixed $\rho_{ij}$ on $\sigma:=\sigma_{ii} =\sigma_{jj}$, which are elements of $\nu$, when $\mu\neq 0$. This can be additionally verified through the supplementary Mathematica notebook. 
%To generate Figure~\ref{fig:TauOfMu} the correlation $\rho_{ij}$ was fixed and the parameters $\sigma := \sigma_{ii} = \sigma_{jj}$ varied over the range $[1,5]$. 
Figures~\ref{fig:TauOfMu}(a,b,d,e) show that $\tau_{ij}$ is still dependent on elements of $\nu$, even for a fixed correlation $\rho_{ij}$ for $q\neq 1$ when $\mu\neq 0$, and hence a transform of the kind in Corollary~\ref{cor:transform} does not exist for $\mu \neq 0$. For $q=1$, i.e.~no dropout, the dropout model reduces to the measurement-error-free Gaussian, and $\tau_{ij}$ no longer depends on $\mu$ and $\sigma$ for fixed $\rho_{ij}$. In this case, a transform of the kind in Corollary~\ref{cor:transform} does exist and as shown in~\cite{lehmann}, it is the Fisher z-transform.

\begin{figure}[H]
	\centering
%	\vspace{-0.3cm}
	\subfigure[$\mu = 2$]{\includegraphics[width=.32\linewidth]{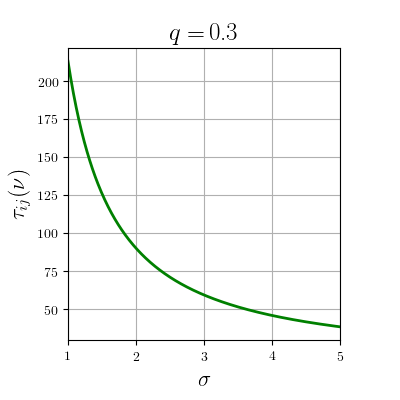}}
	\subfigure[$\mu = 2$]{\includegraphics[width=.32\linewidth]{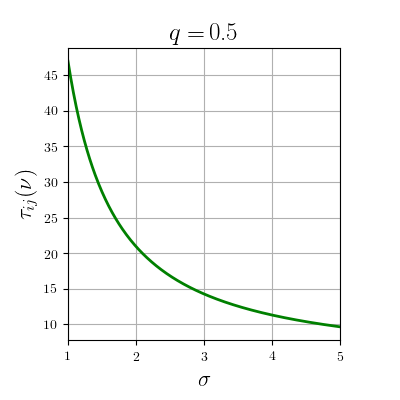}}
	\subfigure[$\mu = 2$]{\includegraphics[width=.32\linewidth]{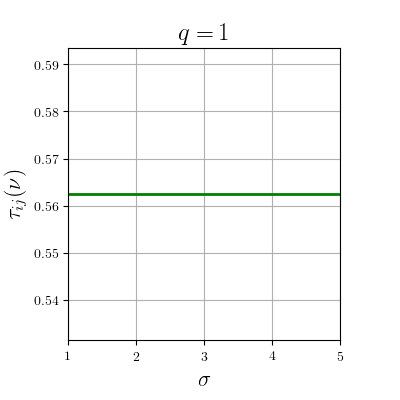}} \\

	\subfigure[$\mu = 1$]{\includegraphics[width=.32\linewidth]{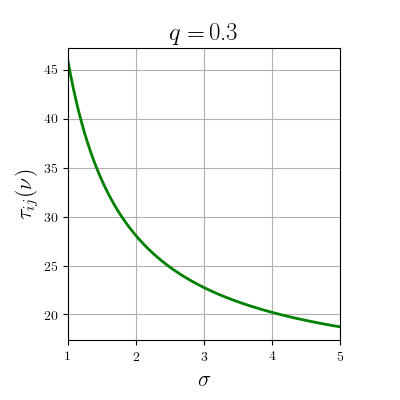}}
	\subfigure[$\mu = 1$]{\includegraphics[width=.32\linewidth]{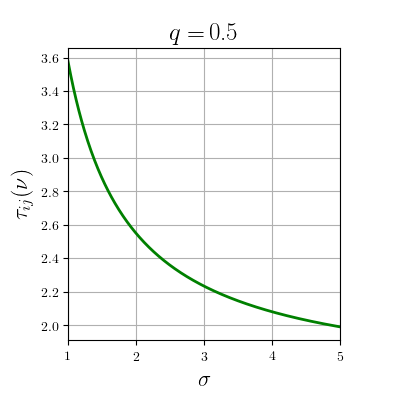}}
	\subfigure[$\mu = 1$]{\includegraphics[width=.32\linewidth]{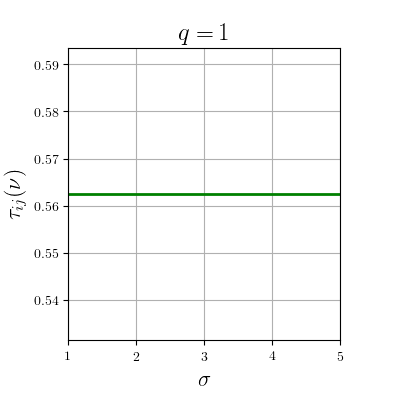}} \\
	
	\subfigure[$\mu = 0$]{\includegraphics[width=.32\linewidth]{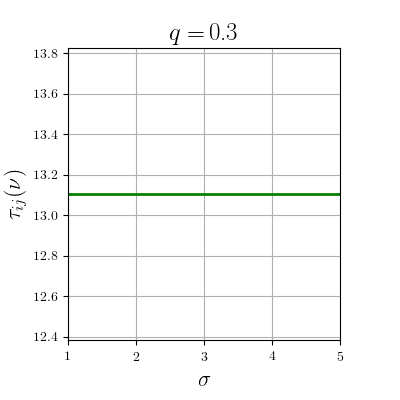}}
	\subfigure[$\mu = 0$]{\includegraphics[width=.32\linewidth]{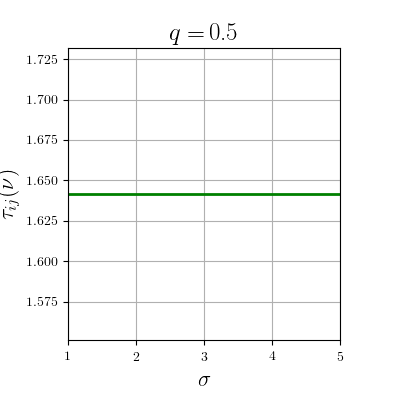}}
	\subfigure[$\mu = 0$]{\includegraphics[width=.32\linewidth]{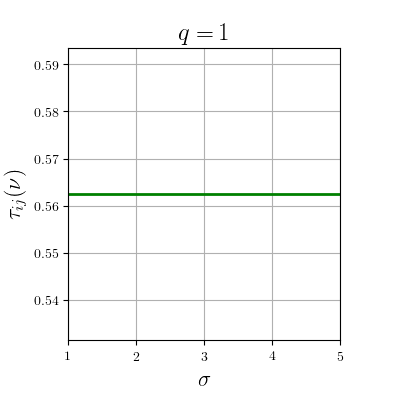}}
%	\vspace{-0.1cm}
	\caption{Plots of $\tau_{ij\cdot\emptyset}$ when $\rho_{ij}$ is fixed to $0.5$, $\mu\in\{0,1,2\}$, with $\sigma$ allowed to vary. This shows that we cannot reparameterize $\tau_{ij}$ as a function of only $\rho_{ij}$ for non-zero mean, unless $q=1$. For $q=1$ the transform corresponding to Corollary~\ref{cor:transform} is the Fisher's z-transform.}
	\label{fig:TauOfMu}
	\vspace{-0.3cm}
\end{figure}

\subsection{Derivation of the Dropout Normalizing Transform}

\label{appendix:DerivationOfNormalizing}
In this section we give a way to compute the \textit{dropout normalizing transform} corresponding to Corollary~\ref{cor:variance} under the dropout model. We begin by showing how to compute the asymptotic variance of the partial correlations, $\tau_{ij\cdot K}(\nu)$.

% Recall, in the dropout model introduced in Section~\ref{section:model}, we assume $Z\sim\N_p(\mu,\Sigma)$ is defined to satisfy~\ref{asmp:a1}. We define the noisy observation vector $X$ as
% \begin{equation}
% X_i = F_i(Z_i) = \begin{cases} 
% Z_i & w.p \quad q_i\\
% 0   & w.p \quad 1-q_i
% \end{cases}%  \quad \textrm{where}\quad q_i = 1 - e^{-\lambda\mu_i^2} 
% \quad\textrm{for all}\quad i\in[p],
% \end{equation}
% with $q_i\in(0,1].$
% Note that assumption \ref{asmp:a2} is satisfied since each $X_i$ is independent of all other variables given its parent $Z_i$. We can find the moments of $Z$ in terms of the moments of $X$:
% \begin{equation}
% \EX[X_i] = q_i \mu_i, \quad \EX[X_{i}^2] = q_i \mu_{ii},\quad \EX[X_iX_j] = q_{i}q_j\mu_{ij}\\
% \end{equation}
% for all $i,j\in[p]$ with $i\neq j$, where we defined $\mu_{ij} := \EX[Z_iZ_j].$
% From this, we can see that assumption \ref{asmp:a3} is satisfied with
% \begin{equation}
% \label{eq:DropoutEta}
% \eta_i := X_i,\quad \eta_{ii} := X_i^2, \quad \eta_{ij} := X_iX_j,
% \end{equation}
% and
% \begin{equation}
% \label{eq:DropoutG}
% g_{i}(y) := \frac{y}{q_i},\quad g_{ii}(y):= \frac{y}{q_{ii}}, \quad g_{ij}(y) := \frac{y}{q_iq_j}.
% \end{equation}

In the proof of Lemma~\ref{lemma:CorrToPcorrDelta}, we showed that if we know the continuous function $A_\rho(\nu)$ such that
\begin{equation}
    \sqrt{n}(\hat\rho - \rho) \xrightarrow{D}\N_{|\rho|}\Big(0, A_\rho(\nu)\Big),
\end{equation}
then we can recursively compute the function $\tau_{ij\cdot K}(\nu)$ beginning with the matrix $A_\rho(\nu)$. Hence, to give a way to compute $\tau_{ij\cdot K}$ for the dropout model, it is sufficient to describe the elements of the  matrix $A_{\rho}(\nu)$ and thus we find a formula for each element of the $A_{\rho}(\nu)$ matrix. First, recall that the elements of $A_\rho(\nu)$ correspond to the covariances of the sample correlations of the latent variables $Z$ estimated in step 3 of Algorithm~\ref{alg:general}. That is, each element of $A_{\rho}(\nu)$ will correspond to the asymptotic covariance of $\sqrt{n}\hat\rho_{ab}$ and $\sqrt{n}\hat\rho_{cd}$ for some $a,b,c,d\in[p]$ such that $a\neq b$ and $c\neq d$.
There are three different cases for each entry in $A_{\rho}(\nu)$,
%this entry, 
corresponding to different cases of $a,b,c,d$:
\begin{enumerate}
\item[(i)] $\{a,b\} = \{c,d\}$ are distinct, and the element is along the diagonal, corresponding to the asymptotic variance of $\sqrt{n}\hat\rho_{ab}$,
\item[(ii)] $a\not\in\{c,d\}$ and $b\in\{c,d\}$,
\item[(iii)] all of $a,b,c,d$ are distinct.
\end{enumerate}
To analyze all three cases, it is sufficient to take four arbitrary, but fixed distinct $i,j,k,l\in[p]$. We begin by noting that for the dropout model, we can write
\begin{align}
\label{eq:corrFromMoments}
\hat{\rho}_{ij}
&=\frac{\hat{\mu}_{ij} - \hat\mu_i\hat\mu_j}{\sqrt{\hat\mu_{ii}-\hat\mu_i^2 } \sqrt{\hat\mu_{jj} - \hat\mu_j^2 } }\nonumber\\
&= \frac{\frac{1}{q_iq_j}\EX[\hat\eta_{ij}]-\frac{1}{q_i}\EX[\hat\eta_i]\frac{1}{q_j}\EX[\hat\eta_j]}
{
\sqrt{\frac{1}{q_i}\EX[\hat{\eta}_{ii}]-(\frac{1}{q_i}\EX[\hat{\eta}_i])^2}
\sqrt{\frac{1}{q_j}\EX[\hat{\eta}_{jj}]-(\frac{1}{q_j}\EX[\hat{\eta}_j])^2}
}\nonumber\\
&= \frac{\EX[\hat{\eta}_{ij}]-\EX[\hat{\eta}_i]\EX[\hat{\eta}_j]}{\sqrt{q_i\EX[\hat{\eta}_{ii}]-\EX[\hat{\eta}_i]^2}\sqrt{q_j\EX[\hat{\eta}_{jj}]-\EX[\hat{\eta}_j]^2}}.\nonumber\\
\end{align}

Define 
$$\hat{\rho} = 
(\hat{\rho}_{ij}\;
\hat{\rho}_{ik}\;
\hat{\rho}_{il}\;
\hat{\rho}_{jk}\;
\hat{\rho}_{jl}\;
\hat{\rho}_{kl})^T,
$$
as the vector of estimated correlations of $X_i,X_j,X_k,X_l$ obtained from $\EX[\hat\eta]$ by~\eqref{eq:corrFromMoments}. Similarly, let
$$\rho = (\rho_{ij}\;
\rho_{ik}\;
\rho_{il}\;
\rho_{jk}\;
\rho_{jl}\;
\rho_{kl})^T$$
be the analogous vector of true correlations. In the next part of the derivation, we will apply the delta method to the vectors of moments of the monomials in $X_i$, $X_j,X_k,$ and $X_l$ of Assumption~\ref{asmp:a2}, to obtain the asymptotic distribution of the vector $\hat\rho$, as in the proof of Lemma~\ref{lemma:CorrDelta},  We begin by defining the vector of relevant monomials in $X_i, X_j, X_k,$ and $X_l$ as
$$\eta = \big{(}\eta_i\; \eta_j\; \eta_k \;  \eta_l \; \eta_{ii} \; \eta_{ij} \; \eta_{ik}\;  \eta_{il}  \; \eta_{jj}  \; \eta_{jk}  \; \eta_{jl} \; \eta_{kk} \; \eta_{kl} \; \eta_{ll}\big{)}^T$$
where the components are defined for our model in equation~\eqref{eq:DropoutEta}. Then, by the Central Limit Theorem, we have
\begin{equation}
    \sqrt{n}(\EX[\hat\eta] - \EX[\eta])\xrightarrow{D}\N_{14}\Big(0, A_{14}(\nu)\Big)
\end{equation}
where $A_{14}(\nu)$ is the covariance matrix of the vector $\eta$, and $\nu$ is the vector of all first and second order moments of $\eta$.
%\begin{equation}
%\label{eq:RandomVectorOfMoments}
%\begin{pmatrix}
%X_i\\
%X_j\\
%X_k\\
%X_l\\
%X_iX_i\\
%X_iX_j\\
%X_iX_k\\
%X_iX_l\\
%X_jX_j\\
%X_jX_k\\
%X_jX_l\\
%X_kX_k\\
%X_kX_l\\
%X_lX_l
%\end{pmatrix},
%\end{equation}
To obtain the convergence result stated in the Lemma~\ref{lemma:CorrDelta}, we define the function $w_\rho:\mathbb{R}^{14}\rightarrow\mathbb{R}^6$ based on \eqref{eq:corrFromMoments} such that
$w_\rho(\EX[\eta]) = \rho$ and $w_{\rho}(\EX[\hat\eta]) = \hat\rho$. 
Then 
\begin{equation}
\label{eq:DeltaWRho}
\sqrt{n}(\hat{\rho} - \rho) \xrightarrow{D}\mathcal{N}_6\Big(0,\nabla w_\rho(\EX[\eta])^{T}A_{14}(\nu)\nabla w_\rho(\EX[\eta])\Big).
\end{equation}

Since the moments in $\eta$ are included in the vector of moments $\nu$,  we can define 
\begin{equation}
\label{eq:MatrixMul}
A_6(\nu):=\nabla w_\rho(\EX[\eta])^{T}A_{14}(\nu)\nabla w_\rho(\EX[\eta]).
\end{equation}

The explicit form of $A_6(\nu)$ can be found by carrying out the matrix multiplication in \eqref{eq:MatrixMul}. Before performing the matrix multiplication, we note that for the dropout model, we can write any moment of $X$ as a linear function of a moment of $Z$, for example,
$$\EX[X_iX_k^2X_j] = q_iq_kq_j \EX[Z_iZ_kZ_j].$$ 
Furthermore, since $Z$ is a Gaussian random vector and all moments of a Gaussian random vector can be written in terms of its first and second order moments, we can parameterize the asymptotic covariance with the moments of the Gaussian $Z$ as
$$\bar{A}(\mu,\Sigma) := A_6(\eta).$$

For each entry in $A_6(\eta)$, we list the three cases mentioned previously in terms of the parameterization as $\bar{A}(\mu,\Sigma)$. The full computation is carried out in the supplementary Mathematica notebook. We use the notation $\sigma_{ij}:= (\Sigma)_{ij}$
to denote the elements of $\Sigma$. For $a,b,c,d \in \{i,j,k,l\}$, 
\begin{enumerate}
    \item[(i)] If the element corresponds to the asymptotic covariance of $\sqrt{n}\hat\rho_{ab}$ and $\sqrt{n}\hat\rho_{cd}$ with $\{a,b\} = \{c,d\}$, then it is equal to
  \begin{align*}
&\frac{1}{\sigma_{aa}\sigma_{bb}}\big{(} \frac{\sigma_{aa}\sigma_{bb}}{q_aq_b}\\
&+(-\mu_a^2\sigma_{bb} -\mu_a^2\mu_b^2 -4\mu_a\mu_b\sigma_{ab}+\frac{\mu_b^4\sigma_{ab}^2}{4\sigma_{bb}^2} + \frac{\mu_{b}^2\sigma_{ab}^2}{2\sigma_{bb}})\frac{1}{q_b}\\
&+(-\mu_b^2\sigma_{aa} -\mu_a^2\mu_b^2 -4\mu_a\mu_b\sigma_{ab}+\frac{\mu_a^4\sigma_{ab}^2}{4\sigma_{aa}^2} + \frac{\mu_{a}^2\sigma_{ab}^2}{2\sigma_{aa}})\frac{1}{q_a}\\
&+(\mu_a^2\sigma_{bb}+\mu_b^2\sigma_{aa}+\mu_a^2\mu_b^2+4\mu_a\mu_b\sigma_{ab}+2\sigma_{ab}^2)\frac{1}{q_aq_b}\\
&+(-\frac{\mu_a^4 \sigma_{ab}^2}{4\sigma_{aa}^2}-\frac{\mu_a^2\sigma_{ab}^2}{2\sigma_{aa}}-\frac{\mu_b^4 \sigma_{ab}^2}{4\sigma_{bb}^2}-\frac{\mu_b^2\sigma_{ab}^2}{2\sigma_{bb}} +\frac{\sigma_{ab}^4}{\sigma_{aa}\sigma_{bb}} + \mu_a^2\mu_b^2 + 4\mu_a\mu_b\sigma_{ab} + \frac{\sigma_{ab}^2}{2})\\
&-\frac{9\sigma_{ab}^2}{4}(\frac{1}{q_a}+\frac{1}{q_b})\big{)}\\
  \end{align*}

\item[(ii)] If the element corresponds to the asymptotic covariance of $\sqrt{n}\hat\rho_{ab}$ and $\sqrt{n}\hat\rho_{cd}$ with $a\not\in\{c,d\}$ and $b=d$, then the element is equal to
\begin{align*}
        &\bigg{(}2q_b\sigma_{bb}\sigma_{bc}(\sigma_{ab}\sigma_{ac}^2\sigma_{bb}-2\sigma_{aa}\sigma_{ac}\sigma_{bb}\sigma_{bc} + \sigma_{aa}\sigma_{ab}\sigma_{bc}^2)\\
        &- \sigma_{cc}(1 - q_b)\mu_{b}^4\sigma_{aa}\sigma_{ab}\sigma_{bc}\\
        &+2 (1 - q_b) \mu_{b}^2\sigma_{aa}\sigma_{bb}\sigma_{cc}\sigma_{ab}\sigma_{bc}\\
        &- (1 + q_b) \sigma_{bb}^2\sigma_{cc}\sigma_{aa}\sigma_{ab}\sigma_{bc}\\
        &+\sigma_{cc}\sigma_{bb}^2\sigma_{aa}(\sigma_{ab}\sigma_{bc} - 4\sigma_{ac}\sigma_{bb})\\
        & - 4 \sigma_{bb}^2\sigma_{ac}\sigma_{cc}(q_b\sigma_{ab}^2 + (q_b-1)\mu_b^2\sigma_{aa} + \sigma_{bb}\sigma_{aa})\bigg)\frac{1}{4q_b\sigma_{bb}^3\sqrt{(\sigma_{aa}\sigma_{cc})^3}}
\end{align*}

\item[(iii)] If the element corresponds to the asymptotic covariance of $\sqrt{n}\hat\rho_{ab}$ and $\sqrt{n}\hat\rho_{cd}$ with $a,b,c,d$ distinct, then the element is equal to
\begin{align*}
   &\bigg{(}\sigma_{cc}\sigma_{cd}(\sigma_{ab}\sigma_{ad}^2\sigma_{bb}- 2\sigma_{aa}\sigma_{ad}\sigma_{bb}\sigma_{bd}+\sigma_{aa}\sigma_{ab}\sigma_{bd}^2)\\
   &+ \sigma_{aa}\sigma_{bb}(\sigma_{ad}\sigma_{bc}\sigma_{cc} + \sigma_{ac}\sigma_{bd}\sigma_{cc} - \sigma_{ac}\sigma_{bc}\sigma_{cc} - \sigma_{ad}\sigma_{bc}\sigma_{cd})\sigma_{dd}\\
   &- 2\sigma_{ac}\sigma_{ad}\sigma_{bb]}\sigma_{cc}\sigma_{dd}\sigma_{ab}\\
   &+ \sigma_{ac}^2\sigma_{bb}\sigma_{cd}\sigma_{dd}\sigma_{ab}\\
   &+\sigma_{aa}\sigma_{bc}\sigma_{dd}\sigma_{ab}(\sigma_{bc}\sigma_{cd}-2\sigma_{bd}\sigma_{cc})\bigg{)}\frac{1}{2\sqrt{(\sigma_{aa}\sigma_{bb}\sigma_{cc}\sigma_{dd})^3}}
\end{align*}
\end{enumerate}

These expressions can be used as the base case of the recursion~\eqref{eq:recursion}, which allows us to indirectly find $\tau_{ij\cdot K}$ for any $i,j \in[p]$ and $K\subseteq[p]\setminus\{i,j\}$ under the dropout model. 

The dropout normalizing transform is then computed using
$$ \zeta_{ij\cdot K}(\hat{\rho}, \hat\nu) := {\hat{\rho}}/{\sqrt{\tau_{ij\cdot K}({\hat\nu})}}.$$

\subsection{Derivation of the Dropout Normalizing Transform with Shrinkage}

\label{appendix:DerivationOfNormalizingShrinkage}
In this section, we derive the dropout normalizing transform when the partial correlations are estimated from the shrinkage matrix $\hat\Lambda$ for a fixed shrinkage coefficient $\alpha$. The derivation closely follows Section~\ref{appendix:DerivationOfNormalizing}. First recall that from Section~\ref{section:implementation}
$$\hat\Lambda = (1-\alpha)\hat\Sigma  + \alpha\hat{S}$$ 
where $\hat{S} = \sum_{i=1}^nX^{(i)}X^{(i)T}$ is covariance matrix of the observations of $X$.
%$$\hat{S} = \sum_{i=1}^nX^{(i)}X^{(i)T}.$$
Then, denoting $\hat\lambda_{ij} =(\hat\Lambda)_{ij},$ we can express correlations as
\begin{align}
\label{eq:LambdaCorr}
\hat\rho_{ij} &=
\frac{\hat\lambda_{ij}}{\sqrt{\hat\lambda_{ii\phantom{j}}}\sqrt{\hat\lambda_{jj}}}\nonumber\\
\end{align}
where the elements of $\hat\Lambda$ are
\begin{align}
\label{eq:EtaToLambda}
\hat\lambda_{ij} &= (1-\alpha)\Big(\frac{1}{q_iq_j}\EX[\hat{\eta}_{ij}]-\frac{1}{q_i}\EX[\hat{\eta}_i]\frac{1}{q_j}\EX[\hat{\eta}_j]\Big)+\alpha\Big(\EX[\hat\eta_{ij}] -\EX[\hat\eta_i]\EX[\hat\eta_j]\Big)\nonumber,\\
\hat\lambda_{ii} &= (1-\alpha)\Big(\frac{1}{q_i}\EX[\hat{\eta}_{ii}]-(\frac{1}{q_i}\EX[\hat{\eta}_i])^2\Big) + \alpha\Big(\EX[\hat\eta_{ii}] - \EX[\hat\eta_i]^2\Big)\nonumber,\\
\hat\lambda_{jj} &= (1-\alpha)\Big(\frac{1}{q_j}\EX[\hat{\eta}_{jj}]-(\frac{1}{q_j}\EX[\hat{\eta}_j])^2\Big) + \alpha\Big(\EX[\hat\eta_{jj}] - \EX[\hat\eta_j]^2\Big) \nonumber.\\
\end{align}
We can define the function $w_{\rho}$ of equation~\eqref{eq:DeltaWRho} based on equations~\eqref{eq:LambdaCorr} and~\eqref{eq:EtaToLambda} and proceed as in Section~\ref{appendix:DerivationOfNormalizing} to derive the corresponding elements of the asymptotic covariance matrix. The derivation of the dropout normalizing transform with shrinkage, in addition to the result is shown in the supplementary Mathematica notebook. Note that in this case, the elements of the asymptotic covariance matrix of $\sqrt{n}\hat\rho$ will be functions of $\alpha$.

%\begin{figure}[t!]
%	\centering
%	\includegraphics[width = 0.32\linewidth]{}
%	\caption{The correlation transform for the dropout model for different values of $q_i = q_j = q$. For $q=1$, this reduces to the Fisher z-transform.}
%	\label{fig:ztransform}
%	\vspace{-0.3cm}
%\end{figure}

\section{Experiments}
\label{appendix:MorePlots}
We include additional simulation results for varying $p\in\{10, 30\}$, $n \in \{1000, 2000, 10000, 50000\}$ and $d \in \{2, 3, 5\}$. Specifically, we evaluate the estimated skeleton as well as the CPDAG in recapitulating the true DAG $\G$ using ROC curves and SHD. For the majority of the settings, the dropout stabilizing transform outperforms the naive Gaussian CI test applied on the corrupted data. As pointed out in Section~\ref{section:simulations}, both dropout transforms tend to outperform the Gaussian CI test when the number of samples is high. In plotting the ROC curve for the CPDAG for $p\in\{10, 30\}$, we consider an undirected edge in the CPDAG a true positive if a directed edge exists in $\G$ in either direction, and a false positive otherwise. We consider a directed edge in the CPDAG a true positive if a directed edge of the same direction exists in $\G$, and a false positive otherwise.

We also include the inferred gene regulatory network for the pancreatic type II diabetes data set, collected with inDrop single-cell RNA-seq technology. We use the dropout stabilizing transform and Algorithm~\ref{alg:general} to obtain causal relationships between latent genes. \newpage

\begin{figure}[H]
	\centering
	\vspace{-0.3cm}
	\subfigure[]{\includegraphics[width=.24\linewidth]{figures/SKELp10n1000d3.png}}
	\subfigure[]{\includegraphics[width=.24\linewidth]{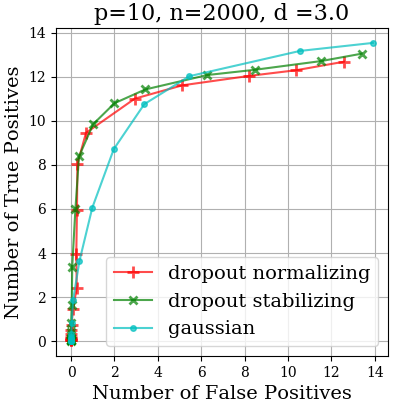}}
	\subfigure[]{\includegraphics[width=.24\linewidth]{figures/SKELp10n10000d3.png}} 
	\subfigure[]{\includegraphics[width=.24\linewidth]{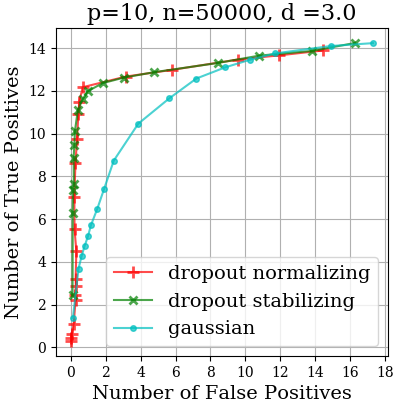}} \\
	%\vspace{-0.7cm}
	\subfigure[]{\includegraphics[width=.24\linewidth]{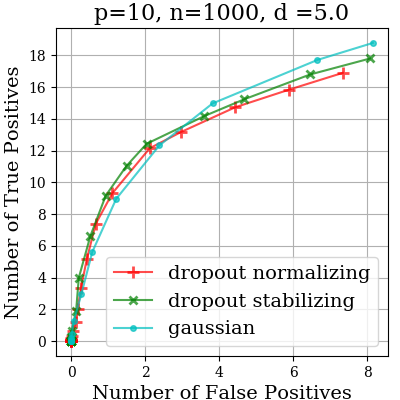}}
	\subfigure[]{\includegraphics[width=.24\linewidth]{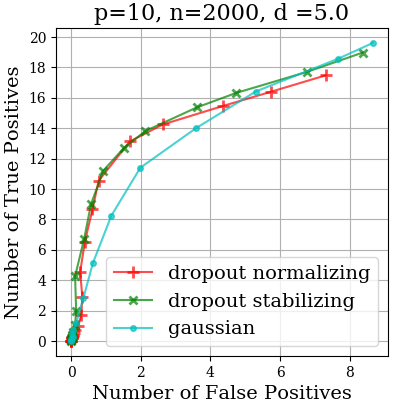}}
	\subfigure[]{\includegraphics[width=.24\linewidth]{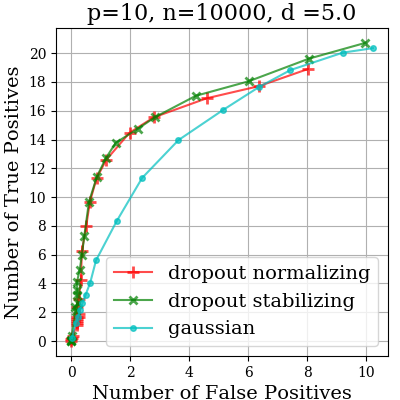}} 
	\subfigure[]{\includegraphics[width=.24\linewidth]{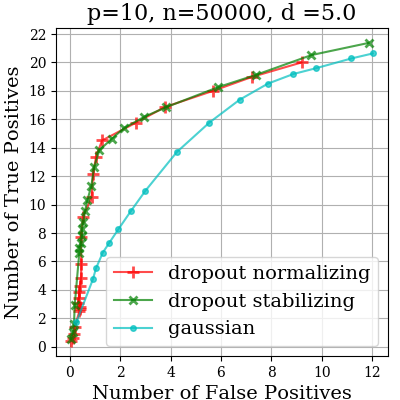}} 
%	\vspace{-0.1cm}
	%\vspace{0.7cm}
	\caption{ROC curves for evaluating the estimated skeleton of the true DAG using dropout stabilizing transform, dropout normalizing transform, and Gaussian CI test in simulations with $p=10$ and $n \in \{1000, 2000, 10000, 50000\}$ and $d \in \{3, 5\}$.}
\end{figure}

\begin{figure}[H]
	\centering
	\vspace{-0.3cm}
	\subfigure[]{\includegraphics[width=.24\linewidth]{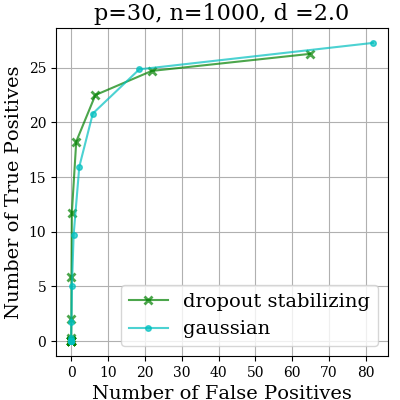}}
	\subfigure[]{\includegraphics[width=.24\linewidth]{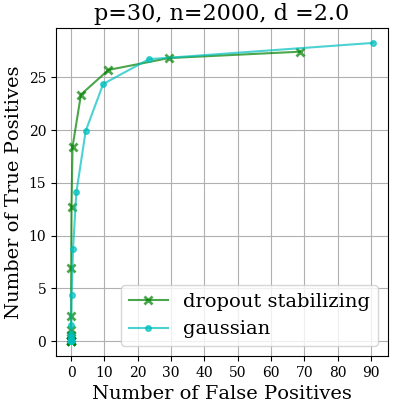}}
	\subfigure[]{\includegraphics[width=.24\linewidth]{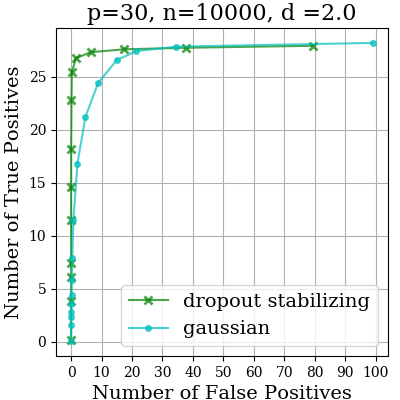}} 
	\subfigure[]{\includegraphics[width=.24\linewidth]{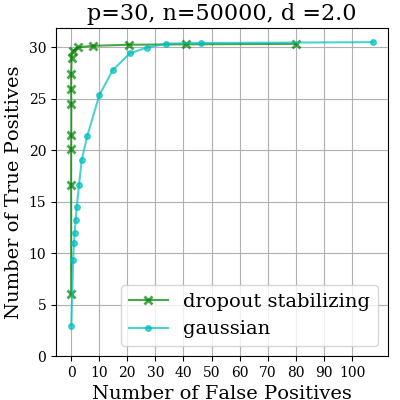}} \\
	%\vspace{0.7cm}
	\subfigure[]{\includegraphics[width=.24\linewidth]{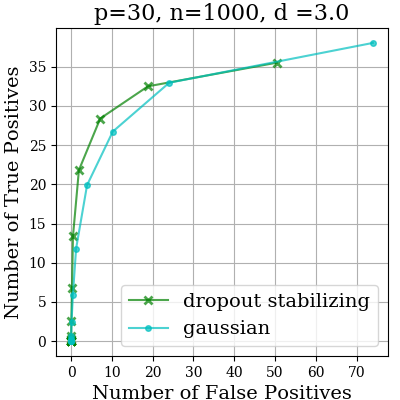}}
	\subfigure[]{\includegraphics[width=.24\linewidth]{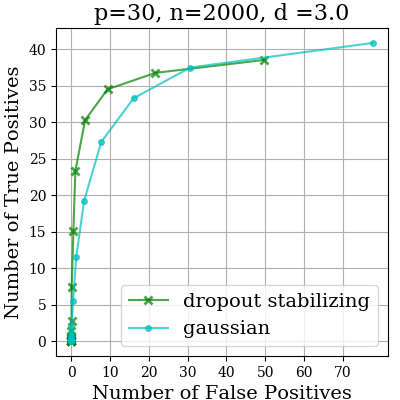}}
	\subfigure[]{\includegraphics[width=.24\linewidth]{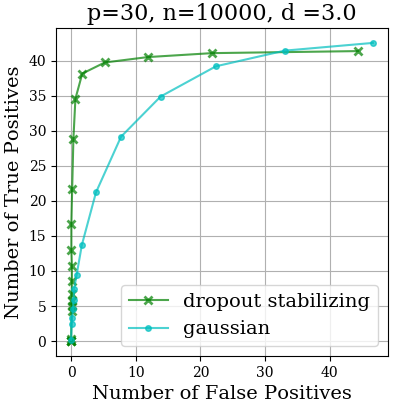}} 
	\subfigure[]{\includegraphics[width=.24\linewidth]{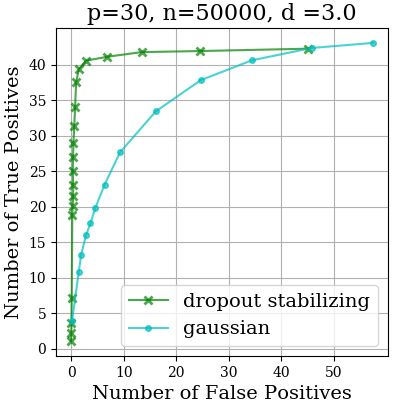}} 
%	\vspace{-0.1cm}
	%\vspace{0.7cm}
	\caption{ROC curves for evaluating the estimated skeleton of the true DAG using dropout stabilizing transform and Gaussian CI test in simulations with $p=30$ and $n \in \{1000, 2000, 10000, 50000\}$ and $d \in \{2, 3\}$.}
\end{figure}

\begin{figure}[H]
	\centering
	\vspace{-0.3cm}
	\subfigure[]{\includegraphics[width=.24\linewidth]{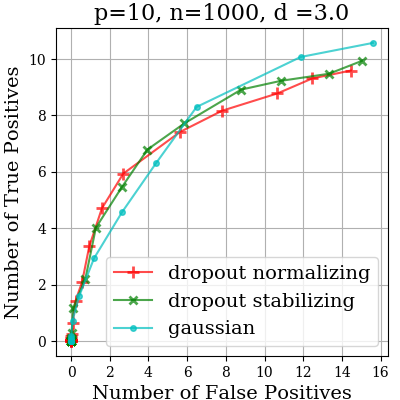}}
	\subfigure[]{\includegraphics[width=.24\linewidth]{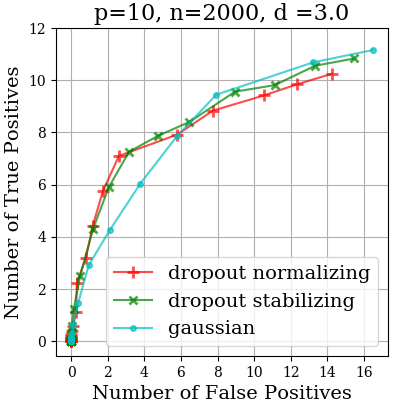}}
	\subfigure[]{\includegraphics[width=.24\linewidth]{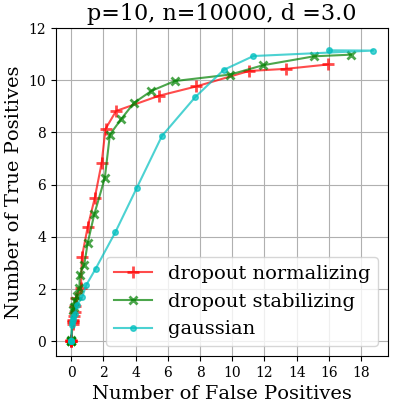}} 
	\subfigure[]{\includegraphics[width=.24\linewidth]{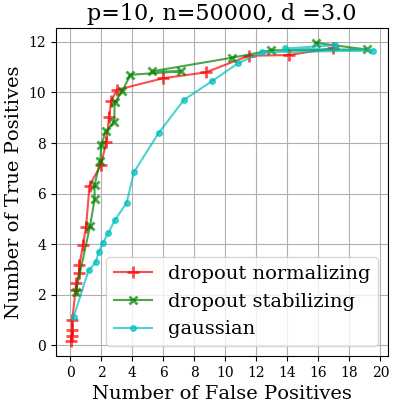}} \\
	%\vspace{-0.7cm}
	\subfigure[]{\includegraphics[width=.24\linewidth]{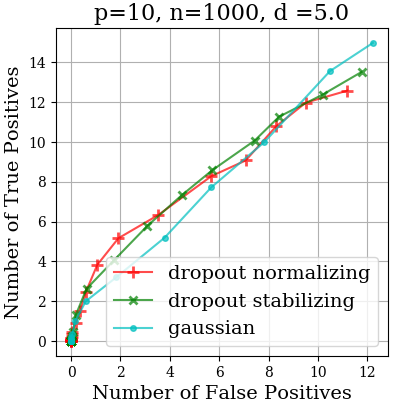}}
	\subfigure[]{\includegraphics[width=.24\linewidth]{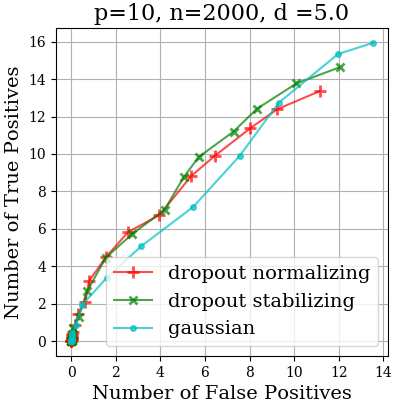}}
	\subfigure[]{\includegraphics[width=.24\linewidth]{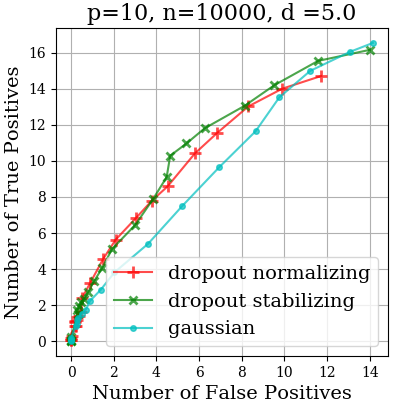}} 
	\subfigure[]{\includegraphics[width=.24\linewidth]{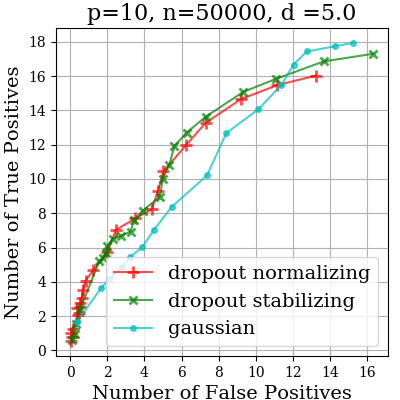}} 
	%\vspace{-0.7cm}
	\caption{ROC curves for evaluating the estimated CPDAG of the true DAG using dropout stabilizing transform, dropout normalizing transform, and Gaussian CI test in simulations with $p=10$ and $n \in \{1000, 2000, 10000, 50000\}$ and $d \in \{3, 5\}$.}
\end{figure}

\begin{figure}[H]
	\centering
	\vspace{-0.3cm}
	\subfigure[]{\includegraphics[width=.24\linewidth]{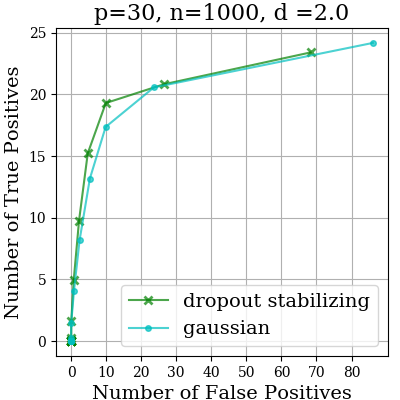}}
	\subfigure[]{\includegraphics[width=.24\linewidth]{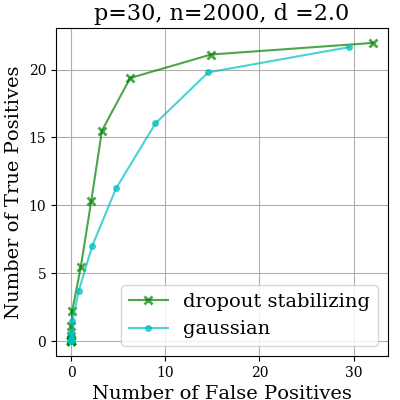}}
	\subfigure[]{\includegraphics[width=.24\linewidth]{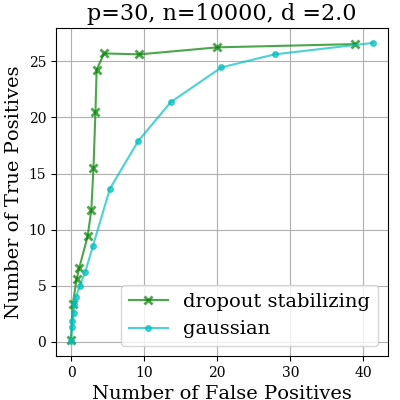}} 
	\subfigure[]{\includegraphics[width=.24\linewidth]{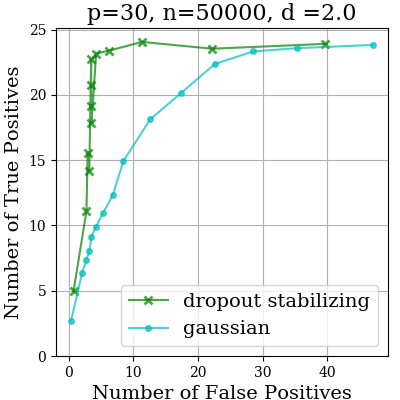}}\\
	%\vspace{-0.7cm}
	\subfigure[]{\includegraphics[width=.24\linewidth]{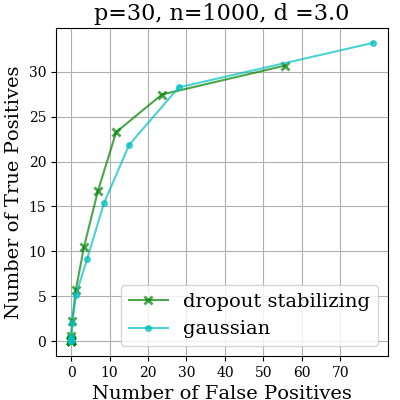}}
	\subfigure[]{\includegraphics[width=.24\linewidth]{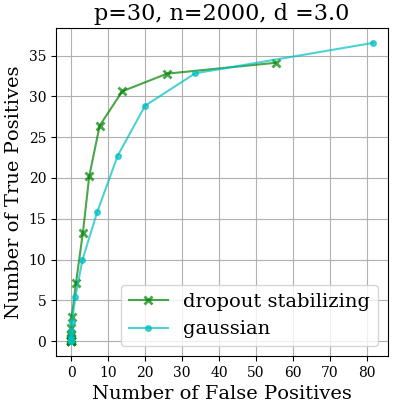}}
	\subfigure[]{\includegraphics[width=.24\linewidth]{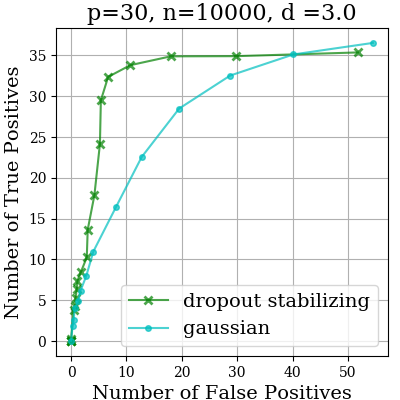}} 
	\subfigure[]{\includegraphics[width=.24\linewidth]{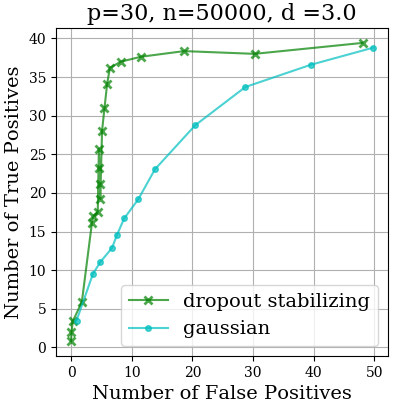}} 
	%\vspace{-0.7cm}
	\caption{ROC curves for evaluating the estimated CPDAG of the true DAG using dropout stabilizing transform and Gaussian CI test in simulations with $p=30$ and $n \in \{1000, 2000, 10000, 50000\}$ and $d \in \{2, 3\}$.}
\end{figure}

\begin{figure}[H]
	\centering
	\vspace{-0.3cm}
	\subfigure[]{\includegraphics[width=.24\linewidth]{figures/SHDSKELp10n1000d3.png}}
	\subfigure[]{\includegraphics[width=.24\linewidth]{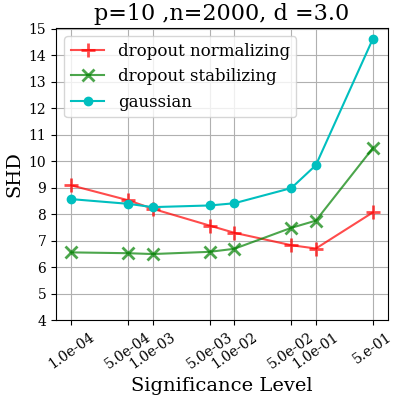}}
	\subfigure[]{\includegraphics[width=.24\linewidth]{figures/SHDSKELp10n10000d3.png}} 
	\subfigure[]{\includegraphics[width=.24\linewidth]{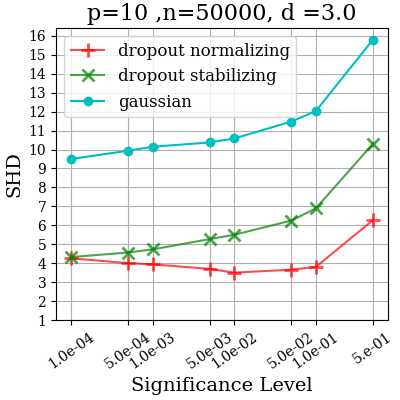}} \\
	%\vspace{0.7cm}
	\subfigure[]{\includegraphics[width=.24\linewidth]{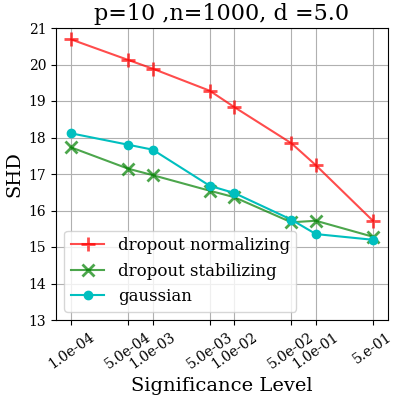}}
	\subfigure[]{\includegraphics[width=.24\linewidth]{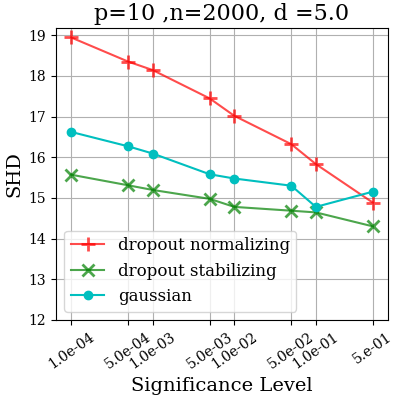}}
	\subfigure[]{\includegraphics[width=.24\linewidth]{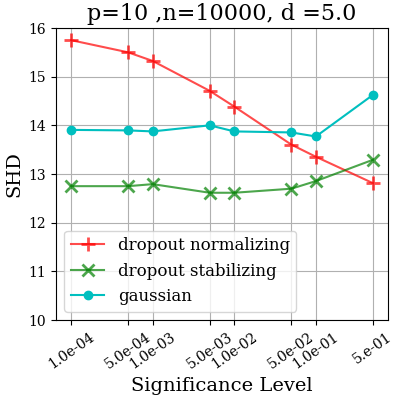}} 
	\subfigure[]{\includegraphics[width=.24\linewidth]{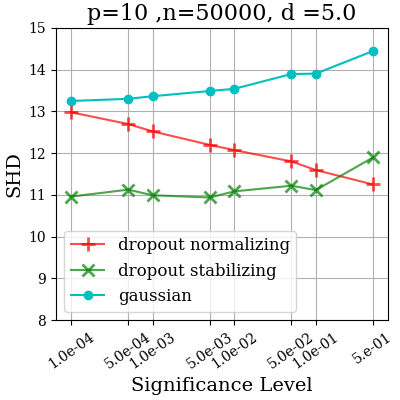}} 
	%\vspace{0.7cm}
	\caption{SHD for evaluating the estimated skeleton of the true DAG using dropout stabilizing transform, dropout normalizing transform, and Gaussian CI test in simulations with $p=10$ and $n \in \{1000, 2000, 10000, 50000\}$ and $d \in \{3, 5\}$.}
	%\vspace{-0.3cm}
\end{figure}

\begin{figure}[H]
	\centering
	\vspace{-0.3cm}
	\subfigure[]{\includegraphics[width=.24\linewidth]{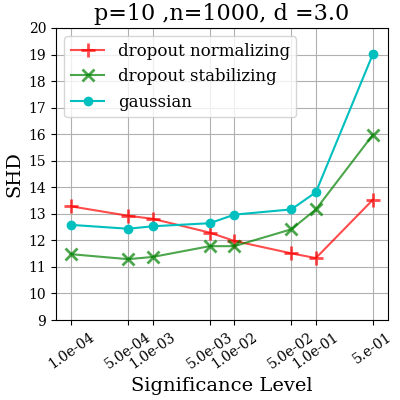}}
	\subfigure[]{\includegraphics[width=.24\linewidth]{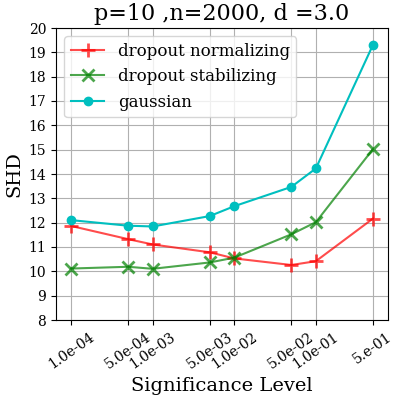}}
	\subfigure[]{\includegraphics[width=.24\linewidth]{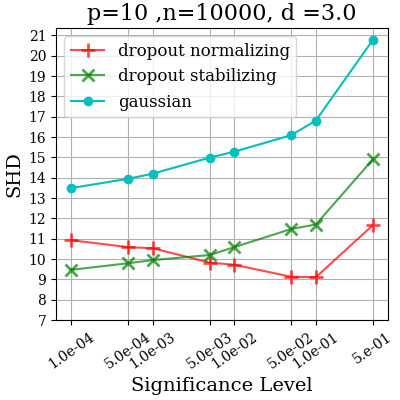}} 
	\subfigure[]{\includegraphics[width=.24\linewidth]{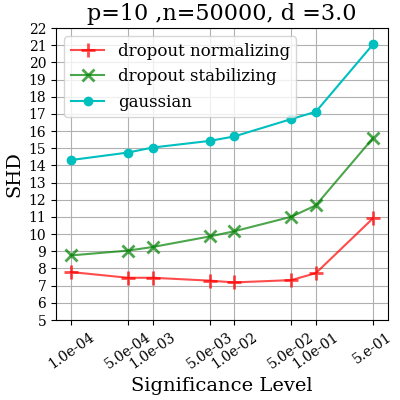}} \\
	%\vspace{0.7cm}
	\subfigure[]{\includegraphics[width=.24\linewidth]{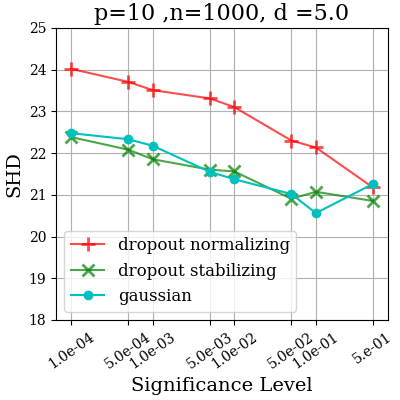}}
	\subfigure[]{\includegraphics[width=.24\linewidth]{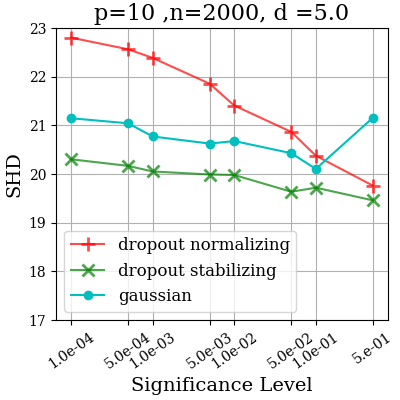}}
	\subfigure[]{\includegraphics[width=.24\linewidth]{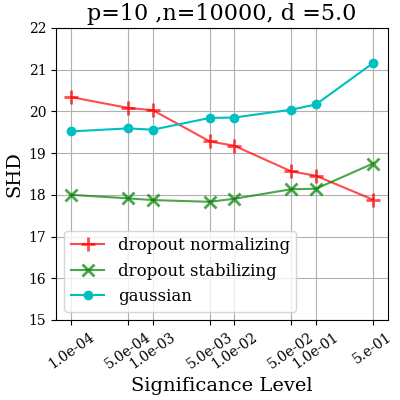}} 
	\subfigure[]{\includegraphics[width=.24\linewidth]{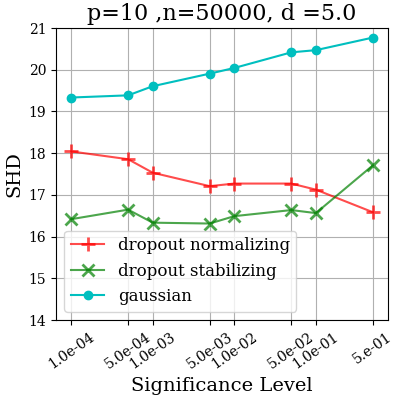}} 
	%\vspace{0.7cm}
	\caption{SHD for evaluating the estimated CPDAG of the true DAG using dropout stabilizing transform, dropout normalizing transform, and Gaussian CI test in simulations with $p=10$ and $n \in \{1000, 2000, 10000, 50000\}$ and $d \in \{3, 5\}$.}
\end{figure}

\begin{figure}[H]
	\centering
	\includegraphics[width=.7\linewidth]{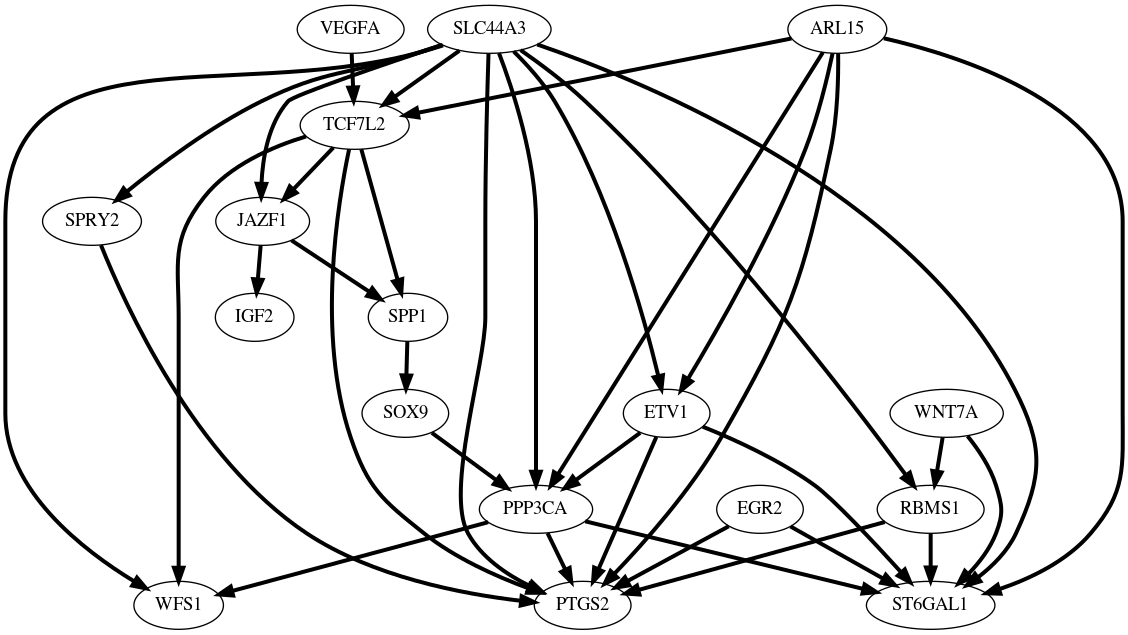}
	\caption{Gene regulatory network inferred from the pancreas data set collected with inDrop. Dropout stabilizing transform was used to learn the causal edges between latent error-free genes.}
\end{figure}

\end{appendices}

\end{document}